\documentclass{article}

\usepackage{microtype}
\usepackage{graphicx}
\usepackage{subfigure} 
\usepackage{booktabs} 

\usepackage{hyperref}

\newcommand{\State}{\STATE}

\usepackage[T1]{fontenc}
\usepackage[utf8]{inputenc}
\usepackage[english]{babel}
\usepackage{amsthm}
\usepackage{amsmath}
\usepackage{amssymb}
\usepackage{graphicx}
\usepackage{dsfont}
\usepackage{array}
\usepackage{multirow}
\usepackage{multicol}
\usepackage{subcaption}
\usepackage{color}
\usepackage{float}
\usepackage{enumerate}
\usepackage{lipsum}
\usepackage{booktabs}
\usepackage[capitalize,noabbrev]{cleveref}
\usepackage{thmtools}
\usepackage{thm-restate}
\usepackage{mathtools}
\usepackage{tablefootnote}
\usepackage{mdframed}

\theoremstyle{plain}

\theoremstyle{plain}

\theoremstyle{plain}
\newtheorem{lem}{\protect\lemmaname}
\theoremstyle{plain}
\newtheorem{thm}{\protect\theoremname}
\theoremstyle{plain}
\newtheorem{cor}{\protect\corollaryname}  
\theoremstyle{definition}

\theoremstyle{definition}

\theoremstyle{definition}
\newtheorem{rem}{\protect\remarkname}

\makeatother

\providecommand{\claimname}{Claim}
\providecommand{\lemmaname}{Lemma}
\providecommand{\propositionname}{Proposition}
\providecommand{\theoremname}{Theorem}
\providecommand{\corollaryname}{Corollary} 
\providecommand{\definitionname}{Definition}
\providecommand{\assumptionname}{Assumption}
\providecommand{\remarkname}{Remark}

\newcommand{\comment}[1]{}

\newcommand{\muhat}{{\widehat{\mu}}}

\newcommand{\Zhat}{{\widehat{Z}}}

\newcommand{\Otil}{{\widetilde{O}}}
\newcommand{\Thetatil}{{\widetilde{\Theta}}}
\newcommand{\Omegatil}{{\widetilde{\Omega}}}

\newcommand{\RR}{{\mathbb{R}}}
\newcommand{\EE}{{\mathbb{E}}}

\newcommand{\PP}{{\mathbb{P}}}

\newcommand{\Ac}{\mathcal{A}}

\newcommand{\Dc}{\mathcal{D}}

\newcommand{\Xv}{\mathbf{X}}

\newcommand{\yv}{\mathbf{y}}

\newif\ifroot
\roottrue

\newcommand{\eps}{\varepsilon}
\renewcommand{\epsilon}{\varepsilon}
\newcommand{\p}{\mathbb{P}}
\newcommand{\e}{\mathbb{E}}
\renewcommand{\Pr}{\mathbb{P}}

\newcommand{\ebar}{\overline{e}}
\newcommand{\etil}{\widetilde{e}}
\newcommand{\fbar}{\overline{f}}
\newcommand{\gbar}{\overline{g}}
\renewcommand{\hbar}{\overline{h}}
\newcommand{\ftil}{\widetilde{f}}
\newcommand{\gtil}{\widetilde{g}}
\newcommand{\htil}{\widetilde{h}}

\let\oldmu\mu
\renewcommand{\mu}{u}
\renewcommand{\muhat}{v}

\allowdisplaybreaks

\usepackage[accepted]{icml2026}

\usepackage{amsmath}
\usepackage{amssymb}
\usepackage{mathtools}
\usepackage{amsthm}

\usepackage[capitalize,noabbrev]{cleveref}

\renewcommand{\cite}{\citep}

\theoremstyle{plain}

\theoremstyle{definition}

\theoremstyle{remark}

\usepackage[textsize=tiny]{todonotes}

\icmltitlerunning{Envy-Free Allocation of Indivisible Goods via Noisy Queries}

\begin{document}

\twocolumn[
\icmltitle{Envy-Free Allocation of Indivisible Goods via Noisy Queries}



\icmlsetsymbol{equal}{*}

\begin{icmlauthorlist}
\icmlauthor{Zihan Li}{Meta}
\icmlauthor{Yan Hao Ling}{NTU}
\icmlauthor{Jonathan Scarlett}{NUS}
\icmlauthor{Warut Suksompong}{NUS}
\end{icmlauthorlist}

\icmlaffiliation{Meta}{Meta Singapore (work done prior to joining Meta)}
\icmlaffiliation{NUS}{National University of Singapore, Singapore}
\icmlaffiliation{NTU}{Nanyang Technological University, Singapore}

\icmlcorrespondingauthor{Zihan Li}{zihan@meta.com}
\icmlcorrespondingauthor{Yan Hao Ling}{yanhao.ling@ntu.edu.sg}
\icmlcorrespondingauthor{Jonathan Scarlett}{scarlett@comp.nus.edu.sg}
\icmlcorrespondingauthor{Warut Suksompong}{warut@comp.nus.edu.sg}

\icmlkeywords{Machine Learning, ICML}

\vskip 0.3in
]



\printAffiliationsAndNotice{}  

\begin{abstract}
    We introduce a problem of fairly allocating indivisible goods (items) in which the agents' valuations cannot be observed directly, but instead can only be accessed via noisy queries.  In the two-agent setting with Gaussian noise and bounded valuations, we derive upper and lower bounds on the required number of queries for finding an envy-free allocation in terms of the number of items, $m$, and the negative-envy of the optimal allocation, $\Delta$.  In particular, when $\Delta$ is not too small (namely, $\Delta \gg m^{1/4}$), we establish that the optimal number of queries scales as $\frac{\sqrt m }{(\Delta / m)^2} = \frac{m^{2.5}}{\Delta^2}$ up to logarithmic factors.
    Our upper bound is based on non-adaptive queries and a simple thresholding-based allocation algorithm that runs in polynomial time, while our lower bound holds even under adaptive queries and arbitrary computation time.
\end{abstract}


\section{Introduction}

The fair allocation of indivisible goods (henceforth referred to as ``items'') is a fundamental problem at the intersection of computer science and economics with diverse applications, and has been studied extensively in recent years.  
This topic has been investigated under a wide variety of fairness criteria, each having its own benefits and implications; for some recent surveys, see \cite{moulin2019fair,Aziz2020,walsh2020fair,suksompong2021constraints,Amanatidis2023}. 
Among these fairness notions, \emph{envy-freeness} (and its variations) is among the most fundamental and widely-studied.  
In an envy-free allocation, each agent values the set of items they receive at least as much as they value the items given to any other single agent.  
Since envy-freeness can be impossible to satisfy (e.g., when there is only one item) or computationally hard to achieve, it is common to study relaxations such as \emph{envy-freeness up to one item} (EF1). 

A prevailing assumption in the literature is that the valuations of the agents are known exactly, or can be queried to learn the exact value.  In this paper, we are interested in scenarios where such queries may be \emph{noisy}, motivated by scenarios such as the following:
\begin{itemize}
    \item[(i)] The ``utilities'' are not measurable or human-chosen, but instead need to be estimated via simulations that are stochastic in nature (e.g., estimating how much a particular resource will help each team in a company’s internal division problem).
    \item[(ii)] Each so-called ``agent'' is actually a group of people, and the ``utility'' is considered to be an average over the group members. To query an item, we may take a random group member and ask for their valuation.\footnote{This formulation assumes that fairness is only required on an averaged group level, not an individual level.}
    \item[(iii)] More generally, noise may arise in settings where agents make mistakes in their evaluations, the evaluations fluctuate over time, and so on.
\end{itemize}
It is likely highly challenging to devise a general noise model capturing all such scenarios, so as a natural starting point, we focus on the fundamental model of additive Gaussian noise; see Remark \ref{rem:noise} in Section~\ref{sec:setup} for a discussion on our noise model's justification and limitations. 
This model of noisy querying matches that of the extensively studied multi-armed bandit problem \cite{lattimore2020bandit}, and also falls under the general topic of algorithms with noisy queries (e.g., see \cite{Feige1994,Addanki2021,Zhu2023}), though the latter typically involves discrete noise models (e.g., binary-output queries that get flipped with some probability). 

As we will see, the presence of noise leads to an inherently distinct problem that is technically complex, even in our restricted setting of two agents, additive utilities,\footnote{The two-agent setting is fundamental in fair division, with many prominent works focusing on this setting.
Moreover, several fair division applications such as divorce settlement and international disputes involve two agents.
See \cite[Sec.~1.1.1]{Plaut2020-Communication} for some discussion on the importance of this setting.
The assumption of additive utilities is also very common in the literature.} and Gaussian noise.  
We show that the number of queries required for finding an envy-free allocation depends crucially on the ``optimal negative envy'' $\Delta > 0$ (or a lower bound thereof), and grows to $\infty$ as $\Delta \to 0$.  Our main results reveal that with $m$ items having valuations in $[0,1]$ and a constant noise variance, the optimal query complexity is\footnote{We use $\Otil(\cdot)$, $\Omegatil(\cdot)$, and $\Thetatil(\cdot)$ for asymptotic notation that hides logarithmic factors.} $\Otil\big( \frac{m^{2.5}}{\Delta^2} \big)$ in broad regimes of~$\Delta$
as $m \to \infty$.  Before summarizing our contributions in more detail, we discuss related prior work.

\subsection{Related Work} \label{sec:related}

To the best of our knowledge, our problem setup has not been considered before, and there are no existing results that are directly comparable to ours.  Nevertheless, we proceed to give an overview of some of the most related existing literature.   We subsequently use the terminology \emph{utility} and \emph{valuation} interchangeably to mean how much an agent values a given set of items.
For the purpose of our discussion, it suffices to note the following fairness notions:
\begin{itemize}
    \item Envy-free (EF): Each agent prefers their own bundle (i.e., set of received items) to that of any other agent;
    \item Envy-free up to one item (EF1): Each agent prefers their own bundle to that of any other agent after removing \emph{a specific} item from the latter;
    \item Envy-free up to any item (EFX): Each agent prefers their own bundle to that of any other agent after removing \emph{any single} item from the latter.
\end{itemize}
Surveys of these notions, as well as other fairness notions for allocating indivisible goods, can be found, e.g., in \cite{Amanatidis2023}.

{\bf Envy-free allocation via queries.}  In most of the existing literature on envy-freeness (and more generally, fair division), it is assumed that the entire set of valuations is known in advance.  However, a recent line of works has sought to understand various notions of \emph{query complexity}. 
In particular, under additive valuations, \emph{noise-free queries on bundles of items} were studied in \cite{Oh2021,Bu2024}, with \cite{Oh2021} considering value-based queries and \cite{Bu2024} considering comparison-based queries.  
In both cases, EF1 was shown to be achievable using a logarithmic number of queries.  
For the stronger notion of EFX, the query complexity increases to linear under additive valuations \cite{Oh2021} and exponential under more general (non-additive) valuations \cite{Plaut2020}.  

The preceding works are all fundamentally different from ours due to the queries being \emph{noiseless} and being applied to \emph{bundles of items} rather than individual items.  Somewhat closer to our work is a recent study of the round-robin algorithm under additive valuations with potentially noisy queries \cite{Li2025}.  These authors consider comparison-based queries to pairs of items, as well as value-based queries to individual items, giving various upper and lower bounds with at least an $nm$ dependence on the number of agents $n$ and items $m$.  The major differences compared to our work are outlined as follows:
\begin{itemize}
    \item They focus on \emph{exactly implementing the round-robin algorithm}, which is one specific algorithm for attaining the EF1 guarantee.  In contrast, we are interested in the conditions under which there exists an algorithm (not necessarily related to round-robin) that can achieve EF.
    \item Their query models are completely distinct from ours.  In the comparison-based model, they assume that a single bit is received indicating which of two items is preferred by a given agent, possibly flipped by noise with some constant probability.  In the value-based model, they crucially assume that the \emph{exact valuation is observed with constant probability}.  Thus, in both cases, repeated queries and a majority vote suffice to get the \emph{exact correct answer}.  In contrast, we study an \emph{additive Gaussian noise} model, in which no matter how many queries we perform and collate, we will never know the exact valuations.  This distinction is fundamental and turns out to be crucial.
    \item To make our (more challenging) problem feasible, we introduce a ``gap'' parameter indicating how negative the (unknown) optimal envy is, i.e., how far the corresponding allocation is from violating EF.
    Under the assumption of such a gap, we construct algorithms that \emph{do not need to query every item}.  In contrast, the lower bounds in \cite{Li2025} reveal that, as one would expect, implementing the round-robin algorithm exactly requires querying every item at least once.
\end{itemize}
We note that the above-mentioned comparison-based query model is an instance of \emph{dueling bandit feedback} \cite{Sui2018}, and the additive Gaussian noise model is an instance of \emph{regular multi-armed bandit feedback} \cite[Ch.~4]{lattimore2020bandit}.  We proceed to survey some other (less closely related) works that adopt a multi-armed bandit viewpoint.

{\bf Other bandit-based settings involving envy-freeness.} In \cite{procaccia2024honor}, a setting was considered in which items arrive sequentially and are allocated to agents in an online manner.  Their goal involves achieving envy-freeness (or proportionality) \emph{in expectation}, but the main objective itself is a cumulative social welfare measure.  Note that in our setup, envy-freeness in expectation could be obtained trivially (with no queries) by simply assigning items uniformly at random.  Other differences in their work include only having finitely many ``types'' of item, and having to allocate items immediately as they arrive (i.e., the online setting).  Other related works in the online setting, typically seeking objectives based on Nash welfare, include \cite{sinha2023no,bhattacharya2024active,yamada2024learning,schiffer2025improved,verma2024keep}.

In \cite{peters2022robust}, a robust rent division problem is studied, with the goal of obtaining envy-free allocations robust to misspecified values.  A major difference from our work is that they specifically focus on a ``room allocation'' problem in which every agent gets one room (i.e., a ``matching'' is formed).  In contrast, we focus on a two-agent problem with arbitrarily many items.  At a technical level, their query complexity indicates taking $O\big(\frac{1}{\epsilon^2}\big)$ samples of each (agent, room) pair to attain a probability of envy-freeness within $\epsilon$ of optimal, whereas in our problem we can often have fewer queries than items.

{\bf Repeated allocation problems with bandit feedback.} Another line of work on item allocation with bandit feedback has considered scenarios where every round consists of proposing an \emph{entire allocation}, rather than querying just one item and/or agent.  The goal is typically to optimize some long-term measure of fairness.  For instance, see \cite{talebi2018learning} for a study of proportional fairness in allocating tasks to servers, \cite{lim2024stochastic} for an egalitarian matching-based assignment problem, and \cite{harada2025bandit} for a related problem of maximizing the minimum utility aggregated across a long sequence of allocations.

{\bf Other fairness notions in bandit algorithms.}  For other issues of fairness in bandit algorithms (not involving item allocation), we refer the interested reader to \cite{li2019combinatorial,hossain2021fair,patil2021achieving,banihashem2023bandit,barman2023fairness,sawarni2023nash,russo2024fair} and the references therein.  To name just one example, in \cite{hossain2021fair} each arm is valued differently by different agents, and the goal is to identify a distribution over the arms that maximizes the Nash welfare.  Since item allocation is central to our work but is not considered in these works, we do not delve into the details.

\subsection{Our Contributions} 

We formulate the problem of finding envy-free allocations from noisy valuation queries, focusing on the two-agent setting with Gaussian noise and valuations in $[0,1]$.  With $m$ denoting the number of items and $\Delta$ denoting the (unknown) optimal negative envy, our results are outlined as follows:
\begin{itemize}
    \item In Section \ref{sec:alloc_upper}, we start with a ``naive'' analysis based on item-by-item confidence intervals, and show that a conceptually simple algorithm succeeds with $q = O\big( \frac{m^{3}}{\Delta^2} \big)$ queries.  This is not one of our main contributions, but rather serves to highlight the suboptimality of a naive approach compared to our main algorithm.
    \item In Section \ref{sec:main_upper}, we provide our main algorithmic upper bound of $q = O\big( \frac{m^{2.5}}{\Delta^2} \big)$ queries whenever $\Delta \gg m^{1/4} \log^2 m$ (a condition that we will discuss in Remark \ref{rem:Delta_assump} therein).  While we maintain simplicity in the querying strategy (a uniform allocation, with suitable item subsampling if $q < m$) and the allocation rule (item-by-item thresholding), the mathematical analysis is significantly more challenging.
    Components of the analysis include bounding the assignment probabilities, carefully choosing the allocation strategy's threshold to ``balance'' the two kinds of envy, quantifying the difference between the true and estimated valuations, and (in the case that $q < m$) bounding the effect of the above-mentioned subsampling.  See Section \ref{sec:main_upper} for a more detailed overview.
    \item In Section \ref{sec:alloc_lower}, we present our algorithm-independent lower bound of $\widetilde{\Omega}\big( \frac{m^{2.5}}{\Delta^2} \big)$.  This is broadly based on tools from multiple hypothesis testing, but with several unique aspects to capture the fact that attaining envy-freeness does not necessarily require estimating all (or even most) items accurately.  One central idea is to have items that are \emph{very slightly} favored by one agent and disfavored by the other, which necessitates giving sufficiently many of these items to the agent who prefers them.  However, this idea alone fails to give a tight result, and we address this by also including \emph{some items more strongly favored/disfavored by both agents}.
    It turns out that this creates ``random fluctuations'' in the envy that will need to be outweighed by the allocation of the other items.  See Section \ref{sec:alloc_lower} for a more detailed overview.
\end{itemize}
\vspace{-1mm}
Collectively, these results establish that 
\vspace{-2mm}
\begin{quote}
\emph{the correct scaling on the optimal number of queries is $\widetilde{\Theta}\big( \frac{m^{2.5}}{\Delta^2} \big)$},
\end{quote}
\vspace{-2mm}
at least when $\Delta \gg m^{1/4} \log^2 m$.  Moreover, the upper bound is based on a non-adaptive querying strategy and a polynomial-time allocation strategy, whereas the lower bound holds even under adaptive queries and arbitrary computational complexity. 
We note that the seemingly unconventional $\frac{m^{2.5}}{\Delta^2}$ scaling can be more naturally viewed as $\frac{\sqrt{m}}{(\Delta/m)^2}$ with $\Delta/m$ representing a ``normalized gap''; indeed, quadratic dependencies on gaps are ubiquitous in pure exploration problems for multi-armed bandits \cite{lattimore2020bandit}.  Some intuition on the $\sqrt m$ numerator will be given in Section \ref{sec:main_upper}.

While the above discussion pertains to a constant noise level, we will also cover the case of general noise levels in Section~\ref{sec:general_sigma_disc}.
Our results can be translated to the fairness notion of proportionality; we discuss this in Section~\ref{sec:conclusion}.


\section{Problem Setup} \label{sec:setup}

We consider fairly allocating $m$ indivisible items between two agents $a$ and $b$, where the agents have their respective utility $\mu^a_i$ and $\mu^b_i$ for each item $i$, and $\mu^a_i,\mu^b_i\in[0,1]$.  Here the restriction to $[0,1]$ is for convenience, and could be obtained from any interval of the form $[0,\mu_{\max}]$ by rescaling.  
Our problem setup could naturally be extended to more than two agents, but the two-agent setting is a fundamental starting point that already comes with considerable technical challenges.  See Section \ref{sec:conclusion} for some further discussion on the $n$-agent scenario.

We consider \emph{additive utilities}, meaning that for $\nu \in \{a,b\}$, the overall value that agent $\nu$ assigns to a set of items $S$ is $\sum_{i \in S} \mu^{\nu}_i$.  We define an \emph{allocation} of the $m$ items as any partition $\Ac=(\Ac_a,\Ac_b)$ with $\Ac_a\cup\Ac_b=[m]$ and $\Ac_a\cap\Ac_b=\emptyset$, which means $\Ac_a$ is allocated to Agent $a$ and $\Ac_b$ to Agent $b$.  
For any such allocation, the \emph{envy} from Agent~$a$ to Agent~$b$ is defined as
\begin{align*}
    \mathrm{Envy}_{a\to b}(\Ac)=\sum_{i\in\Ac_b}\mu^a_i-\sum_{i\in\Ac_a}\mu^a_i,
\end{align*}
and Agent $a$ envies Agent $b$ if $\mathrm{Envy}_{a\to b}(\Ac)>0$ (and similarly with the roles of Agents $a$ and $b$ reversed.)  
The overall envy of the allocation $\Ac$ is defined as
\begin{align*}
    \mathrm{Envy}(\Ac)=\max\{\mathrm{Envy}_{a\to b}(\Ac), \mathrm{Envy}_{b\to a}(\Ac)\}.
\end{align*}
When $\mathrm{Envy}(\Ac)\le 0$, neither agent envies the other, and we call $\Ac$ an \emph{envy-free allocation}. 

While envy-freeness, as well as variants such as ``envy-freeness up to one item'', has been studied extensively in settings with perfectly known valuations, the presence of noise turns out to make such a goal highly challenging in general, unless a very large number of queries is taken.  (Our lower bounds will formalize this claim.)  Accordingly, in order to avoid overly pessimistic ``worst-case'' thinking, we assume that the (unknown) optimal allocation has \emph{strictly negative} envy with some gap $\Delta > 0$:
\begin{equation}
    \mathrm{OptEnvy}=\min_{\Ac}\mathrm{Envy}(\Ac)\le -\Delta.
\end{equation}
Since finding the precise optimal allocation may be prohibitive, we set the more modest goal of \emph{finding any envy-free allocation}, hence why we refer to $\Delta$ as a ``gap''.  That is, our goal is to obtain an allocation $\widehat{\Ac}$ satisfying $\mathrm{Envy}(\widehat{\Ac})\le 0$ based on the noisy queries.

The algorithm performs some number of queries $q$ indexed by $1,\dotsc,q$.  
Specifically, at each iteration, we allow the algorithm to query an item $i$ and obtain a pair of noisy observations.\footnote{An alternative setup would be that in which the algorithm only queries a single (agent, item) pair.  The query complexities of the two settings trivially match to within a factor of $2$.}  When a given item is sampled for the $t$\nobreakdash-th time, we denote its outcome by $y_t=(y^a_{i,t}, y^b_{i,t})$. We consider an additive Gaussian noise model, in which
\begin{equation}
    y^\nu_{i,t} \sim N(\mu^\nu_i,\sigma^2) \text{~~~ for ~~~} \nu\in\{a,b\},
\end{equation}
where $\sigma^2 > 0$ is the noise variance.  We assume that all query outcomes are independent of one another (including independence of $y^a_{i,t}$ from $y^b_{i,t}$).  We seek to minimize the number of noisy queries required to achieve an envy-free allocation with high probability; the total number of queries is denoted by $q$.
We use $\log$ to denote the natural logarithm.

\begin{rem} \label{rem:noise}
    {\em (Discussion on noise model)} The presence of noise is motivated by scenarios in which we have imperfect information regarding the user valuations, e.g., due to users' perceived valuations being influenced by external factors, or due to precise valuations being too costly to obtain.  Naturally, the ideal mathematical model may vary vastly depending on the precise application.  Our particular noise model has two main notable properties that deserve discussion: (i) independence between queries, and (ii) being additive Gaussian.
    
    Regarding property (i), the independence assumption has a very strong precedent from a theoretical perspective, as it has been adopted in the overwhelming majority of works in noisy allocation, multi-armed bandits, and so on (surveyed in Section \ref{sec:related}).  
    However, it is important to keep in mind that the assumption is not necessarily true in practice; for example, if we query the same user multiple times, their next result reported may very well be influenced by their past reportings.  On the other hand, allowing arbitrary dependencies may considerably complicate the analysis, as well as diminish the benefit of repeating the same query multiple times.  Notably, some of our results for ``low query budget'' regimes will be based on only querying any given item at most once (e.g., see Appendix \ref{sec:small_q}), in which case there are no repeated queries.
    
    Regarding property (ii), additive Gaussian noise is undoubtedly one of the most fundamental and ubiquitous noise models in diverse statistical problems such as multi-armed bandits, statistical estimation, error-correcting codes for communication, and so on.  The analysis in our main upper bound does use the specific Gaussianity property, with generalizations such as sub-Gaussian being conceivable but non-trivial.  Since the analysis is already highly challenging, we leave such generalizations to future work.  Note also that for the \emph{lower bound}, adopting a specific widely-used noise model (rather than the hardest distribution within a more general class) turns into a strength rather than a limitation.
\end{rem}


\section{An Initial Suboptimal Upper Bound}
\label{sec:alloc_upper}
\begin{algorithm}[!t]
\caption{Repeated Sampling for Envy-Free Allocation}
\label{algo:rep}
\begin{algorithmic}[1]
\State {\bf Input:} Number of queries $q$
\State Query each item $\tau = q/m$ times and observe $\{y^a_{i,t}, y^b_{i,t}\}_{t=1}^{\tau}$.
\State Compute valuation estimates $\muhat^\nu_i = \frac{1}{\tau}\sum_{t=1}^{\tau} y^\nu_{i,t}$ for each $\nu\in\{a,b\}$ and $i\in[m]$.
\State Loop over all possible allocations $\Ac=(\Ac_a,\Ac_b)$ and return the one with the highest value of $\min\{\muhat^a_{\Ac_a}-\muhat^a_{\Ac_b}, \muhat^b_{\Ac_b}-\muhat^b_{\Ac_a}\}$, where $\muhat^\nu_S=\sum_{i\in S}\muhat^\nu_i$ for $\nu\in\{a,b\}$ and $S\in\{\Ac_a,\Ac_b\}$.
\end{algorithmic}
\end{algorithm}

We first consider an algorithm (presented in \cref{algo:rep}) based on repeated sampling and straightforward confidence intervals on the item utilities.  With $q$ queries and $m$ items, we query each item $\tau = q/m$ times, and average the observations $(y^a_{i,t}, y^b_{i,t})$ across $t=1,\dotsc,\tau$ to form estimated utilities $(\muhat^a_i, \muhat^b_i)$. 
Then, we compute the estimated envy for each possible allocation and return the one with the lowest estimated envy (i.e., highest estimated negative envy).

The following theorem states a sufficient number of queries to ensure the success of this algorithm. (Recall that we use $\Otil(\cdot)$, $\Omegatil(\cdot)$, and $\Thetatil(\cdot)$ for asymptotic notation that hides logarithmic factors.)

\begin{restatable}{thm}{allocupper}
\label{thm:alloc_upper}
For any $\delta\in(0,1)$, \cref{algo:rep} outputs an envy-free allocation with probability at least $1-\delta$ when the number of queries is set to 
\begin{equation}
    q = m \lceil 32 \sigma^2\log(4m/\delta)\cdot m^2/\Delta^2 \rceil = \Otil\Big( \frac{m^3}{\Delta^2} \Big).
\end{equation}
\end{restatable}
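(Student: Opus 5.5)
The plan is to show that, on a single high-probability event, every item's estimate is accurate to within $\epsilon := \Delta/(2m)$ for both agents. On that event, the estimated envy of every allocation is within $\Delta/2$ of its true envy (in the right sense), and the maximizer in \cref{algo:rep} is then envy-free.

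\textbf{Step 1 (concentration).} Each $\muhat^\nu_i$ is an average of $\tau$ independent $N(\mu^\nu_i,\sigma^2)$ samples, so $\muhat^\nu_i - \mu^\nu_i \sim N(0,\sigma^2/\tau)$. The Gaussian tail bound gives
\[
\PP\big(|\muhat^\nu_i-\mu^\nu_i|>\epsilon\big)\le 2\exp\big(-\tau\epsilon^2/(2\sigma^2)\big).
\]
With $\tau = \lceil 32\sigma^2\log(4m/\delta) m^2/\Delta^2\rceil$ and $\epsilon=\Delta/(4m)$, we get $\tau\epsilon^2/(2\sigma^2)\ge \log(4m/\delta)$, so each of the $2m$ events fails with probability at most $\delta/(2m)$. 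A union bound then shows that with probability at least $1-\delta$, all $|\muhat^\nu_i-\mu^\nu_i|\le \Delta/(4m)$. The constant $32$ leaves room to take $\epsilon=\Delta/(4m)$, which makes the final slack strict; $\epsilon=\Delta/(2m)$ would also work but only with equality.

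\textbf{Step 2 (uniform envy approximation).} Define the estimated envies $\widehat{\mathrm{Envy}}_{a\to b}(\Ac)=\muhat^a_{\Ac_b}-\muhat^a_{\Ac_a}$, and similarly for $b\to a$. Each is a signed sum over all $m$ items, each item entering with coefficient $\pm1$. Hence on the good event,
\[
|\widehat{\mathrm{Envy}}_{\nu\to\nu'}(\Ac)-\mathrm{Envy}_{\nu\to\nu'}(\Ac)|\le m\epsilon=\Delta/4
\]
simultaneously for all $\Ac$ and both directions. The same bound therefore holds for the maximum over the two directions.

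\textbf{Step 3 (conclusion).} Note that the algorithm's objective $\min\{\muhat^a_{\Ac_a}-\muhat^a_{\Ac_b},\muhat^b_{\Ac_b}-\muhat^b_{\Ac_a}\}$ is exactly $-\widehat{\mathrm{Envy}}(\Ac)$, so the output $\widehat\Ac$ minimizes estimated envy. Let $\Ac^*$ attain $\mathrm{OptEnvy}\le-\Delta$. Then
\[
\mathrm{Envy}(\widehat\Ac)\le \widehat{\mathrm{Envy}}(\widehat\Ac)+\Delta/4\le \widehat{\mathrm{Envy}}(\Ac^*)+\Delta/4\le \mathrm{Envy}(\Ac^*)+\Delta/2\le -\Delta/2<0.
\]
So $\widehat\Ac$ is envy-free. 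The total query count is $q=m\tau$, which matches the statement and is $\Otil(m^3/\Delta^2)$.

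The main obstacle is only the bookkeeping of constants. One point needs care: the uniform-in-$\Ac$ guarantee must come from the per-item event, not from a union bound over the $2^m$ allocations, since the latter would cost an extra factor of $m$ in the logarithm. Since the per-item error bound holds deterministically for all allocations at once, no such union bound is needed. There is no real technical difficulty beyond this.
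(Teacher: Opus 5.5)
Your proposal is correct and follows essentially the same route as the paper. Both use a per-item Gaussian tail bound with a union bound over the $2m$ (agent, item) pairs, then a deterministic triangle-inequality bound that holds for all allocations at once, and finally compare the output with $\Ac^*$ using the algorithm's maximization. The only difference is bookkeeping. You bound each envy error directly by $m\epsilon$, since each item enters once with sign $\pm 1$. The paper instead bounds $\epsilon_{\Ac_a}+\epsilon_{\Ac_b}+\epsilon_{\Ac^*_a}+\epsilon_{\Ac^*_b}$ crudely by $4\epsilon_{\max}$, so you finish with slack $-\Delta/2$ where the paper lands exactly at $0$. Your opening mention of $\Delta/(2m)$, versus the $\Delta/(4m)$ you actually use, is a harmless inconsistency.
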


The proof is given in Appendix~\ref{sec:alloc_proof_upper}, and is based on a simple analysis that forms a confidence interval on the valuation of each item, and then computes a confidence width for each \emph{bundle} that equals the sum of individual confidence widths.  Based on these confidence intervals on bundles, we can conclude that the decision rule (output the bundle with the highest estimated negative envy) will be envy-free with high probability when enough queries are taken.

While Theorem \ref{thm:alloc_upper} is a useful starting point, it has two major limitations:
\begin{itemize}
    \item The $m^3$ dependence turns out to be suboptimal; we will see that the correct dependence is $m^{2.5}$.   
    \item The algorithm involves a brute force search over all possible allocations $\Ac$, and thus (at least in its current form) it is not computationally efficient.
\end{itemize}
Regarding the first dot point, a key weakness in the naive approach is the reliance on each individual item's confidence interval when estimating the value of an entire bundle, and implicitly assuming that the errors always accumulate in the worst manner possible (thus multiplying the amount of error by the bundle size).  Note that this weakness concerns the \emph{analysis} rather than the algorithm itself, so it is conceivable that \cref{algo:rep} could satisfy stronger guarantees if analyzed more carefully.

Our main upper bound will overcome these limitations via a more carefully-designed algorithm with a more careful (and significantly more challenging) mathematical analysis.

\section{An Improved Upper Bound} \label{sec:main_upper}

We now state our main upper bound showing that the $\frac{m^3}{\Delta^2}$ dependence can be reduced to $\frac{m^{2.5}}{\Delta^2}$ (at least when $\Delta$ is not too small), and achieving this with polynomial running time.  We state the result for a constant noise level here, but also provide a generalization to the regimes $\sigma^2 = o(1)$ and $\sigma^2 = \omega(1)$ in the proof (see Section \ref{sec:general_sigma_disc} for a detailed discussion and comparison).

\begin{thm} \label{thm:upper_main}
    For any constant noise level $\sigma^2 > 0$, when $\Delta \ge m^{1/4} \log^2 m$ and $\Delta \le C m$ for sufficiently small $C$, there exists a polynomial-time algorithm that outputs an envy-free allocation with probability $1-o(1)$ as $m \to \infty$ while using a number of queries at most $q \le \Otil\big( \frac{m^{2.5}}{\Delta^2} \big)$.  Moreover, these queries can be taken non-adaptively.
\end{thm}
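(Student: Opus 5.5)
We would analyze a simple non-adaptive scheme built from a weighted per-item threshold rule. Let $\tau = \max\{1, q/m\}$. If $q \ge m$, we query every item $\tau$ times. If $q<m$, we query a uniformly random subset of $q$ items once each. We form the empirical means $\muhat^\nu_i$ as in \cref{algo:rep}.

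The allocation rule comes from the linear structure of envy. Writing $x_i=\mathds{1}\{i\in\Ac_a\}$ and $U^\nu=\sum_i \mu^\nu_i$, the two envy-freeness constraints are
\[
2\sum_i x_i\mu^a_i - U^a \ge 0
\quad\text{and}\quad
U^b - 2\sum_i x_i \mu^b_i \ge 0 .
\]
Lagrangian duality for $\max_x \min\{\cdot,\cdot\}$ over the fractional relaxation shows that an optimal fractional allocation has the form $x_i=\mathds{1}\{\alpha \mu^a_i > (1-\alpha)\mu^b_i\}$ for some $\alpha\in[0,1]$, with at most one fractional item. We therefore allocate item $i$ to $a$ iff $\alpha\muhat^a_i>(1-\alpha)\muhat^b_i$. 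The weight $\alpha$ is chosen in polynomial time, by sorting the items by the relevant ratio, so that the two estimated envies are balanced. When $q<m$, the unqueried items are treated as a random sample and allocated as an $\alpha$-proportional split; this costs only an extra $\sqrt{m}$-scale fluctuation, which we would control by a sampling-without-replacement concentration bound. Throughout, the target parameter is $s^2:=\sigma^2/\tau \asymp m/q$, and the goal is to show success once $s \lesssim \Delta/(m^{3/4}\,\mathrm{polylog}\, m)$, which is exactly $q=\Otil(m^{2.5}/\Delta^2)$.

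The analysis proceeds in three steps, for each fixed $\alpha$, with a union bound over a fine grid of $\alpha$ plus a monotonicity/Lipschitz argument to handle the data-dependent $\alpha$.

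\textbf{Assignment probabilities.} Let $g_i(\alpha)=\alpha\mu^a_i-(1-\alpha)\mu^b_i$. Using Gaussianity, item $i$ goes to $a$ with probability
\[
p_i(\alpha)=\Phi\bigl(g_i/(s\sqrt{\alpha^2+(1-\alpha)^2})\bigr),
\]
so the true envies of the output have expectations that are explicit smooth functions of $\alpha$. Their fluctuations are sums of independent bounded terms, hence $O(\sqrt{m\log m})$ by Hoeffding.

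\textbf{True versus estimated envy.} The estimated envy differs from the true envy by $\sum_i x_i\xi_i$-type terms, where $\xi_i$ is the estimation noise and $x_i$ is correlated with $\xi_i$. We split each term into its mean, a computable Gaussian truncated-mean bias, and a centered part. The centered part is $O(s\sqrt{m\log m})$, which is $o(\Delta)$ in our regime. The bias terms enter both envies in the same (favorable-looking) direction. We would show they are absorbed by choosing $\alpha$ to balance the estimated envies, because the balancing condition also equalizes the bias contributions to first order.

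\textbf{Comparison with the optimum (main obstacle).} We must show that the balanced smoothed allocation $p(\alpha)$ has both expected envies at most $-\Delta/2$. The naive loss relative to the hard-threshold optimum is $\sum_i |g_i|\Phi(-|g_i|/s)$, which can be as large as $ms = \Delta m^{1/4}$. This is too large, and it is exactly the $m^3$ analysis of \cref{thm:alloc_upper} in disguise.

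We would first try to avoid comparing to the hard threshold at all. The idea is to exhibit a fractional allocation with slack $\Delta$ and then use that the map $\alpha\mapsto(\text{two expected envies of }p(\alpha))$ traces a curve whose dual objective degrades only to second order in the smoothing. Concretely, we would bound the loss by $\sum_i |g_i|\,\bigl|p_i-\mathds{1}\{g_i>0\}\bigr|$, restricted to the component that is not cancelled by the balancing of the two envies. We would aim to show that this residual is $O(\sqrt m\, s\,\mathrm{polylog}\, m)$, using that items with $|g_i|\gg s$ contribute negligibly and that the remaining near-boundary items enter both envies symmetrically under balancing. The condition $\Delta\ge m^{1/4}\log^2 m$ is what we expect is needed to make the lower-order terms, of size about $s^2 m$ and $\log$-factor fluctuations, negligible relative to $\Delta$. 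The condition $\Delta\le Cm$ ensures $\tau$ and the sampling regime are well defined. This step is where we expect most of the difficulty and where the exact threshold choice will need the most care.
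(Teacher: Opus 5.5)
\textbf{The gap is in your third step.} The target that both expected envies are at most $-\Delta/2$ is false at the budget $q=\Otil(m^{2.5}/\Delta^2)$. Equivalently, the residual loss relative to the optimum is not $O(\sqrt m\, s\,\mathrm{polylog}\, m)$.

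To see this, take $m$ items with utilities $(\frac12\pm\epsilon,\frac12\mp\epsilon)$ as in the first half of Table~\ref{tab:alloc_hard}, with $\epsilon=\Theta(\Delta/m)$ and your $s\approx\Delta/(m^{3/4}\,\mathrm{polylog}\, m)$. Write $x_i=\pm1$ for the assignment, $\theta_i=\pm1$ for the type of item $i$, and $e_a,e_b$ for the true envies.
\begin{itemize}
\item Any allocation satisfies $-\max\{e_a,e_b\}\le-\frac12(e_a+e_b)=\epsilon\sum_i x_i\theta_i$.
\item With per-item noise $s$, each type can be guessed with advantage only $O(\epsilon/s)$.
\item So on a random such instance, the expected negative envy of any rule with this budget is $O(m\epsilon^2/s)\approx\Delta m^{-1/4}\,\mathrm{polylog}\, m$.
\item Your rule with $\alpha\approx\frac12$ attains exactly this order, since $p_i-\frac12=\pm\Theta(\epsilon/s)$.
\end{itemize}
Essentially all of $\Delta$ is therefore lost. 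Your proposed mechanism cannot recover it. The near-boundary loss sits in the weighted sum $\alpha e_a+(1-\alpha)e_b=-\sum_i x_ig_i$. Balancing does not touch this sum; it only moves the difference $(1-\alpha)e_a-\alpha e_b$.

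\textbf{The missing idea is to aim much lower and show the fluctuations are correspondingly small.} The paper proves that the weighted sum has margin $\Omega(\Delta^2/(ms))$. With $w=\sqrt{\alpha^2+(1-\alpha)^2}$ and $z_i=g_i/(sw)$, one has $\EE[\sum_i x_ig_i]=sw\sum_i z_i(2\Phi(z_i)-1)$. The summand is at least $0.48z^2$ for $|z|\le1$ and at least $0.68|z|$ otherwise. Cauchy--Schwarz on the first group, together with weak duality $\sum_i|g_i(\alpha)|\ge\Delta$ (Lemma~\ref{lem:z_large}), finishes the bound.

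All relevant fluctuations must then be $O(\sqrt m\, s\log m)$, the same scale; equating the two scales is what produces $q\approx m^{2.5}\log m/\Delta^2$. Your Hoeffding bound of $\sqrt{m\log m}$ on the individual envies would swamp this margin whenever $\Delta\ll m^{3/4}$. Two structural facts give the smaller scale:
\begin{itemize}
\item[(i)] The weighted sum has variance $\sum_i 4g_i^2p_i(1-p_i)\lesssim ms^2$, because only items with $|g_i|\lesssim s$ are genuinely random. Bernstein then applies.
\item[(ii)] Let $e'_\nu$ be the estimated envies and $\xi^\nu_i=v^\nu_i-u^\nu_i$. The estimation error of the difference is $\sum_i x_i[(1-\alpha)\xi^a_i+\alpha\xi^b_i]$. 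This is \emph{exactly} a centered Gaussian with standard deviation at most $\sqrt m\, s$, because $(1-\alpha)\xi^a_i+\alpha\xi^b_i$ is uncorrelated with, hence independent of, $\alpha v^a_i-(1-\alpha)v^b_i$.
\end{itemize}

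Fact (ii) means the winner's-curse biases of the individual envies cancel exactly. Those biases can be as large as $\Theta(ms)$, far above the margin, and the cancellation happens only in this combination with the rule's own $\alpha$. Balancing the raw estimated envies would leave a bias of order up to $|2\alpha-1|\,ms$.

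The algorithm must therefore place the computable quantity $(1-\alpha)e'_a-\alpha e'_b$ in an asymmetric window of width about $\sqrt m\, s\log m$. This is done by a discrete intermediate-value argument over a grid of $\alpha$. The hypothesis $\Delta\ge m^{1/4}\log^2 m$ gives $q=o(m^2)$, so the window exceeds the $O(\log m)$ jump between consecutive grid points. One then recovers $e_a$ and $e_b$ as linear combinations of the sum and the difference.

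The same lowered target governs the case $q<m$. There the queried items give a margin of about $q\Delta^2/(\sigma m^2)$, which must beat $\sqrt m\,\mathrm{polylog}\, m$. In that regime the unqueried items should be split by a fair coin, so that their mean contribution is exactly zero. An $\alpha$-proportional split instead forces you to extrapolate their unknown total from the sample, at a cost of up to $|2\alpha-1|\,\Theta(m/\sqrt q)\gg\sqrt m$.
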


We proceed to outline the algorithm and analysis, deferring the details to Appendix \ref{sec:pf_upper}.  (We also discuss the assumption $\Delta \ge m^{1/4} \log^2 m$ in Remark \ref{rem:Delta_assump} below.)  We will split up the proof according to whether or not there are enough queries to sample every item once; by the scaling $q = \Otil\big( \frac{m^{2.5}}{\Delta^2} \big)$, this amounts to having $\Delta \lesssim m^{3/4}$ vs.~$\Delta \gtrsim m^{3/4}$, to within logarithmic factors.  We refer to these as the regimes of ``smaller $\Delta$'' and ``larger $\Delta$'' respectively.

In the smaller $\Delta$ regime, we adopt the same initial steps as Algorithm \ref{algo:rep}: 
Sample every item $\tau = \frac{q}{m}$ times, and compute its valuation estimates  $v^\nu_i = \frac{1}{\tau}\sum_{t=1}^{\tau} y^\nu_{i,t}$ for each $\nu\in\{a,b\}$.  However, we do not use these to estimate the total valuations of bundles, but instead, we allocate via simple thresholding on an item-by-item basis:
\begin{equation}
    \text{Assign item $i$ to Agent }
    \begin{cases}
        a & \text{if } cv_i^a - v_i^b>0\\
        b & \text{otherwise}, 
    \end{cases} \label{eq:strategy_main_body}
\end{equation}
for some parameter $c > 0$.  Naturally, this means that items with higher $v_i^{\nu}$ are favored for agent $\nu \in \{a,b\}$, and the parameter $c >0$ controls how much we prioritize one agent vs.~the other.  One might be tempted to set $c = 1$ to treat both agents equally, but this is too naive unless the valuations have some suitable ``symmetry''; for instance, it may fail if one agent has uniformly higher valuations than the other agent for all items.

The analysis itself is rather technical, so we only outline some of the main ideas:
\begin{itemize}
    \item We observe that $cv_i^a - v_i^b$ is Gaussian due to the Gaussian noise, and using this, we can precisely characterize the assignment probabilities of a given item.  By doing so and summing over the items, we show that with high probability, there is a certain amount of ``total negative envy with respect to the true valuations'' (depending on~$c$) summed over both $a \to b$ and $b \to a$ with suitable weighting.
    \item We establish that this negative envy can be made ``balanced'' (for suitably-chosen $c$) with respect to the estimated valuations, in the sense of making the $a \to b$ envy and $b \to a$ envy  be individually low as opposed to just their combination.  The rough idea is that these are imbalanced towards one agent for small $c$, the other agent for large $c$, and the behavior as $c$ varies can be shown to be ``sufficiently smooth'' to ensure the right balance somewhere in between.  We also show that the required choice of $c$ only depends on the estimated valuations, meaning it can be computed by the algorithm.
    \item We use concentration arguments to relate the estimated valuations to the true ones, and use this finding to characterize how large the number of queries $q$ should be to maintain sufficient balancedness with respect to the \emph{true} valuations.
\end{itemize}
Next, we discuss the larger $\Delta$ regime in which there are not enough queries to sample every item once.  In this case, the idea is to sample a random subset of $q < m$ items once each and obtain the guarantee from the ``every item gets sampled'' regime restricted to those items, thus with the number of items $m$ replaced by $q$, and with $\Delta$ replaced by roughly $\Delta \frac{q}{m}$ (which is formalized using a concentration argument).  The items that are not sampled are allocated to each agent independently with probability $\frac{1}{2}$ each, i.e., completely randomly. Such an allocation is ``fair on average'' but has fluctuations of size roughly $\sqrt{m}$ by a central limit theorem argument.  We then require $q$ to be large enough such that the ``negative envy'' gained from the sampled items outweighs these fluctuations. 

\begin{rem} \label{rem:Delta_assump}
    \emph{(Assumption on $\Delta$)}
    In general, $\Delta$ lies in the range $[0,m]$, meaning that the restriction to $\Delta \ge m^{1/4} \log^2 m$ captures ``most'' of the possible scaling regimes.  Nevertheless, it would be of interest to handle $\Delta \le O(m^{1/4})$ as well. 
    The lower bound on $\Delta$ arises in our analysis for somewhat technical reasons: we want to control a quantity (related to ``smoothness'' in the second dot point above) to be at most $\Otil\big( \frac{m}{\sqrt q} \big)$, which we are only able to achieve if $q \ll m^2$.  In contrast, attaining the desired bound $q = \Otil\big( \frac{m^{2.5}}{\Delta^2} \big)$ when $\Delta \ll m^{1/4}$ requires us to have $q \gg m^2$.  We also note that certain regimes of \emph{sufficiently small} $\Delta$ may have computational barriers in view of the fact finding an envy-free allocation (i.e., $\Delta=0$) is NP-hard due to a reduction from PARTITION \cite{lipton2004approx}.
    
    Overall, while we do not wish to confidently claim anything around this discussion, we expect that precluding the regime $\Delta \le O(m^{1/4})$ is not merely an artifact/weakness in our analysis, but rather, that this regime requires fundamentally different algorithms (e.g., based on actually forming bundles rather than using the item-by-item approach of \eqref{eq:strategy_main_body}).
\end{rem}

\section{Algorithm-Independent Lower Bound}
\label{sec:alloc_lower}

In this section, we study algorithm-independent lower bounds on the number of queries necessary to achieve envy-freeness.  Our main result of this section is \cref{thm:lower_main} below stating an $\Omega\big( \frac{m^{2.5}}{\Delta^2} \big)$ lower bound, thus matching the upper bound to within logarithmic factors.  

{\bf Failure of naive approach.} Before outlining our techniques and stating the result formally, we give some motivating discussion.  Motivated by lower bounds for multi-armed bandits, a natural approach would be to have half the items be slightly favored by Agent $a$ and the other half slightly favored by Agent $b$, e.g., with utilities $\frac{1}{2} \pm \epsilon$ for some small $\epsilon > 0$.  This leads to an optimal negative envy of $\Delta = m\epsilon$, meaning $\epsilon = \frac{\Delta}{m}$, and it is intuitively difficult to learn each item's ``type'' because $\epsilon$ is small.  

However, such an approach turns out to be insufficient to obtain the desired $\frac{m^{2.5}}{\Delta^2}$ dependence in the lower bound.  The limitation is most easily seen when $\Delta \gg m^{3/4}$ (e.g., $\Delta = m^{0.9}$), in which case the desired bound satisfies $q \ll m$ and most items cannot be queried.  By allocating the non-queried items uniformly at random, these items will amount to an \emph{average} envy from $a$ to $b$ of zero, and similarly for the envy from $b$ to $a$.  Moreover, a standard central limit theorem argument reveals that the deviations from zero are on the order of $\Theta(\epsilon \sqrt{m})$ with high probability.   As a result, the queried items must be allocated sufficiently well to overcome these fluctuations.  As we further discuss below, a more carefully-designed hard instance can increase such fluctuations to $\Theta(\sqrt{m})$, thus indicating that $\Theta(\epsilon \sqrt{m})$ is significantly smaller than ideal (since $\epsilon \ll 1$ except when $\Delta = \Theta(m)$).

We note (without proof) that the above ``naive'' approach turns out to give a lower bound with a leading term of $\frac{m}{\Delta^{1.5}}$ for $\Delta \gg \sqrt{m}$, and $\frac{m^2}{\Delta^2}$ for $\Delta \ll \sqrt{m}$, both of which are worse lower bounds than $\frac{m^{2.5}}{\Delta^2}$.

{\bf Our approach.} The idea of our refined lower bound construction is to have other types of items that considerably increase the magnitude of the fluctuations (e.g., in the preceding example, to $\Theta(\sqrt{m})$ instead of only $\Theta(\epsilon \sqrt{m})$).  While less immediately obvious, the inclusion of such items will help even in the regime $\Delta \ll m^{3/4}$ in which $q$ is large enough to query every item.  Specifically, we introduce \emph{items that are (relatively strongly) favored/disfavored by both agents}, i.e., their utility is $\frac{1}{2}+\gamma$ for both or $\frac{1}{2}-\gamma$ for both, where typically $\gamma \gg \epsilon$.  We will keep the parameter~$\gamma$ general throughout the analysis, but will end up choosing $\gamma = \frac{1}{2}$ when $\Delta \gg m^{3/4}$, and $\gamma = \Theta(\epsilon m^{1/4})$ when $\Delta \ll m^{3/4}$.

In more detail, we design a randomized hard instance as in \cref{tab:alloc_hard}, containing four types of items. The items of the same type have the same utilities and the type of each item depends on (i) whether its index $i$ satisfies $i\le \frac{m}{2}$, and (ii) a latent variable $X_i$ independently sampled from $\mathrm{Bernoulli}(\frac{1}{2})$. The utilities involve parameters $\epsilon,\gamma\in(0,0.5]$, which possibly depend on $\Delta$ and $m$, and their exact values will be specified later. 

\begin{table}[!ht]
    \centering
    \begin{tabular}{c|cc|cc}
         &  \multicolumn{2}{c}{$i\le m/2$} &  \multicolumn{2}{c}{$i>m/2$}\\
         & $X_i=1$ & $X_i=0$ & $X_i=1$ & $X_i=0$ \\
         \hline
         $\mu^a_i $& $\frac{1}{2}+\epsilon$ & $\frac{1}{2}-\epsilon$ & $\frac{1}{2}+\gamma$ & $\frac{1}{2}-\gamma$ \\
         $\mu^b_i $& $\frac{1}{2}-\epsilon$ & $\frac{1}{2}+\epsilon$ & $\frac{1}{2}+\gamma$ & $\frac{1}{2}-\gamma$ \\
    \end{tabular}
    \caption[Hard instance for fair item allocation algorithm-independent lower bound.]{Hard instance for the algorithm-independent lower bound. $X_i$ is independently sampled from $\mathrm{Bernoulli}(\frac{1}{2})$; $\epsilon$ and $\gamma$ are positive parameters that possibly depend on $\Delta$ and $m$.}
    \label{tab:alloc_hard} \vspace*{-3ex}
\end{table}

The analysis is given in Appendix~\ref{sec:lb_proof}, and leads to the following theorem for any constant noise level $\sigma^2 > 0$.  A more general statement depending on $\sigma$ is given in \cref{thm:alloc_lower} in Appendix \ref{sec:lb_proof}, and we discuss this $\sigma$ dependence in detail in Section \ref{sec:general_sigma_disc}.

\begin{thm} \label{thm:lower_main}
    Let $\sigma^2 > 0$ be fixed (not depending on $m$), and let $\Delta$ take any value in $(1,m/2)$.\footnote{The lower bound of $1$ is already very mild, but can be replaced by any fixed positive constant.}  Then, under the above randomized instance with suitably-chosen $\epsilon$ and $\gamma$, for any (possibly adaptive and/or randomized) algorithm whose number of queries satisfies $q \le O\big( \frac{m^{2.5}}{\Delta^2} \big)$ with a small enough implied constant, it holds with probability at least $1/3$ that (i) $\mathrm{OptEnvy} \le -\Delta$, and (ii) the algorithm's output has positive envy.
\end{thm}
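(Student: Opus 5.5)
Fix $\epsilon = c_1\Delta/m$ and treat $\gamma$ as a free parameter, to be set to $\gamma=\tfrac12$ when $\Delta \gtrsim m^{3/4}$ and $\gamma = \Theta(\epsilon m^{1/4})$ otherwise. There are three steps. First, show that $\mathrm{OptEnvy}\le -\Delta$ with high probability. Second, reduce the success event to a condition on how well the algorithm classifies the $\epsilon$-items and how it handles the $\gamma$-items. Third, use an information bound to show that $q \le c\, m^{2.5}/\Delta^2$ queries cannot meet that condition with probability above $2/3$.

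\textbf{Step 1.} Give every $X_i=1$ item with $i\le m/2$ to $a$ and every $X_i=0$ item with $i\le m/2$ to $b$. Split the $\gamma$-items so that each agent receives as equal a share of the $\frac12+\gamma$ items and of the $\frac12-\gamma$ items as possible. Because both agents value the $\gamma$-items identically, these items contribute at most $O(1)$ to either envy. The $\epsilon$-items give each agent a negative envy of $2\epsilon\cdot(\#\text{items received})\approx \epsilon m/2$, up to $O(\epsilon\sqrt m)$ Binomial fluctuations. Choosing $c_1$ large enough, and using $\Delta>1$, makes $\mathrm{OptEnvy}\le-\Delta$ with probability $1-o(1)$, possibly after a small constant adjustment.

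\textbf{Step 2.} Write $S_a$ for the set of first-half items given to $a$, and let $N$ be the number of first-half items that are \emph{correctly} assigned (a $1$-item to $a$, or a $0$-item to $b$). Set $D=\sum_{i>m/2}(\mu_i\mathbf 1[i\in\mathcal A_b]-\mu_i\mathbf 1[i\in\mathcal A_a])$, the common envy contribution of the $\gamma$-items. Also write $B_a = |\mathcal A_b|-|\mathcal A_a|$, and define $B_b$ analogously. Then
\[
\mathrm{Envy}_{a\to b}+\mathrm{Envy}_{b\to a} = 2D + \epsilon\,(m/2 - 2N)\cdot 2 + \text{(size terms)}.
\]
The $\frac12$ baseline cancels in this sum, because $\frac12$ times the bundle-size difference appears with opposite signs in the two envies. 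The $\gamma$ terms add and do not cancel. So envy-freeness requires $\epsilon(4N-m) \ge 2D$ up to lower-order terms, i.e., the number of correctly placed $\epsilon$-items must beat the $\gamma$-fluctuation.

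The $\gamma$-items that are queried at most $O(1/\gamma^2)$ times are nearly unidentifiable. Conditioned on the observations, each such item's contribution to $D$ has sign $\pm\gamma$ with posterior close to $1/2$. A central limit theorem argument then gives $D\ge \gamma\sqrt{k}$ with constant probability, where $k$ is the number of such poorly-queried $\gamma$-items. Since $q\ll m/\gamma^2$, we have $k\ge m/4$.

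\textbf{Step 3.} Show that $\mathbb E[N]\le m/4 + O(\sum_i \epsilon\sqrt{T_i}/\sigma)$, where $T_i$ is the number of queries to item $i$. This follows from the fact that the posterior bias of $X_i$ after $T_i$ Gaussian observations with mean gap $2\epsilon$ is $O(\epsilon\sqrt{T_i}/\sigma)$, which is a total-variation/Pinsker bound. By Cauchy--Schwarz this sum is at most $\epsilon\sqrt{mq}/\sigma$, and $N$ concentrates with $\sqrt m$ deviations. Success therefore needs
\[
\epsilon\cdot\bigl(\epsilon\sqrt{mq}+\sqrt m\bigr) \gtrsim \gamma\sqrt m .
\]
With $\gamma=\tfrac12$ this forces $q\gtrsim 1/\epsilon^4 = m^4/\Delta^4$. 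With $\gamma=\epsilon m^{1/4}$ it forces $q\gtrsim m^{0.5}/\epsilon^2 = m^{2.5}/\Delta^2$, and the latter holds provided $k\ge m/4$, i.e., $q\ll m/\gamma^2 = m^{2.5}/\Delta^2$, which is consistent. In the large-$\Delta$ regime the proof instead uses $\gamma=\tfrac12$ together with the fact that only $q<m$ items are queried at all. In that case $D\sim\sqrt m$ and $N-m/4\lesssim \epsilon q$, since unqueried items are coin flips. This gives $\epsilon^2 q\gtrsim\sqrt m$, i.e., $q\gtrsim m^{2.5}/\Delta^2$ again.

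\textbf{Main obstacle.} The hardest step is making Step 2 rigorous under adaptivity. $D$ and $N$ are not independent of the algorithm's choices, and the algorithm can correlate how it places the $\gamma$-items with the estimates of the $\epsilon$-items. I would handle this by conditioning on the full transcript. The argument would show that, given the transcript, the unresolved $\gamma$-items contribute a symmetric sum whose variance is at least $\gamma^2 m/8$, so an anti-concentration (Berry--Esseen) bound makes $D$ exceed any transcript-measurable threshold with probability at least about $0.4$. This should combine with Markov's inequality on $N$ and Step 1 to yield overall failure probability at least $1/3$.
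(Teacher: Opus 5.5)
Your Step~2 reduction rests on a false identity, and this is the crux of the lower bound. Each $\gamma$-item is valued identically by both agents, so it adds $+\mu_i$ to one envy and $-\mu_i$ to the other. Hence the \emph{entire} $\gamma$-contribution cancels in $\mathrm{Envy}_{a\to b}+\mathrm{Envy}_{b\to a}$; the $\frac12$ baseline cancelling is only a special case of this. In fact $\mathrm{Envy}_{a\to b}+\mathrm{Envy}_{b\to a}=-\epsilon(4N-m)$ exactly, which only forces $N\ge m/4$. So ``$\epsilon(4N-m)\ge 2D$'' is not a consequence of envy-freeness. Your plan to show that $D$ exceeds a threshold with constant probability also cannot hold uniformly over algorithms. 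The transcript-measurable part of $D$ is chosen by the algorithm: for instance, giving all $\gamma$-items to $a$ makes $D\approx -m/4$.

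The missing idea is that the $\gamma$-fluctuation enters the \emph{difference} of the envies, alongside the bundle-size imbalance that the algorithm controls. Set $W=|S_a|-\frac m4-D$. A direct computation gives $\mathrm{Envy}_{a\to b}=-W-\frac{\epsilon}{2}(4N-m)$ and $\mathrm{Envy}_{b\to a}=W-\frac{\epsilon}{2}(4N-m)$, so $\mathrm{Envy}(\widehat{\Ac})=|W|-\frac{\epsilon}{2}(4N-m)$. Failure therefore follows from $|W|\ge K\gamma\sqrt m$ together with $\frac{\epsilon}{2}(4N-m)<K\gamma\sqrt m$. Given the transcript, $W$ is a transcript-measurable shift plus $2\gamma\sum_{i>m/2}s_iX_i$, where $s_i=\pm1$ records the recipient of item $i$. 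The summands are posterior-independent, with total posterior variance $\Omega(\gamma^2 m)$. What you need is therefore \emph{two-sided} anti-concentration: Berry--Esseen shows that $W$ lands in the window $(-K\gamma\sqrt m,K\gamma\sqrt m)$ with probability at most $\rho$, whatever the shift. This is Lemma~\ref{lem:pos_envy} combined with Lemma~\ref{lem:second_half}. The paper gets the posterior-variance bound from an Assouad bound on the second half plus Markov, not from per-item query counts, and this also resolves your adaptivity concern. With this repair, the scalings in your Step~3 are the right ones.

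Step~1 has the same sum-versus-difference flaw. Under your allocation the two negative envies are $\frac{\epsilon m}{2}\pm(n_1-\frac m4)+O(1)$, where $n_1\sim\mathrm{Binomial}(m/2,1/2)$. The fluctuation is thus $\Theta(\sqrt m)$, not $O(\epsilon\sqrt m)$, and your witness allocation fails for $\Delta\lesssim\sqrt m$. You need to balance the counts, e.g.\ give exactly $m/4$ first-half items to each agent, as the paper does.
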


\ifroot

\section{General Noise Levels} \label{sec:general_sigma_disc}

In Theorems \ref{thm:upper_main} and \ref{thm:lower_main}, we considered a fixed constant noise level $\sigma^2 > 0$, in particular satisfying $\sigma^2 = \Theta(1)$ as $m \to \infty$.  
In this section, we state and discuss further results (proved in the appendices) for general values of $\sigma$, possibly scaling as $o(1)$ or $\omega(1)$ with respect to $m$.

{\bf Upper bound.} In the proof of Theorem \ref{thm:upper_main}, we will split the upper bound into two theorems depending on whether the number of queries $q$ is above or below the number of items~$m$:
\begin{itemize}
    \item The case $q \ge m$ is handled in Theorem \ref{thm:sigma_large_n} in Appendix \ref{sec:state_gen_sigma}, which states that if $\Delta \geq m^{1/4}\log^2 m$, then a sufficient number of queries is 
        \begin{equation}
        q = m\bigg\lceil \sigma^2 \bigg(15\frac{m^{3/2}}{\Delta^2}\log m + \log^2 m \bigg) \bigg\rceil. \label{eq:n_sigma_copied}
    \end{equation}
    \item The case $q < m$ is handled in Theorem \ref{thm:sigma} in Appendix \ref{sec:small_q}, which states that if (i) $\Delta^2 > 160 \sigma m^{3/2} \log^2 m$, (ii) $\Delta^4 > 160^2 m^{3}\sigma^{4} \log^2 m$, and (iii) $\Delta < \min\{2\sigma^2 m, \sqrt{2} \sigma m\}$, then a sufficient number of queries is 
    \begin{equation}
        q = \bigg\lceil \max \left\{ 160^2 \frac{m^4}{\Delta^4}\sigma^4 \log^2 m, 160\frac{\sigma m^{5/2}}{\Delta^2}\log^2 m\right\} \bigg\rceil. \label{eq:n_some_not_copied}
    \end{equation}
\end{itemize}

{\bf Lower bound.} A more general version of  \cref{thm:lower_main}, stated as \cref{thm:alloc_lower} in Appendix \ref{sec:lb_proof}, includes the dependence on the noise level, which we summarize here for convenience: We have a constant probability of failure whenever 
\begin{align}
    q\le\begin{cases}
        O\big( \frac{\sigma m^{2.5}}{\Delta^2} \big) &\text{when }\Delta=\omega(m^{3/4}) \\
        O\big( \frac{\sigma^2 m^{2.5}}{\Delta^2} \big)  &\text{when }\Delta=O(m^{3/4})
    \end{cases} \label{eq:lb_two_cases}
\end{align}
with a sufficiently small implied constant.  While this result is clearly loose for extremely small $\sigma$ (in particular becoming $q =0$ when $\sigma = 0$), it turns out to be tight in broad scaling regimes, as we discuss below.  

{\bf Comparison.} Our upper bounds contain more conditions and $\max\{\cdot,\cdot\}$ terms than our lower bounds, and accordingly, the two do not always coincide.  Nevertheless, we observe tightness in broad cases of interest, including the following:
\begin{itemize}
    \item The first term in the parentheses in \eqref{eq:n_sigma_copied} is order-wise no smaller than the second whenever $\Delta = O\big( \frac{m^{3/4}}{ \sqrt{\log m}} \big)$.
    Under this condition, we find that we are in the second case in \eqref{eq:lb_two_cases} (i.e., $\Delta = O(m^{3/4})$), and we have matching upper and lower bounds to within an $O( \log m )$ factor.
    \item When the maximum in \eqref{eq:n_some_not_copied} is achieved by the second term (i.e., when $\Delta > \sqrt{160} m^{3/4} \sigma^{3/2}$), and when we are in the first case in  \eqref{eq:lb_two_cases} (i.e., when $\Delta = \omega(m^{3/4})$), we have matching upper and lower bounds to within an $O( (\log m)^2 )$ factor.
\end{itemize}
While this establishes broad scaling regimes in which our bounds are tight, they are not exhaustive.  
In the second dot point, we could have $\Delta = \omega(m^{3/4})$ and yet $\Delta < \sqrt{160} m^{3/4} \sigma^{3/2}$ due to $\sigma = \omega(1)$ being very large.  Moreover, perhaps more fundamentally, our upper bounds are not applicable when $\Delta \le O(m^{1/4})$, and we discussed in Remark \ref{rem:Delta_assump} how this may be a fundamental limitation of our choice of algorithm.

\else

\fi

\section{Conclusion} \label{sec:conclusion}

We have introduced the problem of envy-free allocation with noisy queries, and established upper and lower bounds on the sample complexity (for the two-agent setting with additive utilities and Gaussian noise) in terms of the number of items $m$ and the optimal negative-envy $\Delta$.  In particular, we established that the optimal number of queries is $\widetilde{\Theta}\big( \frac{m^{2.5}}{\Delta^2} \big)$ when $\Delta \gg m^{1/4}$.  We believe that our work opens up several directions for further research, including (i) fully understanding the case $\Delta \ll m^{1/4}$ (see Remark \ref{rem:Delta_assump} for discussion), (ii) extending to more than two agents and/or more general noise models, (iii) extending beyond additive valuations, and (iv) extending to other fairness notions.

Regarding the $n$-agent scenario for $n > 2$, a notable challenge in the upper bound is that the allocation rule \eqref{eq:strategy_main_body} amounts to checking whether the ratio of estimated valuations exceeds a given threshold, but with $n$ agents there are $n \choose 2$ relevant ratios, and it is unclear how to combine them, or even whether these are the right quantities to work with (e.g., it is conceivable that ``beyond pairwise'' information is also needed).  For the lower bound, one could try to generalize the hard instance from Table \ref{tab:alloc_hard}, e.g., to contain $n$ pairs of item types each favored/disfavored by only one of the agents, and a further pair favored/disfavored by all agents.  However, significant effort is still likely to be needed in adapting the analysis and precisely determining which hard instance gives a tight result.

Finally, regarding other fairness notions, we comment on another well-known fairness notion, \emph{proportionality} (e.g., see \cite{Amanatidis2023}).
An allocation is said to be \emph{proportional} if it gives each agent a utility least as high as the agent's \emph{proportional share}, defined as $1/n$ times their utility for the set of all items.
In the case of $n = 2$ agents and additive utilities, envy-freeness and proportionality are equivalent:  A direct comparison of the fairness definitions yields that an agent has an envy of $r > 0$ (resp., $r < 0$) if and only if the agent's utility is below (resp., above) their proportional share by $r/2$. 
Therefore, all of our results can be directly transferred to proportionality.
As with envy-freeness, extending these results to $n$~agents for proportionality is an interesting direction for future work.

\section*{Acknowledgment}

Y.L.~is supported by NTU Start-up Grant \#025311-00001.  
J.S.~is supported by the Singapore National Research Foundation (NRF) under its AI Visiting Professorship
programme.  W.S.~is supported by the Singapore Ministry of Education under grant number MOE-T2EP20221-0001 and by an NUS
Start-up Grant.  

\section*{Impact Statement}

This paper presents work whose goal is to advance the field of Machine
Learning. There are many potential societal consequences of our work, none
which we feel must be specifically highlighted here.

\bibliography{ref}
\bibliographystyle{icml2026}

\newpage
\onecolumn

\appendix
 

{\bf \Huge \centering Appendix \par}
\medskip

\section{Probabilistic Tools} \label{sec:concentration}

Here we state some useful probabilistic tools that are used in our analysis.

\begin{lem}  \label{lem:chernoff}
    {\em (Chernoff bound for Gaussian RVs \cite[Ch.~2]{Ver18})}
    Let $y_1,\dots,y_T$ be independent random variables such that $y_t\sim N(\oldmu,\sigma^2)$. Then, it holds for any $\epsilon>0$ that
    \begin{align}
        \Pr\Big[\Big|\frac{1}{T}\sum_{t=1}^T y_t - \oldmu\Big|\ge\epsilon\Big] \le 2\exp\Big(-\frac{T\epsilon^2}{2\sigma^2}\Big).
    \end{align}
\end{lem}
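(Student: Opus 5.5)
The plan is to prove this as a standard Chernoff (moment generating function) bound. First, reduce to a single Gaussian. The sample mean $\bar y = \frac{1}{T}\sum_{t=1}^T y_t$ is a linear combination of independent Gaussians. Hence it is itself Gaussian, with mean $\oldmu$ and variance $\sigma^2/T$. Equivalently, $Z = \bar y - \oldmu \sim N(0,\sigma^2/T)$. The two-sided event $\{|Z|\ge\epsilon\}$ is the union of $\{Z\ge\epsilon\}$ and $\{-Z\ge\epsilon\}$. By the symmetry of $Z$ and a union bound, it suffices to show $\Pr[Z\ge\epsilon]\le \exp(-T\epsilon^2/(2\sigma^2))$; the factor $2$ in the statement then comes from the two tails.

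Next, bound the upper tail. For any $\lambda>0$, Markov's inequality applied to $e^{\lambda Z}$ gives
\begin{equation*}
\Pr[Z\ge\epsilon]\le e^{-\lambda\epsilon}\,\EE[e^{\lambda Z}] = \exp\Big(-\lambda\epsilon + \frac{\lambda^2\sigma^2}{2T}\Big),
\end{equation*}
where we used the Gaussian MGF $\EE[e^{\lambda Z}] = e^{\lambda^2 s^2/2}$ with $s^2=\sigma^2/T$. Alternatively, one can avoid the fact that $\bar y$ is Gaussian: write $\EE[e^{\lambda Z}] = \prod_{t=1}^T \EE[e^{(\lambda/T)(y_t-\oldmu)}]$ by independence. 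Each factor equals $e^{\lambda^2\sigma^2/(2T^2)}$, which gives the same product.

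Finally, minimize the exponent over $\lambda$. The optimal choice is $\lambda = T\epsilon/\sigma^2$, which makes the exponent $-T\epsilon^2/(2\sigma^2)$. Combining the two tails yields the stated bound $2\exp(-T\epsilon^2/(2\sigma^2))$.

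No step is a real obstacle. The only point needing care is using independence to factor the MGF (or to conclude that the mean is Gaussian). The rest is this one-line optimization, which matches the standard sub-Gaussian tail bound in \cite[Ch.~2]{Ver18}.
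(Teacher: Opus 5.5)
Your proof is correct. The Chernoff/MGF bound with $\lambda = T\epsilon/\sigma^2$ gives each tail $\exp(-T\epsilon^2/(2\sigma^2))$, and symmetry plus a union bound supplies the factor $2$. The paper does not prove this lemma itself but cites \cite[Ch.~2]{Ver18}, where it is obtained by this same sub-Gaussian MGF argument, so your approach is essentially the standard one the paper relies on.
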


\begin{lem} \label{lem:bernstein}
    {\em (Bernstein's inequality for bounded RVs \cite[Ch.~2]{Ver18})} 
    Let $y_1,\dots,y_T$ be independent and identically distributed zero-mean random variables satisfying $|y_i| \le b$ almost surely, and $\mathrm{Var}[y_i] = \sigma_y^2$.  Then, we have
    \begin{align}
        \Pr\Big[\Big|\frac{1}{T}\sum_{t=1}^T y_t \Big|\ge\epsilon\Big] \le 2\exp\Big(-\frac{T\epsilon^2 / 2 }{\sigma_y^2 + b\epsilon/3}\Big).
    \end{align}
\end{lem}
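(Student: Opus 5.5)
The plan is the standard Chernoff (exponential moment) argument: bound the moment generating function of a single bounded, zero-mean variable; tensorize over the $T$ independent copies; apply Markov's inequality; choose the exponent parameter explicitly; and finally handle the lower tail by symmetry. Write $S_T=\sum_{t=1}^T y_t$, so the event in question is $|S_T|\ge T\epsilon$.

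\textbf{Step 1 (MGF of one variable).} Fix $\lambda\in(0,3/b)$. Since $\EE[y_t]=0$ and $|y_t|\le b$, we have $\EE[y_t^k]\le b^{k-2}\sigma_y^2$ for every $k\ge 2$. Expanding the exponential gives
\begin{align*}
\EE[e^{\lambda y_t}] &\le 1+\sigma_y^2\sum_{k\ge 2}\frac{\lambda^k b^{k-2}}{k!} \le 1+\frac{\sigma_y^2\lambda^2}{2}\sum_{k\ge 2}\Big(\frac{\lambda b}{3}\Big)^{k-2}\\
&= 1+\frac{\sigma_y^2\lambda^2}{2(1-\lambda b/3)} \le \exp\Big(\frac{\sigma_y^2\lambda^2}{2(1-\lambda b/3)}\Big).
\end{align*}
The second inequality uses $k!\ge 2\cdot 3^{k-2}$, which follows by induction on $k$. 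The last step uses $1+x\le e^x$.

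\textbf{Step 2 (Chernoff bound for the upper tail).} By independence and Markov's inequality,
\begin{align*}
\Pr[S_T\ge T\epsilon] \le e^{-\lambda T\epsilon}\prod_{t=1}^T \EE[e^{\lambda y_t}] \le \exp\Big(-\lambda T\epsilon+\frac{T\sigma_y^2\lambda^2}{2(1-\lambda b/3)}\Big).
\end{align*}

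\textbf{Step 3 (choice of $\lambda$).} Rather than optimizing exactly, take $\lambda=\epsilon/(\sigma_y^2+b\epsilon/3)$. This choice satisfies $\lambda<3/b$, as required in Step~1. It also gives $1-\lambda b/3=\sigma_y^2/(\sigma_y^2+b\epsilon/3)$. Substituting, the quadratic term becomes $T\epsilon^2/\big(2(\sigma_y^2+b\epsilon/3)\big)$, while the linear term is $-T\epsilon^2/(\sigma_y^2+b\epsilon/3)$. The exponent therefore simplifies to $-\frac{T\epsilon^2/2}{\sigma_y^2+b\epsilon/3}$.

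\textbf{Step 4 (two-sided bound).} Apply Steps 1--3 to the variables $-y_t$, which satisfy the same hypotheses. This bounds $\Pr[S_T\le -T\epsilon]$ by the same quantity. A union bound over the two tails gives the factor $2$ and completes the proof.

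The only step requiring care is Step~1. One must bound the higher moments by $b^{k-2}\sigma_y^2$ and the factorials by $2\cdot 3^{k-2}$ so that the series becomes geometric with ratio $\lambda b/3$. This is exactly what produces the constant $b\epsilon/3$ in the denominator. The remaining steps are routine algebra.
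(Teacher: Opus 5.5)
Your proof is correct and follows the standard route. The paper gives no proof of its own and simply cites Vershynin (Ch.~2), where Bernstein's inequality for bounded variables is derived from the MGF bound $\EE[e^{\lambda y_t}]\le \exp\bigl(\frac{\sigma_y^2\lambda^2}{2(1-\lambda b/3)}\bigr)$ followed by a Chernoff bound, which is exactly your argument. One minor point: the claim $\lambda<3/b$ in Step~3 requires $\sigma_y^2>0$, but when $\sigma_y^2=0$ the $y_t$ vanish almost surely and the inequality holds trivially.
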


\begin{lem} \label{lem:berry_esseen}
    {\em (Berry--Esseen theorem for non-identical RVs \cite[Thm.~5.6]{Petrov1995})}
    Let $y_1,\dots,y_T$ be independent random variables with each $y_t$ having mean $\oldmu_t$, variance $\sigma_t^2$, and finite third absolute moment.  Define
    \begin{equation}
        V_T = \sum_{t=1}^T \sigma_t^2, \quad \Psi_T = \sum_{t=1}^T \EE[ |y_t - \oldmu_t|^3 ].
    \end{equation}
    Then, the shifted normalized summation $Z_T = \frac{ \sum_{t=1}^T (y_t - \oldmu_t)}{\sqrt{V_T}}$ satisfies
    \begin{equation}
        \sup_{\gamma \in \RR} \big| \PP[ Z_T \le \gamma ] -  \PP[ N(0,1) \le \gamma ] \big| \le \frac{ C \Psi_T }{ V_T^{3/2} }
    \end{equation}
    for some absolute constant $C$.  In particular, if $\Psi_T = O(T)$ and $V_T = \Omega(T)$, then the right-hand side is $O\big( \frac{1}{\sqrt T} \big)$.
\end{lem}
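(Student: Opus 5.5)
Since this is the classical Berry--Esseen theorem in Lyapunov form, the plan is the standard Fourier-analytic route via Esseen's smoothing inequality. I would then read off the ``in particular'' clause directly. Center the variables, $x_t = y_t - \oldmu_t$, so that $Z_T = \sum_t x_t/\sqrt{V_T}$. Write $L_T = \Psi_T / V_T^{3/2}$ for the Lyapunov ratio. If $L_T$ exceeds a fixed constant, the claim is trivial, because the left-hand side is at most $1$. So we may assume $L_T$ is small.

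The first step is Esseen's smoothing inequality. For a distribution function $F$ and the standard normal $\Phi$, with characteristic functions $f$ and $e^{-s^2/2}$, and any $U>0$,
\begin{equation*}
\sup_{\gamma}|F(\gamma)-\Phi(\gamma)| \le \frac{1}{\pi}\int_{-U}^{U}\frac{|f(s)-e^{-s^2/2}|}{|s|}\,ds + \frac{c_0}{U}.
\end{equation*}
I would take $U = c_1/L_T$ for a small absolute constant $c_1$. The boundary term is then $O(L_T)$.

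The second step bounds $|f(s)-e^{-s^2/2}|$ on $|s|\le c_1/L_T$, where $f(s)=\prod_t \varphi_t(s/\sqrt{V_T})$ and $\varphi_t$ is the characteristic function of $x_t$. A third-order Taylor expansion gives
\begin{equation*}
\Bigl|\varphi_t(u) - 1 + \tfrac{\sigma_t^2u^2}{2}\Bigr| \le \tfrac{|u|^3}{6}\,\EE|x_t|^3 .
\end{equation*}
Lyapunov's inequality gives $\sigma_t^3 \le \EE|x_t|^3$, so each $\sigma_t/\sqrt{V_T}\le L_T^{1/3}$. I would split the range of $s$ into two parts.

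On $|s|\le c_2 L_T^{-1/3}$, every factor is close to $1$. Taking logarithms and summing, this gives $\log f(s) = -s^2/2 + O(L_T|s|^3)$, and hence $|f(s)-e^{-s^2/2}| \lesssim L_T|s|^3 e^{-s^2/4}$.

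On the larger range $c_2L_T^{-1/3}\le |s|\le c_1/L_T$, I would use the symmetrization trick. Consider $|\varphi_t|^2$, the characteristic function of $x_t - x_t'$ with $x_t'$ an independent copy. This yields $|\varphi_t(u)|^2 \le \exp\bigl(-\sigma_t^2u^2 + \tfrac{4}{3}|u|^3\EE|x_t|^3\bigr)$, and multiplying over $t$ gives $|f(s)|\le e^{-s^2/3}$ when $|s|\le c_1/L_T$. In this range both $|f(s)|$ and $e^{-s^2/2}$ are super-polynomially small in $L_T^{-1/3}$. Their contribution to the integral is therefore again $O(L_T)$, since $e^{-c L_T^{-2/3}}\lesssim L_T$.

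Integrating $L_T s^2 e^{-s^2/4}$ over $\mathbb{R}$ gives $O(L_T)$. Combining the pieces yields $\sup_\gamma|\PP[Z_T\le\gamma]-\Phi(\gamma)|\le C L_T$, which is the claimed bound.

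I expect the main obstacle to be the intermediate frequency range. There a single summand with large variance can stop the product $\prod_t\varphi_t$ from decaying, and the symmetrization bound together with Lyapunov's inequality is exactly what controls this. Keeping the constants absolute, in particular not depending on $T$ or on identical distributions, requires care. Since this is textbook material, in the paper I would simply cite \cite[Thm.~5.6]{Petrov1995} for the main inequality.

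The ``in particular'' clause is immediate. If $\Psi_T\le C'T$ and $V_T\ge c''T$, then $C\Psi_T/V_T^{3/2}\le CC'T/(c''T)^{3/2}=O(1/\sqrt{T})$.
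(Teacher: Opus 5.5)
Your proposal is correct and takes essentially the same route as the paper: the paper gives no proof and simply cites Petrov's Theorem~5.6, and your sketch (Esseen's smoothing inequality with $U\asymp 1/L_T$, the log-expansion on $|s|\lesssim L_T^{-1/3}$, the symmetrization bound $|f(s)|\le e^{-s^2/3}$ up to $|s|\le 1/(4L_T)$, and the immediate ``in particular'' clause) is the standard argument behind that reference. The one step worth writing out is the quadratic remainder of the logarithm, $\sum_t|\varphi_t(u)-1|^2\le \frac{s^4}{4}\sum_t\sigma_t^4/V_T^2\le \frac{s^4}{4}L_T^{4/3}\le \frac{c_2}{4}L_T|s|^3$ (using $\sum_t\sigma_t^4\le \max_t\sigma_t\sum_t\sigma_t^3$), since this is what justifies your $O(L_T|s|^3)$ error on that range.
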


\begin{lem} \label{lem:assouad}
    {\em (Assouad's lemma  \cite[Sec.~9.5]{duchi2016lecture})}
    Let $\Theta = \{0,1\}^d$ for some integer $d \ge 1$, let
    $\{P_\theta : \theta \in \Theta\}$ be a family of probability
    measures taking values on some space $\mathcal{Y}$, and let $\hat\theta : \mathcal{Y} \to \Theta$ be any estimator of $\Theta$ based on an outcome in $\mathcal{Y}$.  Define the Hamming distance $d_H(\theta,\theta')
    = \sum_{i=1}^d \mathbf{1}\{\theta_i \neq \theta'_i\}$, and the optimal expected risk under a uniform prior as\footnote{In \cite[Thm.~9.5.2]{duchi2016lecture} the lemma is stated in terms of the minimax risk instead (i.e., with $\max_{\theta}$ in place of $\frac{1}{|\Theta|}\sum_{\theta \in \Theta}$), but the proof in \cite[Sec.~9.6.3]{duchi2016lecture} is based on lower bounding the minimax risk by the average risk.}
    \begin{equation}
        \overline{M}(\Theta)
        = \inf_{\hat\theta} \frac{1}{|\Theta|}
        \sum_{\theta \in \Theta}
        \mathbb{E}_\theta\bigl[ d_H\bigl(\theta,\hat\theta(Y)\bigr) \bigr],
    \end{equation}
    where $\EE_{\theta}[\cdot]$ signifies that $Y \sim P_\theta$, and the infimum is over all estimators $\hat\theta$.  Then, we have 
    \begin{equation}
        \overline{M}(\Theta) \ge \frac{1}{2} \sum_{i=1}^d \Big( 1 - \|P_i^{(0)}- P_i^{(1)}\|_{\rm TV}  \Big),
    \end{equation}	
    where $P_i^{(\kappa)} = \frac{1}{2^{d-1}}\sum_{\theta \,:\, \theta_i = \kappa} P_{\theta}$ for $\kappa \in \{0,1\}$.  That is, $P_i^{(\kappa)}$ is the conditional probability measure on $\mathcal{Y}$ given $\theta_i = \kappa$ when $\theta$ is uniformly random.  Moreover, $\|P - Q\|_{\rm TV}$ denotes the total variation distance between $P$ and $Q$.
\end{lem}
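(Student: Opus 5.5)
The proof proceeds by decomposing the Hamming loss coordinate by coordinate and applying a two-point (Le Cam) testing bound to each coordinate separately. Fix any estimator $\hat\theta$ and write $\hat\theta_i(Y)$ for its $i$-th coordinate. By linearity of expectation,
\begin{equation*}
\frac{1}{|\Theta|}\sum_{\theta\in\Theta}\mathbb{E}_\theta\bigl[d_H(\theta,\hat\theta(Y))\bigr]
= \sum_{i=1}^d \frac{1}{2^d}\sum_{\theta\in\Theta} P_\theta\bigl(\hat\theta_i(Y)\neq\theta_i\bigr).
\end{equation*}
It therefore suffices to show that each summand is at least $\frac{1}{2}\bigl(1-\|P_i^{(0)}-P_i^{(1)}\|_{\rm TV}\bigr)$. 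Taking the infimum over $\hat\theta$ then gives the claim, since the resulting bound does not depend on $\hat\theta$.

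For a fixed $i$, split the inner sum according to $\theta_i\in\{0,1\}$. Each half contains $2^{d-1}$ terms, so by the definition of $P_i^{(\kappa)}$ as the uniform mixture over $\{\theta:\theta_i=\kappa\}$, the summand equals
\begin{equation*}
\frac{1}{2}\Bigl[P_i^{(0)}\bigl(\hat\theta_i(Y)=1\bigr)+P_i^{(1)}\bigl(\hat\theta_i(Y)=0\bigr)\Bigr].
\end{equation*}
This is the average error of the test $\hat\theta_i$ between the two mixtures. Let $A=\{y:\hat\theta_i(y)=1\}$. The bracket equals $1-\bigl(P_i^{(1)}(A)-P_i^{(0)}(A)\bigr)$, which is at least $1-\sup_A|P_i^{(1)}(A)-P_i^{(0)}(A)| = 1-\|P_i^{(0)}-P_i^{(1)}\|_{\rm TV}$. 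Summing over $i$ yields the stated bound.

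If $\hat\theta$ is allowed to be randomized, I would handle it by conditioning on its internal randomness, or equivalently by replacing $\mathbf{1}_A$ with a $[0,1]$-valued test function. The same TV bound holds because $\sup_{0\le f\le1}|\mathbb{E}_{P}f-\mathbb{E}_Qf|=\|P-Q\|_{\rm TV}$. In the adaptive-query application, $Y$ denotes the full transcript, and the lemma applies verbatim.

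There is no real obstacle in this argument. The only step requiring care is the bookkeeping that identifies the coordinate-wise average with the mixtures $P_i^{(\kappa)}$: the normalization $2^{-d}=\frac{1}{2}\cdot 2^{-(d-1)}$ must match exactly. This matching is also what makes the average-risk version (rather than the minimax version) hold directly, without first passing through a maximum over $\theta$.
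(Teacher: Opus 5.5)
Your proof is correct and matches the standard argument in the cited source (Duchi, Sec.~9.6.3), which the paper invokes without reproducing: split the uniform-prior Hamming risk by coordinate, identify each coordinate's average error as the equal-prior testing error between the mixtures $P_i^{(0)}$ and $P_i^{(1)}$, and bound that error below by $\frac{1}{2}\big(1-\|P_i^{(0)}-P_i^{(1)}\|_{\rm TV}\big)$. Your normalization bookkeeping ($2^{-d}=\frac{1}{2}\cdot 2^{-(d-1)}$) and your remark on randomized estimators are both fine.
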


\begin{lem}  \label{lem:kl_convex}
    {\em (Convexity of KL divergence \cite[Thm.~5.1]{polyanskiy2025information})}
    The KL divergence $D(P\|Q)$ is jointly convex in its arguments, thus implying the following via Jensen's inequality: Let $\{P_j\}$ and $\{Q_j\}$ be two countable\footnote{The counterpart with continuous indices also holds, but we will only need the discrete version, and it is more convenient to state in this form.} collections of distributions on a common alphabet, let $\lambda$ be an arbitrary distribution over the indices $\{j\}$ (i.e., the values $\lambda_j$ are non-negative and sum to $1$), and define the mixture distributions $P(x) = \sum_{j} \lambda_j P_j(x)$ and $Q(x) = \sum_{j} \lambda_j Q_j(x)$.  Then, we have
    \begin{equation}
        D_{\rm KL}( P \| Q ) \le \sum_j \lambda_j D_{\rm KL}(P_j \| Q_j).
    \end{equation}
\end{lem}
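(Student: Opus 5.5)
The inequality will follow from the chain rule for KL divergence, applied to a pair consisting of the mixture index and the observation. Let $J$ be an index with $\Pr[J=j]=\lambda_j$. Define two joint distributions on (index, outcome):
\begin{equation*}
\widetilde P(j,x)=\lambda_j P_j(x), \qquad \widetilde Q(j,x)=\lambda_j Q_j(x).
\end{equation*}
Both have the same $J$-marginal $\lambda$. Their $x$-marginals are exactly the mixtures $P$ and $Q$ in the statement. The proof then consists of two steps.

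\emph{First step: compute the joint divergence.} Since the $J$-marginals coincide, the chain rule gives
\begin{equation*}
D_{\rm KL}(\widetilde P\|\widetilde Q)=D_{\rm KL}(\lambda\|\lambda)+\sum_j\lambda_j D_{\rm KL}(P_j\|Q_j)=\sum_j\lambda_j D_{\rm KL}(P_j\|Q_j).
\end{equation*}
Equivalently, this is a direct computation: the factor $\lambda_j$ cancels inside the logarithm of $\widetilde P/\widetilde Q$.

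\emph{Second step: marginalizing does not increase divergence.} Apply the chain rule in the other order:
\begin{equation*}
D_{\rm KL}(\widetilde P\|\widetilde Q)=D_{\rm KL}(P\|Q)+\mathbb E_{x\sim P}\big[D_{\rm KL}(\widetilde P_{J|x}\|\widetilde Q_{J|x})\big]\ge D_{\rm KL}(P\|Q).
\end{equation*}
The inequality uses nonnegativity of KL divergence, i.e., Gibbs' inequality. Combining the two displays gives the claim.

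If a more elementary route is preferred, the same result follows pointwise from the log-sum inequality. For nonnegative $a_j,b_j$ with $A=\sum_j a_j$ and $B=\sum_j b_j$,
\begin{equation*}
A\log\frac{A}{B}\le\sum_j a_j\log\frac{a_j}{b_j}.
\end{equation*}
This holds by Jensen's inequality for the convex function $t\log t$, evaluated at $t_j=a_j/b_j$ with weights $b_j/B$. Take $a_j=\lambda_jP_j(x)$ and $b_j=\lambda_jQ_j(x)$, then sum over $x$.

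\emph{Main obstacle.} The only delicate point is justifying these manipulations when the index set is countable and some divergences may be infinite. If the right-hand side is $+\infty$ there is nothing to prove, so assume it is finite; then $P_j\ll Q_j$ for every $j$ with $\lambda_j>0$. The summands $a\log(a/b)$ can be negative, so I would rewrite each as the nonnegative quantity $a\log(a/b)-a+b$. The linear terms sum to $\sum_x(P(x)-Q(x))=0$. Tonelli's theorem then permits exchanging the sums over $j$ and $x$. Terms with $a_j=0$ or $b_j=0$ are handled with the usual conventions $0\log 0=0$ and $a\log(a/0)=+\infty$ for $a>0$.
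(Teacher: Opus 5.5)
Your argument is correct, but it takes a different route from the paper. The paper gives no argument of its own: it cites joint convexity of $D_{\rm KL}$ from Polyanskiy--Wu and says the mixture bound follows ``via Jensen's inequality.'' You instead prove the mixture bound directly, in one of two ways. The first lifts to the joint law of (index, outcome) and applies the chain rule in both orders, which amounts to data processing under marginalization. The second works pointwise via the log-sum inequality.

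These are the standard proofs of two-point joint convexity, applied here to the whole countable mixture at once. That is a genuine advantage. ``Convexity plus Jensen'' gives the inequality immediately only for finitely many components, by induction. Passing to a countable index set needs an extra limiting step (truncate the mixture, then use lower semicontinuity of $D_{\rm KL}$), which the paper leaves implicit. Your Tonelli bookkeeping with the nonnegative integrand $a\log(a/b)-a+b$ handles countably many components and infinite divergences rigorously. The paper's route buys only brevity.

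One small remark on alphabets. The paper uses this lemma in the proof of Lemma~\ref{lem:first_half} for laws of Gaussian observation sequences, which form a finite mixture over the remaining latent bits. So your sums over $x$ should be read as integrals of densities against a common dominating measure. Both of your arguments carry over unchanged: the chain rule holds on general spaces, and Tonelli applies to integrals.
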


\begin{lem} \label{lem:kl_chain_rule}
    {\em (Consequence of chain rule for KL divergence, e.g., Ex.~15.8 and Eq.~(38.22) in \cite{lattimore2020bandit})} Fix positive integers $k$ and $n$.  Consider any distributions $P_1,P'_1,\dotsc,P_k,P'_k$ on a common alphabet, and write $P = (P_1,\dotsc,P_k)$ and $P' = (P'_1,\dotsc,P'_k)$.  Then, consider an arbitrary (possibly adaptive) algorithm that, at each time indexed by $t=1,\dotsc,n$, queries an index $i_t \in \{1,\dotsc,k\}$ and observes a random sample $y_t$ from the $i_t$-th distribution (i.e., from $P_{i_t}$ under $P$, or from $P'_{i_t}$ under $P'$).  Let $Y = (y_1,\dotsc,y_n)$ be the resulting sequence of outcomes, and let $P_Y$ and $P'_Y$ be the corresponding joint distributions on these outcomes under $P$ and $P'$.  Then, it holds that
    \begin{equation}
        D_{\rm KL}(P_Y \| P'_Y) = \sum_{i=1}^k \EE_P[N_i] D_{\rm KL}( P_i \| P'_i ), \label{eq:chain_lemma}
    \end{equation}
    where $N_i$ is the (random) number of times the $i$-th distribution is queried.  
    In particular, if $P$ and $P'$ only differ in a single distribution $P_i$ (i.e., $P_j = P'_j$ for all $j \ne i$), then this simplifies to 
    \begin{equation}
        D_{\rm KL}(P_Y \| P'_Y) = \EE_P[N_i] D_{\rm KL}( P_i \| P'_i ). \label{eq:chain_lemma_simp}
    \end{equation}
    
\end{lem}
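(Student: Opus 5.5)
The plan is to prove the divergence decomposition directly from the chain rule for KL divergence. We take the outcome $Y$ to carry the full interaction history, i.e.\ $H_n = (U, i_1, y_1, \dotsc, i_n, y_n)$, where $U$ is the algorithm's internal randomness. For a deterministic algorithm the $i_t$ are functions of past outcomes, so this loses nothing. If $P_Y$ only records the $y_t$'s, one gets ``$\le$'' from data processing, and equality whenever the indices are recoverable from $Y$ and $U$ (which has the same law under both instances). We first dispose of the degenerate case: if some $i$ has $\EE_P[N_i]>0$ and $D_{\rm KL}(P_i\|P'_i)=\infty$, then $P_i\not\ll P'_i$. With positive probability the algorithm queries $i$ and lands in a set of $P'_i$-measure zero, so $P_Y\not\ll P'_Y$ and both sides are $\infty$. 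Hence we may assume absolute continuity and fix densities $p_i = dP_i/d\lambda$, $p'_i = dP'_i/d\lambda$ with respect to a common dominating measure $\lambda$.

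Next we write the joint density of the history under each instance. Let $\pi_t(\cdot \mid u, i_1,y_1,\dotsc,i_{t-1},y_{t-1})$ be the algorithm's query rule, which is the same under $P$ and $P'$. The density under $P$ is
\[
p_Y(h) = p_U(u)\prod_{t=1}^n \pi_t(i_t \mid u, h_{t-1})\, p_{i_t}(y_t),
\]
and under $P'$ the same expression holds with $p'$ in place of $p$. In the log-likelihood ratio the factors $p_U$ and $\pi_t$ cancel, leaving $\log\frac{p_Y}{p'_Y}(H_n) = \sum_{t=1}^n \log\frac{p_{i_t}(y_t)}{p'_{i_t}(y_t)}$.

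Taking $\EE_P$ and using the tower property, we condition on $(U, H_{t-1}, i_t)$. Given this, $y_t \sim P_{i_t}$, so the conditional expectation of the $t$-th term is exactly $D_{\rm KL}(P_{i_t}\|P'_{i_t})$. Therefore
\[
D_{\rm KL}(P_Y\|P'_Y) = \EE_P\Big[\sum_{t=1}^n D_{\rm KL}(P_{i_t}\|P'_{i_t})\Big] = \sum_{i=1}^k \EE_P\Big[\sum_{t=1}^n \mathbf{1}\{i_t=i\}\Big] D_{\rm KL}(P_i\|P'_i),
\]
and the inner expectation is $\EE_P[N_i]$, which gives \eqref{eq:chain_lemma}. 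The special case \eqref{eq:chain_lemma_simp} follows immediately, since $D_{\rm KL}(P_j\|P'_j)=0$ for every $j\ne i$.

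The main obstacle is technical rather than conceptual. We must justify exchanging the expectation with the finite sum, since the log-ratio terms need not be integrable even when the KL divergences are finite. I would handle this by splitting each term into positive and negative parts and noting that the conditional expectations are nonnegative KL divergences. Since $n$ is finite, if every relevant divergence is finite then the sum is integrable and linearity applies. If some relevant divergence is infinite, the degenerate-case argument above already shows both sides equal $\infty$.
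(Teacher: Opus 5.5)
The paper gives no proof of this lemma; it cites Lattimore and Szepesv\'ari. Your argument is the standard divergence-decomposition proof found there: write the joint density of the history, cancel the query-rule factors in the likelihood ratio, and apply the tower property with $y_t\sim P_{i_t}$ given the past and $i_t$. The paper only needs deterministic algorithms, since it first reduces to deterministic strategies. For those, the history is a function of $Y$, and your argument delivers the stated identity. Two of your side claims are wrong, however, and the first leaves a case uncovered.

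First, $D_{\rm KL}(P_i\|P'_i)=\infty$ does not imply $P_i\not\ll P'_i$. A standard Cauchy $P_i$ against $P'_i=N(0,1)$ has infinite divergence, yet both densities are positive everywhere. So your ``degenerate case'' misses absolute continuity with infinite divergence, and your final paragraph's appeal to it fails there.

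The fix is the elementary bound $\EE_P\big[(\log\frac{dP}{dQ})^-\big]\le 1/e$, which follows from $x\log(1/x)\le 1/e$ on $(0,1)$. Apply it conditionally on $(U,H_{t-1},i_t)$: each summand $\log\frac{p_{i_t}(y_t)}{p'_{i_t}(y_t)}$ then has integrable negative part. In particular, contrary to your remark, each summand is integrable whenever the relevant divergences are finite. Hence the expectation of the finite sum equals the sum of the expectations in $(-\infty,\infty]$, and the tower property gives the identity in $[0,\infty]$, infinite divergences included.

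Second, your parenthetical claim of equality for the law of the outcomes alone, whenever the indices are recoverable from $(Y,U)$, is false. Take $n=1$, $k=2$, a fair coin $U$ selecting the arm, $P_1=P'_2=N(1,1)$ and $P_2=P'_1=N(0,1)$. Both outcome laws are then the equal mixture of $N(0,1)$ and $N(1,1)$, so the left side is $0$. The right side is $\frac12\cdot\frac12+\frac12\cdot\frac12=\frac12$.

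Equality does hold for the law of $(U,Y)$ or of the full history. It holds for $Y$ alone when the indices are functions of $Y$. The same example shows that the lemma as literally stated needs this restriction for randomized algorithms; the paper's reduction to deterministic strategies is what makes its use legitimate.
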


\section{Proof of \cref{thm:alloc_upper} (Basic Upper Bound)} \label{sec:alloc_proof_upper}

    Recall that \cref{thm:alloc_upper} states that Algorithm \ref{algo:rep} succeeds with probability at least $1-\delta$ when $q = m \lceil 32 \sigma^2\log(4m/\delta)\cdot m^2/\Delta^2 \rceil$.  
    Since this algorithm consists of sampling each item repeatedly, we make use of a standard Chernoff-type bound for Gaussian random variables (see \cref{lem:chernoff} in Appendix \ref{sec:concentration}).  Specifically, for any $\delta\in(0,1)$, letting $\delta_0=\frac{\delta}{2m}$, \cref{lem:chernoff} implies that for any fixed $(\nu, i)$ pair, after performing $\tau = \frac{q}{m}$ queries, it holds that
    \begin{align}
        \Pr\Bigg[\Big|\muhat^\nu_i-\mu^\nu_i\Big|\ge\sqrt{\frac{2\sigma^2\log(2/\delta_0)}{\tau}}\Bigg]\le\delta_0,
    \end{align}
    where $\muhat^\nu_i = \frac{1}{\tau}\sum_{t=1}^{\tau} y^\nu_{i,t}$, and $\mu_i^{\nu}$ is the true utility.  
    By a union bound over $i=1,\dotsc,m$ and the triangle inequality, defining $\epsilon_S=|S|\sqrt{\frac{2\sigma^2\log(2/\delta_0)}{\tau}}$ and writing $\muhat^\nu_S=\sum_{i\in S}\muhat^\nu_i$ and $\mu^\nu_S=\sum_{i\in S}\mu^\nu_i$, we have
    \begin{align}
        \Pr\Bigg[ \bigcup_{\nu \in \{a,b\}, S \subseteq [m]} \left(\Big|\muhat^\nu_S-\mu^\nu_S\Big|\ge\epsilon_S \right) \Bigg]
        & \le \sum_{\nu\in\{a,b\}}\sum_{i=1}^m \Pr\Bigg[\Big|\muhat^\nu_i-\mu^\nu_i\Big|\ge\sqrt{\frac{2\sigma^2\log(2/\delta_0)}{t}}\Bigg] \\
        & \le 2m\delta_0 \\
        & =\delta.
    \end{align}
    Hence, with probability at least $1-\delta$, the following holds for all $\nu\in\{a,b\}$ and $S\subseteq [m]$: 
    \begin{align}
        |\muhat^\nu_S-\mu^\nu_S|\le\epsilon_S. \label{eq:conf_bound_S}
    \end{align}
    We proceed conditioned on this being true.
    
    Next, letting $\Ac$ be the allocation output by the algorithm, we have
    \begin{align}
        -\mathrm{Envy}_{a\to b}(\Ac)
        &=\mu^a_{\Ac_a}-\mu^a_{\Ac_b}
        \ge (\muhat^a_{\Ac_a}-\epsilon_{\Ac_a})-(\muhat^a_{\Ac_b}+\epsilon_{\Ac_b}), \label{eq:neg_bound_1} \\
        -\mathrm{Envy}_{b\to a}(\Ac)
        &=\mu^b_{\Ac_b}-\mu^b_{\Ac_a}
        \ge (\muhat^b_{\Ac_b}-\epsilon_{\Ac_b})-(\muhat^b_{\Ac_a}+\epsilon_{\Ac_a}), \label{eq:neg_bound_2}
    \end{align}
    where the inequality follows from \eqref{eq:conf_bound_S}.   Combining these inequalities, we have 
    \begin{align}
        -\mathrm{Envy}(\Ac)
        & =\min\{-\mathrm{Envy}_{a\to b}(\Ac), -\mathrm{Envy}_{b\to a}(\Ac)\} \\ 
        & \ge \min\{ \muhat^a_{\Ac_a}-\muhat^a_{\Ac_b}, \muhat^b_{\Ac_b}-\muhat^b_{\Ac_a} \} - (\epsilon_{\Ac_a} + \epsilon_{\Ac_b}) \\ 
        &\ge \min\{ \muhat^a_{\Ac^*_a}-\muhat^a_{\Ac^*_b}, \muhat^b_{\Ac^*_b}-\muhat^b_{\Ac^*_a} \} - (\epsilon_{\Ac_a} + \epsilon_{\Ac_b}), \label{eq:apply_alg_def}
    \end{align}
    where the last step holds with $\Ac^*$ being the (unknown) optimal allocation, due to the fact that the algorithm chooses $\Ac$ to maximize $\min\{ \muhat^a_{\Ac_a}-\muhat^a_{\Ac_b}, \muhat^b_{\Ac_b}-\muhat^b_{\Ac_a} \}$.  By bounding the estimates in terms of the true values in the same way as \eqref{eq:neg_bound_1} (but in the opposite direction), we can further weaken \eqref{eq:apply_alg_def} to 
    \begin{align}
        -\mathrm{Envy}(\Ac) 
        &\ge \min\{ \mu^a_{\Ac^*_a}-\mu^a_{\Ac^*_b}, \mu^b_{\Ac^*_b}-\mu^b_{\Ac^*_a} \} - (\epsilon_{\Ac_a} + \epsilon_{\Ac_b} +\epsilon_{\Ac^*_a} + \epsilon_{\Ac^*_b} ) \\
        &\ge \Delta - 4\epsilon_{\max},
    \end{align}
    where we recall that the optimal allocation has negative envy at least $\Delta$, and we define $\epsilon_{\max}$ as the largest possible $\epsilon_S$ value, namely, the one that would correspond to a set of size $|S|=m$.  Thus, we have established that the allocation is envy-free provided that $\epsilon_{\max} \le \frac{\Delta}{4}$.
    
    Let $\epsilon_0=\sqrt{\frac{2\sigma^2\log(2/\delta_0)}{\tau}}$ denote the approximation error for any single item after performing $\tau$ queries.  To guarantee $\epsilon_{\max} \le \frac{\Delta}{4}$, it suffices to have $\epsilon_0\le\frac{\Delta}{4m}$, i.e.,
    \begin{align*}
        \epsilon_0=\sqrt{\frac{2\sigma^2\log(2/\delta_0)}{\tau}} \le \frac{\Delta}{4m}.
    \end{align*}
    Solving for $\tau$, we obtain that it suffices to set $\tau = \lceil 32\sigma^2\log(4m/\delta)\cdot m^2/\Delta^2 \rceil$.  Since we query each of the $m$ items $\tau$ times, the total number of queries is thus $m\lceil 32\sigma^2\log(4m/\delta)\cdot m^2/\Delta^2 \rceil$.

\section{Proof of Theorem \ref{thm:upper_main} (Upper Bound)} \label{sec:pf_upper}

\subsection{Roadmap of the Proof} \label{sec:roadmap}

We will establish Theorem \ref{thm:upper_main} by handling various cases separately, and along the way we will state more general results that hold for general choices of $\sigma$, allowing $\sigma = o(1)$ and $\sigma = \omega(1)$.  The appendix is outlined as follows:
\begin{itemize}
    \item In Section \ref{sec:sigma1}, we study the regime in which there are at least as many queries as items (i.e., $q \ge m$), and to simplify the exposition we first focus on $\sigma = 1$.  This part lays the main foundations for the subsequent parts.
    \item In Section \ref{sec:state_gen_sigma}, we provide a straightforward extension from $\sigma=1$ to general values of $\sigma$ when $q \ge m$.
    \item In Section \ref{sec:small_q}, we handle the case that $q < m$, this time turning immediately to general choices of $\sigma$.
\end{itemize}
Theorem \ref{thm:upper_main} will then follow by combining Theorem \ref{thm:sigma_large_n} from Section \ref{sec:state_gen_sigma} (for $q \ge m$) with Corollary \ref{cor:delta_large} from Section \ref{sec:small_q} (for $q < m$).  Recall that Section \ref{sec:general_sigma_disc} discusses the case of general $\sigma$ in more detail, including comparing the upper and lower bounds.

\subsection{The Case $q \ge m$ and $\sigma = 1$} \label{sec:sigma1}

We first focus on the case that $\Delta$ grows sufficiently slowly with respect to $m$ such that $q \ge m$, i.e., there are enough queries to sample every item at least once.  To reduce notation, we first focus on the case that $\sigma^2 = 1$; the value $1$ could be replaced by any constant value, and this would only impact constant terms throughout the analysis.  The case of general $\sigma$ (including $\sigma^2 = o(1)$ and $\sigma^2 = \omega(1)$ as $m \to \infty$) is deferred to Section \ref{sec:state_gen_sigma}.

Formally, we will first prove the following.

\begin{thm}
    In the case that $\sigma^2 = 1$, $\Delta \geq m^{1/4}\log^2 m$, and $\Delta = o\big( \frac{m}{\log m} \big)$,\footnote{This is a mild condition in view of the fact that $\Delta \le m$, and more importantly, this result will only be used to establish Theorem \ref{thm:upper_main} in the regime $\Delta = \Otil(m^{3/4})$.  The regime of larger $\Delta$ will be handled via Theorem \ref{thm:sigma} below.} there exists a polynomial-time algorithm that outputs an envy-free allocation with probability $1 - o(1)$ using the following number of queries:
    \begin{equation}
        q = m \bigg\lceil 15 \frac{m^{3/2}}{\Delta^2}\log m + \log^2 m \bigg\rceil.
        \label{eq:n}
    \end{equation}
    Moreover, these queries can be taken non-adaptively.
    \label{thm:main}
\end{thm}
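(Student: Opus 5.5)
We analyze the thresholding rule \eqref{eq:strategy_main_body} with $\tau = q/m$ samples per item, so that $v^\nu_i \sim N(u^\nu_i, 1/\tau)$ independently. The threshold $c$ is computed from the data. For fixed $c$, item $i$ goes to $a$ with probability $p_i(c) = \Phi\big(\sqrt{\tau}(c u^a_i - u^b_i)/\sqrt{c^2+1}\big)$, where $\Phi$ is the standard normal CDF. The expected true negative envies are then
\[
E_a(c) = \sum_i u^a_i(2p_i(c)-1), \qquad E_b(c) = \sum_i u^b_i(1-2p_i(c)).
\]

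\textbf{Step 1 (expected weighted negative envy).} We compare with the optimal allocation $\Ac^*$. Note that $\sum_i (c u^a_i - u^b_i)(2p_i - 1)$ is the $c$-weighted combination $cE_a + E_b$. The noiseless rule $\mathbf{1}\{cu^a_i>u^b_i\}$ maximizes $\sum_i (cu^a_i-u^b_i)(2x_i-1)$ over $x\in\{0,1\}^m$. It therefore achieves at least $(1+c)\Delta$, since $\Ac^*$ gives $c\cdot\Delta+\Delta$.

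The noisy rule loses, per item, $|d_i|\,\PP[\text{wrong side}]$ with $d_i = cu^a_i-u^b_i$. This is at most $\max_d |d|\,\Phi(-\sqrt{\tau}|d|/\sqrt{1+c^2}) = O(\sqrt{(1+c^2)/\tau})$. Summed over items this gives a loss of $O(m/\sqrt{\tau})$.

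This is too weak alone, since it would need $\tau \gtrsim m^2/\Delta^2$, i.e.\ $q\gtrsim m^3/\Delta^2$. The gain must instead come from fluctuations: the realized envy concentrates around its mean with deviation only $O(\sqrt{m})$. The needed comparison is therefore against the expectation, and the key is the sharper claim that $cE_a(c)+E_b(c) \ge (1+c)\Delta - O(m/\sqrt\tau)$. With $\tau \approx m^{3/2}\log m/\Delta^2$, the loss is $m/\sqrt\tau \approx \Delta m^{1/4}/\sqrt{\log m}$, which is too large. So I would refine the loss bound using the fact that the realized sum $\sum_i d_i(2x_i-1)$ behaves as a sum of independent terms. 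The deficit relative to the noiseless optimum only needs to be compared with $\Delta$ after a suitable normalization. I expect the correct accounting to show that the loss is $O(\sqrt{m}\cdot\sqrt{m/\tau}\cdot\text{polylog})$, from averaging of errors across items, and this is what yields the $m^{2.5}/\Delta^2$ scaling. Deriving this carefully is the first technical core.

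\textbf{Step 2 (balancing via choice of $c$).} As $c$ ranges from $0$ to $\infty$, the estimated envies $\hat E_a(c)=\sum_{i\in\Ac_a}v^a_i-\sum_{i\in\Ac_b}v^a_i$ and $\hat E_b(c)$ move monotonically in opposite directions. At $c\to0$ all items go to $b$, and at $c\to\infty$ all go to $a$. The algorithm sorts items by $v^b_i/v^a_i$ and picks the breakpoint $c$ where $\hat E_a$ and $\hat E_b$ are balanced in the ratio that makes both at least their share of the weighted sum. Each jump changes the envies by at most $\max_i|v^\nu_i| = O(1+\sqrt{\log m/\tau})$, so the balancing is exact up to $O(1)$. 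Computing this takes polynomial time.

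\textbf{Step 3 (estimated versus true).} I need uniform control over all $c$ of $|\hat E_\nu(c)-E^{\mathrm{real}}_\nu(c)|$ and of the realized versus expected envies. Because $c$ is data-dependent, I would discretize $c$ into a polynomial-size grid (the allocation only changes at $m$ breakpoints) and use a union bound with Bernstein (\cref{lem:bernstein}) and Gaussian tails (\cref{lem:chernoff}). This costs $\log m$ factors, giving deviations $O(\sqrt{m\log m/\tau}+\sqrt{m\log m})$.

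The subtle point, and the main obstacle, is the ``smoothness'' term in Remark \ref{rem:Delta_assump}. The noise in $v_i$ is correlated with the allocation decision, which creates a bias $\EE[(v^\nu_i-u^\nu_i)\mathbf{1}\{i\in\Ac_a\}]$ of order $1/\sqrt\tau$ per item near the threshold. I would compute this bias explicitly via Gaussian integration by parts: it equals $\frac{\partial}{\partial u}$ of the assignment probability divided by $\tau$. I would then show it sums to $O(m/\sqrt q)$ only when $q\ll m^2$, using the assumption $\Delta\ge m^{1/4}\log^2 m$.

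Finally, combining Steps 1 through 3 with $q$ as in \eqref{eq:n} gives both true envies at most $0$ with probability $1-o(1)$.
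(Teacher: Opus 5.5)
Your Step~1 cannot be completed. The ``sharper claim'' is that $cE_a(c)+E_b(c)$ equals $(1+c)\Delta$ minus a loss of lower order than $\Delta$. This is false at the target query count. The expected loss relative to the noiseless rule is $\sum_i 2|d_i|\,\mathsf{Q}(|z_i|)$ with $z_i=\sqrt{\tau/(1+c^2)}\,d_i$, which is a plain sum of expectations, so there is no averaging across items to exploit. (Your proposed refinement $\sqrt m\cdot\sqrt{m/\tau}$ is just $m/\sqrt\tau$ again.) Concretely, take half the items with $(u^a_i,u^b_i)=(\frac12+\epsilon,\frac12-\epsilon)$ and half with the reverse, $\epsilon=\Delta/m$, $c=1$, and $\tau\asymp m^{3/2}\log m/\Delta^2$. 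Then $|z_i|\asymp m^{-1/4}\sqrt{\log m}\to 0$, so each item lands on the correct side with probability only $\frac12+\Theta(|z_i|)$, and $E_a(1)+E_b(1)=\Theta(\Delta m^{-1/4}\sqrt{\log m})\ll 2\Delta$.

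The missing idea is not to compare with $\Delta$ at all. Combine $z(1-2\mathsf{Q}(z))\ge\min\{0.48z^2,\,0.68|z|\}$, your noiseless bound $\sum_i|d_i|\ge(1+c)\Delta$ (\cref{lem:z_large}), and Cauchy--Schwarz. This shows that $-\EE[f(c)]$, where $f(c)=\frac{ce_a(c)+e_b(c)}{1+c}$, is at least $\frac15 m^{-3/2}q^{1/2}\Delta^2\asymp\sqrt{\tau}\,\Delta^2/m$ (\cref{lem:z_q} and \cref{lem:f_c}). That is far below $\Delta$, but it suffices because every error term is only $O(\sqrt{m/\tau}\,\log m)=O(\frac{m}{\sqrt q}\log m)$. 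In particular, the realized weighted envy does not fluctuate by $\sqrt m$. Mills' ratio gives $\mathrm{Var}[x_i(c)d_i]=4d_i^2\mathsf{Q}(z_i)(1-\mathsf{Q}(z_i))=O((1+c^2)/\tau)$, so Bernstein gives deviations $O(\sqrt{m/\tau}\log m)$. With the $\sqrt{m\log m}$ term you list in Step~3, beating the $\sqrt{\tau}\Delta^2/m$ signal would require $q\gtrsim m^4\log m/\Delta^4$. That exceeds $m^{2.5}/\Delta^2$ throughout $\Delta<m^{3/4}$, which is exactly where this theorem is needed.

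Step~3 has a second genuine gap: the per-agent selection bias is not $O(m/\sqrt q)$, whatever the relation between $q$ and $m^2$. Gaussian integration by parts gives $\EE[x_i(c)(v^a_i-u^a_i)]=\frac{2c}{\sqrt{\tau(1+c^2)}}\varphi(z_i)$ and $\EE[x_i(c)(v^b_i-u^b_i)]=-\frac{2}{\sqrt{\tau(1+c^2)}}\varphi(z_i)$, with $\varphi$ the standard normal density. In the instance above these sum to $\Theta(m/\sqrt\tau)$. That is a factor $\sqrt m$ above $m/\sqrt q=\sqrt{m/\tau}$ and far above the signal. The estimated envies are systematically optimistic, so uniform control of $|\hat E_\nu(c)-E^{\mathrm{real}}_\nu(c)|$ for each agent separately is hopeless.

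The paper never needs that control. It writes $e_a=\frac{1+c}{1+c^2}\big(cf(c)+\frac{e_a-ce_b}{1+c}\big)$ and $e_b=\frac{1+c}{1+c^2}\big(f(c)-c\,\frac{e_a-ce_b}{1+c}\big)$, where $f$ involves only true utilities. So the only quantity that must be compared with observables is $(e_a-e'_a)-c(e_b-e'_b)=\sum_i x_i(c)\big[(v^a_i-u^a_i)+c(v^b_i-u^b_i)\big]$. Since $v^a_i$ and $v^b_i$ have equal variance, $(v^a_i-u^a_i)+c(v^b_i-u^b_i)$ is uncorrelated with, hence independent of, the decision statistic $cv^a_i-v^b_i$. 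The two biases therefore cancel exactly, and $g(c)$ is a zero-mean Gaussian with standard deviation at most $\frac{m}{\sqrt q}$ (\cref{lem:g}).

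Correspondingly, the assumption $\Delta\ge m^{1/4}\log^2 m$ (i.e.\ $q\ll m^2$) does not enter through any bias. It enters in Bernstein and in your Step~2: the jump of $\frac{e'_a(c)-ce'_b(c)}{1+c}$ between neighbouring values of $c$ must be smaller than the slack $\frac{m}{\sqrt q}\log m$ allowed in \cref{lem:h}. Your breakpoint-based balancing is fine in spirit. However, the bounds on $f$ and $g$ hold for fixed $c$ and are union-bounded, so $c$ must range over a fixed polynomial grid rather than data-dependent breakpoints. That is why the paper uses $C$ together with \cref{lem:consec_pair}, which gives jumps of $O(\log m)$.
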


\subsubsection{Proof of Theorem \ref{thm:main} ($q \ge m$ and $\sigma=1$)} \label{sec:pf_from_lemmas}

Recall that $u_i^a, u_i^b$ denote the true utilities, and $v_i^a, v_i^b$ denote the corresponding estimated utilities, namely
\begin{equation}
    v_i^{\nu} =  \frac{1}{\tau}\sum_{t=1}^{\tau} y^\nu_{i,t}, \quad \nu \in \{a,b\} \label{eq:vi_repeated}
\end{equation}
for $\tau = \frac{q}{m}$ independent observations $y^\nu_{i,t} \sim N(\mu^\nu_i,\sigma^2)$.  Note that \eqref{eq:n} ensures that $q$ is a multiple of $m$.  Substituting $\sigma^2 = 1$ and using the fact that averaging $\tau$ i.i.d.~Gaussians reduces the variance by a factor of $\tau$, we find that  $v_i^{\nu} \sim N(u_i^{\nu}, m/q)$.  We also note from \eqref{eq:n} (and $\lceil x \rceil \ge x$) that
\begin{equation}
    \frac{q}{m} \ge 15 \frac{m^{3/2}}{\Delta^2}\log m + \log^2 m.
\end{equation}
Let $c>0$ be a real constant (to be chosen later), and consider the following allocation rule:
\begin{equation}
    \text{Assign item $i$ to Agent }
    \begin{cases}
        a & \text{if } cv_i^a - v_i^b > 0\\
        b & \text{otherwise}.
    \end{cases} \label{eq:alloc_rule}
\end{equation}

We see that the items more valuable to $a$ and less valuable to $b$ are assigned to $a$, and vice versa.  This is done in a simple item-by-item manner, without any consideration for their ``joint'' behavior. By adjusting the constant $c$, we can adjust the ``balance'' of how many items are assigned to $a$ vs.~$b$. As $c\rightarrow 0$, $b$ gets all items with positive $v_i^b$, and the effect of $v_i^a$ is diminished. Similarly, as $c\rightarrow \infty$, $a$ gets all items with positive $v_i^a$, and the effect of $v_i^b$ is diminished.

The analysis boils down to three main lemmas, which are stated below and summarized as follows:
\begin{itemize}
    \item Lemma \ref{lem:f} states that a thresholding-based allocation captures ``enough negative envy'' in total.
    \item Lemma \ref{lem:h} states that we can make a ``sufficiently balanced'' allocation with respect to the observed (rather than true) valuations.
    \item Lemma \ref{lem:g} establishes concentration behavior sufficient to ensure that allocations that are (sufficiently) balanced with respect to the observed valuations are also balanced with respect to the true valuations.
\end{itemize}
We now proceed with formal statements.  Let $e_a(c), e_b(c)$ be the envy for Agent $a$ and $b$ respectively under this allocation with respect to the true values $u_i^a, u_i^b$.  Let $e'_a(c), e'_b(c)$ be defined similarly to $e_a(c)$ and $e_b(c)$, except that the estimates $v_i^a, v_i^b$ are used instead.  

Mapping $\{a,b\}$ to $\{1,-1\}$, we define the following variable indicating the allocation of item $i$:
\begin{equation}
    x_i(c) = 
    \begin{cases}
        1 & {cv_i^a > v_i^b}\\
        -1 & {cv_i^a \leq v_i^b},
    \end{cases} \label{eq:def_x}
\end{equation}
which implies that
\begin{align}
    e_a(c) = -\sum_i x_i(c)u_i^a, \quad e_b(c) = \sum_i x_i(c) u_i^b, \label{eq:def_e} \\
    e'_a(c) = -\sum_i x_i(c)v_i^a, \quad e'_b(c) = \sum_i x_i(c) v_i^b.  \label{eq:def_e'}
\end{align}
To avoid having to consider a continuum of $c$ values, we will confine $c$ to the following discrete set:
\begin{equation}
    C = \left\{ \frac{k}{m^3} \,\middle|\, k = 1,2,\ldots, m^6\right\}. \label{eq:def_C}
\end{equation}
We now formally state the three lemmas outlined above.

\begin{lem}
    Under the setup of Theorem \ref{thm:main}, let
    \begin{equation}
        f(c) = \frac{ce_a(c) + e_b(c)}{1+c}.
        \label{eq:def_f}
    \end{equation}
    Then with probability $1-o(1)$, we have for all $c \in C$ that $f(c) \leq -\frac1{5} m^{-3/2}q^{1/2}\Delta^2 + \frac{m}{\sqrt q}\log m$.
    \label{lem:f}
\end{lem}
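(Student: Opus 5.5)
The plan is to rewrite $f(c)$ as a sum of independent per-item terms and then split it into a mean part and a fluctuation part. By \eqref{eq:def_e},
\begin{equation*}
(1+c)f(c) = c e_a(c) + e_b(c) = -\sum_i x_i(c)\, w_i(c), \qquad w_i(c) := c\,u_i^a - u_i^b.
\end{equation*}
Since $v_i^\nu \sim N(u_i^\nu, s^2)$ with $s^2 = m/q$, independently across $i$ and $\nu$, the variable $c v_i^a - v_i^b$ is $N(w_i, s_c^2)$ with $s_c^2 = (1+c^2)s^2$. Hence $x_i(c)$ is a sign variable with
\begin{equation*}
\EE[x_i(c)] = 2\Phi(w_i/s_c) - 1 .
\end{equation*}
The mean of $-(1+c)f(c)$ is therefore $\sum_i g(w_i)$, where $g(w) = w\,(2\Phi(w/s_c)-1) \ge 0$. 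I would show that this mean is large for every fixed $c$ (Step 1), that the sum concentrates around it (Step 2), and then take a union bound over the $m^6$ values in $C$.

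\textbf{Step 1 (mean).} Let $\Ac^*$ be an optimal allocation with allocation signs $x_i^*$. Since $\mathrm{OptEnvy} \le -\Delta$, taking the combination $c\cdot(a\text{-envy}) + (b\text{-envy})$ gives $\sum_i x_i^* w_i \ge (1+c)\Delta$. It follows that $\sum_i |w_i| \ge (1+c)\Delta$.
\begin{itemize}
\item I would lower bound $g$ by an explicit convex minorant $h(|w|)$. On $|w| \le s_c$ take $h(w) = 2\phi(1)\, w^2/s_c$, using $2\Phi(t) - 1 \ge 2\phi(1)\,t$ for $t\in[0,1]$. Beyond $s_c$, continue $h$ linearly.
\item Jensen's inequality then gives $\sum_i g(w_i) \ge m\, h\big((1+c)\Delta/m\big)$.
\item In the quadratic branch this is at least $2\phi(1)(1+c)^2\Delta^2/(m\, s\sqrt{1+c^2})$. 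Using $(1+c)/\sqrt{1+c^2} \ge 1$ and $s = \sqrt{m/q}$, this is at least $(1+c)\cdot 0.48\, m^{-3/2} q^{1/2}\Delta^2$.
\item The linear branch only helps. I would check that in that case the bound is still at least $(1+c)\tfrac15 m^{-3/2}q^{1/2}\Delta^2$ after adjusting constants; this is where the assumption $\Delta = o(m/\log m)$ may be used.
\end{itemize}
Dividing by $1+c$ yields $\EE f(c) \le -\tfrac15 m^{-3/2}q^{1/2}\Delta^2$, with room to spare in the constant.

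\textbf{Step 2 (concentration).} The summands $x_i w_i$ are independent and bounded by $|w_i| \le 1+c$. The key point is that their variance is small: $\mathrm{Var}(x_i w_i) = w_i^2\big(1 - (2\Phi(w_i/s_c)-1)^2\big) \le 4 w_i^2\, \Phi(-|w_i|/s_c)$. The function $t^2\Phi(-t)$ is bounded by an absolute constant, so each variance is $O(s_c^2) = O((1+c)^2 m/q)$. Summing over $i$ gives total variance $O((1+c)^2 m^2/q)$, i.e.\ standard deviation $O((1+c)\,m/\sqrt q)$.

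Bernstein's inequality (\cref{lem:bernstein}, used in its non-identical form) then gives a deviation of $O\big((1+c)(m/\sqrt q)\sqrt{\log m} + (1+c)\log m\big)$ with probability $1 - m^{-10}$. The second term is dominated because $q \le m^2$ in the relevant regime. After dividing by $1+c$, the deviation is at most $\tfrac{m}{\sqrt q}\log m$, and a union bound over the $m^6$ values in $C$ costs only the $\log m$ factor. The remaining worry is the case where $c$ is extreme ($c$ near $m^{-3}$ or $m^3$); the $(1+c)$ normalisation is exactly what makes the bounds uniform in $c$.

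\textbf{Main obstacle.} I expect Step 1 to be the delicate part. The issue is obtaining the $\Delta^2$ behaviour with a clean constant uniformly over $c$ and over the two branches of $h$. A naive bound such as $|w_i|\cdot\Pr[\text{correct}]$ only gives order $\Delta$. The quadratic gain comes from the fact that nearly-tied items ($|w_i| \lesssim s_c$) still contribute $w_i^2/s_c$, and Jensen's inequality is used to spread the total mass $(1+c)\Delta$ evenly across these items.
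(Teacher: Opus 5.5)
Your proposal is correct and follows the paper's proof essentially step for step. The shared steps are: the bound $\sum_i|cu_i^a-u_i^b|\ge(1+c)\Delta$ from an optimal allocation, the exact Gaussian formula for $\EE[x_i(c)]$, a per-item variance bound of order $(1+c^2)m/q$ via a Mills-type estimate, Bernstein's inequality, and a union bound over $C$. The assumption $\Delta=o(m/\log m)$ enters exactly where you say, namely to ensure $q\le\frac12 m^3/\Delta^2$ so that the linear regime does not spoil the $\Delta^2$ bound.

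The only real difference is in the mean bound. The paper splits items into $|z_i|\le1$ and $|z_i|>1$ and applies Cauchy--Schwarz to the first group. You instead use a convex minorant plus Jensen, which is slightly cleaner and gives a better constant ($0.48$ versus $0.21$). This needs one small check: take the linear continuation to be the tangent at $s_c$ (slope $4\phi(1)$), which is required for convexity and does stay below $w(2\Phi(w/s_c)-1)$.
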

\begin{lem}
    Under the setup of Theorem \ref{thm:main}, let
    \begin{equation}
        g(c) = \frac{1}{1+c}(e_a(c) - e'_a(c) - ce_b(c) + ce'_b(c)). \label{eq:def_g}
    \end{equation}
    Then with probability $1-o(1)$, we have for all $c \in C$ that $|g(c)| \leq \frac{m}{\sqrt q} \log m$.
    \label{lem:g}
\end{lem}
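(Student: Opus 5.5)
The target is Lemma~\ref{lem:g}. Write $v_i^\nu = u_i^\nu + z_i^\nu$ with $z_i^\nu \sim N(0, m/q)$ independent. From \eqref{eq:def_e} and \eqref{eq:def_e'} we get $e_a(c) - e'_a(c) = \sum_i x_i(c) z_i^a$ and $e'_b(c) - e_b(c) = \sum_i x_i(c) z_i^b$. Hence
\begin{equation*}
g(c) = \frac{1}{1+c}\sum_i x_i(c)\big(z_i^a + c z_i^b\big).
\end{equation*}
Fix $c \in C$. The summands are independent across $i$, since each depends only on $(z_i^a, z_i^b)$. Each summand is bounded in magnitude by $|z_i^a + c z_i^b|/(1+c)$, which is Gaussian with standard deviation $\sqrt{m/q}\,\sqrt{1+c^2}/(1+c) \le \sqrt{m/q}$. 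So if the summands were centered, a sub-Gaussian (Hoeffding/Chernoff-type) bound would give $|g(c)| \le O(\sqrt{m}\cdot\sqrt{m/q}\,\sqrt{\log m}) = O\big(\tfrac{m}{\sqrt q}\sqrt{\log m}\big)$ with probability $1 - m^{-\omega(1)}$. A union bound over the $m^6$ elements of $C$ would then finish the proof with room to spare, since we only need $\log m$ rather than $\sqrt{\log m}$.

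The main obstacle is that the summands are \emph{not} centered. The sign $x_i(c)$ is determined by $cv_i^a - v_i^b$, which is correlated with $z_i^a + c z_i^b$. I would decompose $z_i^a + c z_i^b$ into its component along $w_i := c z_i^a - z_i^b$ (the noise entering the threshold) and an orthogonal component. Since $\mathrm{Cov}(z_i^a + c z_i^b,\, c z_i^a - z_i^b) = (m/q)(c - c) = 0$ and the variables are jointly Gaussian, $z_i^a + c z_i^b$ is in fact \emph{independent} of $w_i$. This is the key use of Gaussianity.

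Because $x_i(c)$ is a function of $c u_i^a - u_i^b + w_i$ only, $x_i(c)$ is independent of $z_i^a + c z_i^b$. Therefore $\mathbb{E}[x_i(c)(z_i^a + c z_i^b)] = 0$, and conditionally on the signs $\{x_i(c)\}$ the sum is exactly $N\big(0,\, m\cdot\tfrac{m}{q}\cdot\tfrac{1+c^2}{(1+c)^2}\big)$. The Gaussian tail (Lemma~\ref{lem:chernoff} with $T=1$) then gives
\begin{equation*}
\Pr\Big[|g(c)| > \tfrac{m}{\sqrt q}\log m\Big] \le 2\exp\big(-\tfrac{1}{2}\log^2 m\big).
\end{equation*}

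Multiplying by $|C| = m^6$ still gives $o(1)$, since $e^{-\frac12\log^2 m}$ decays faster than any polynomial. This establishes the claim for all $c \in C$ simultaneously. The only step needing care is verifying the orthogonality and independence computation; everything else is routine.
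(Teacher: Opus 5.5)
Your proof is correct and is essentially the paper's argument. You rewrite $g(c)$ as $\frac{1}{1+c}\sum_i x_i(c)(z_i^a + c z_i^b)$, and note that $c z_i^a - z_i^b$ and $z_i^a + c z_i^b$ have zero covariance, hence are independent by joint Gaussianity, so $g(c)$ is exactly a centered Gaussian with standard deviation at most $\frac{m}{\sqrt q}$; a Gaussian tail bound and a union bound over the $m^6$ elements of $C$ then finish it, and your noise variables $z_i^\nu$ merely collide in name with the paper's $z_i(c)$.
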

\begin{lem}
    Under the setup of Theorem \ref{thm:main}, let
    \begin{equation}
        h(c) = \frac1{1+c} (e'_a(c) - ce'_b(c)).  \label{eq:def_h}
    \end{equation}
    Then with probability $1-o(1)$, there exists $c \in C$ such that 
    \begin{equation}
        \Big(-\frac2c + 1\Big)\frac{m}{\sqrt{q}}\log m\leq h(c) \leq (2c-1) \frac{m}{\sqrt{q}}\log m.
        \label{eq:h_bound}
    \end{equation}
    \label{lem:h}
\end{lem}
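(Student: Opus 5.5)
The plan is a discrete intermediate-value argument in $c$ over the grid $C$. Write $L = \frac{m}{\sqrt q}\log m$. Let $U(c) = (2c-1)L$ and $D(c) = (1-\frac{2}{c})L$ be the upper and lower targets in \eqref{eq:h_bound}. I would first condition on a high-probability event $\mathcal{E}$ on which $\max_{i,\nu}|v_i^\nu| \le 2$. This holds with probability $1-o(1)$ because $v_i^\nu \sim N(u_i^\nu, m/q)$ with $m/q \le 1/\log^2 m$ by \eqref{eq:n}, together with a union bound via \cref{lem:chernoff}.

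Next I would check the two endpoints of $C$. At $c_{\min} = m^{-3}$ we have $D(c_{\min}) \approx -2m^3 L$, which is far below $-|h|$ since $|h(c)| \le 4m$ on $\mathcal{E}$; so the lower bound holds trivially there. The plan is then to show that the upper bound fails at $c_{\min}$. Here $h(c_{\min}) = e'_a(c_{\min}) + O(m^{-2})$, and almost every item is assigned to $b$ (item $i$ goes to $a$ only if $v_i^b < m^{-3}v_i^a$). Hence $e'_a(c_{\min}) \approx \sum_i v_i^a$, and I would argue that this is $\Omega(m)$ with high probability, which exceeds $U(c_{\min}) \approx -L$.

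If $\sum_i v_i^a$ happens to be small (e.g.\ $u^a \equiv 0$), this step needs more care. Since $\mathrm{OptEnvy} \le -\Delta$ forces $\sum_i u_i^a \ge 2\Delta$, we get $\sum_i v_i^a \ge 2\Delta - O(\sqrt{m^2/q}\,\log m)$. This exceeds $-L$, and that is all we need. Symmetrically, at $c_{\max} = m^3$ the upper bound holds trivially, and $h(c_{\max}) \approx -e'_b(c_{\max}) \approx -\sum_i v_i^b < D(c_{\max}) \approx L$, using the same argument with $\sum_i u_i^b \ge 2\Delta$.

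Now let $c \in C$ be the smallest grid point with $h(c) \le U(c)$, which exists by the endpoint check. Let $c'=c-m^{-3}$ be its predecessor, so that $h(c') > U(c')$. It remains to show $h(c) \ge D(c)$. I would write $h(c) \ge h(c') - J$, where $J = |h(c)-h(c')|$ is the one-step jump. Then
\[
  U(c') - D(c) \;=\; 2\Big(c'+\frac1c-1\Big)L \;\ge\; 2\Big(c+\frac1c - 1\Big)L - 2m^{-3}L \;\ge\; \big(2 - o(1)\big)L
\]
by AM--GM. So it suffices to prove $J \le L$.

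The jump $J$ has two contributions. The first comes from the change in the weights $\frac{1}{1+c}$ and $\frac{c}{1+c}$ at fixed allocation; this is at most $O(m^{-3}\cdot m)=o(1)$ on $\mathcal{E}$. The second comes from items whose assignment $x_i$ flips between $c'$ and $c$; each such item changes $h$ by at most $2\max(|v_i^a|,|v_i^b|) \le 4$. Since $\Delta \ge m^{1/4}\log^2 m$ gives $q = O(m^{2.5}\log m/\Delta^2) = O(m^2)$ up to logs, we have $L \gtrsim \log m \to \infty$. It therefore suffices that, with probability $1-o(1)$, no grid cell $(c',c]$ contains more than $K$ flipping items, for some constant $K$.

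Bounding the number of flips per grid cell is the main obstacle. Item $i$ flips in $(c',c]$ only if the line $\{cv_i^a = v_i^b\}$ crosses between the two grid points. This means $|c v_i^a - v_i^b| \le m^{-3}|v_i^a| \le 2m^{-3}$. Conditioned on $v_i^a$, the quantity $cv_i^a - v_i^b$ is Gaussian with variance at least $m/q \ge m^{-2}\,\mathrm{polylog}^{-1}$, so its density is at most $\mathrm{poly}(m)\cdot m$. This gives a per-item probability $p \le O(m^{-2+o(1)})$ per cell. By independence across items, the probability that some fixed $K$ items all flip in the same cell is at most $\binom{m}{K}p^K \le m^{K}m^{(-2+o(1))K}$. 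A union bound over the $m^6$ cells then gives failure probability $m^{6-K+o(1)}$, which is $o(1)$ once we take $K = 7$. I would double-check the exact variance lower bound coming from \eqref{eq:n} in the regime $q \ge m$. If it is too weak, I would refine the grid count or take $K$ to be a slightly larger constant; this costs nothing because $L \to \infty$.
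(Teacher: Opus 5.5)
\paragraph{Verdict.} Your discrete intermediate-value argument over $C$ is the paper's argument. You take the smallest grid point where the upper bound holds and use $x+1/x\ge 2$ to show the lower bound can fail there only if $h$ drops by about $L$ across one grid step. You then rule that out by bounding the weight change and the number of flipping items. There is, however, one false step. You justify the existence of the predecessor $c'$ by claiming that the upper bound fails at $c_{\min}=m^{-3}$, because ``almost every item is assigned to $b$, hence $e'_a(c_{\min})\approx\sum_i v_i^a$.''

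\paragraph{Why that step fails.} The claim breaks whenever many items have $u_i^b$ at or near $0$. Such an item goes to $a$ at $c_{\min}$ exactly when $v_i^b<m^{-3}v_i^a$, which has probability about $1/2$. Concretely, take $m/2$ items of type $(u^a,u^b)=(1,0)$, $m/2$ items of type $(0,1)$, and $\Delta=m^{0.6}$; all hypotheses of Theorem~\ref{thm:main} hold. Then $e'_a(c_{\min})$ is approximately $N(0,m/2)$, while $L$ is of order $m^{0.35}\sqrt{\log m}$. So with constant probability $h(c_{\min})<-L\approx U(c_{\min})$, meaning the upper bound holds at $c_{\min}$. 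Two smaller points: the symmetric claim at $c_{\max}$ fails for the same reason, and $\mathrm{OptEnvy}\le-\Delta$ only gives $\sum_i u_i^a\ge\Delta$, not $2\Delta$.

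\paragraph{The fix.} The false claim is not needed, and dropping it is exactly what the paper does. If the smallest grid point satisfying the upper bound is $c_{\min}$ itself, the lower bound holds there trivially, as you already noted: $D(c_{\min})=(1-2m^3)L<-4m\le h(c_{\min})$. So $c=c_{\min}$ works. Otherwise the predecessor exists and your jump argument goes through unchanged. The paper phrases this as a contradiction, observing that $c_2\neq 1/m^3$ because the lower bound holds at $1/m^3$.

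\paragraph{Remaining difference from the paper.} After this repair, the only real difference is your flip count. You bound the number of flipping items per cell by a constant $K=7$, using $\binom{m}{K}(2m^{-2})^K$ and a union bound over the $m^6$ cells. The paper instead uses $\log m$ (Lemma~\ref{lem:consec_pair}). Your version needs only $L\to\infty$ rather than $L=\omega(\log m)$; both hold here since $q=o(m^2)$.
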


Note that since $h(c)$ depends only on the observed valuations (not the true valuations), it is feasible for an algorithm to iterate over each $c \in C$ and check whether \eqref{eq:h_bound} holds.   We let the algorithm use any such $c$ value.

{\bf Proof of Theorem \ref{thm:main} given these lemmas.}  
Suppose that the conclusions of the three lemmas all hold, which is the case with probability $1-o(1)$ by the union bound.  
Let $c$ be such that the conclusion of Lemma~\ref{lem:h} holds. Combining \eqref{eq:h_bound} with Lemma~\ref{lem:g}, we obtain
\begin{equation}
    -\frac2c \frac{m}{\sqrt{q}}\log m \leq \frac{e_a(c)-ce_b(c)}{1+c} = g(c) + h(c) \leq 2c\frac{m}{\sqrt{q}}\log m.
    \label{eq:g_h}
\end{equation}
We then observe that
\begin{align}
    e_a(c) &= \frac{1+c}{1+c^2} \Big(cf(c)+\frac{e_a(c)-ce_b(c)}{1+c}\Big) \label{eq:ea_0_step1} \\
    & \leq \frac{c(1+c)}{1+c^2}\Big( -\frac1{5} m^{-3/2}q^{1/2}\Delta^2 + 3\frac{m}{\sqrt q}\log m \Big) \label{eq:ea_0_step2} \\ 
    &\leq 0,  \label{eq:ea_0_step3}
\end{align}
where:
\begin{itemize}
    \item \eqref{eq:ea_0_step1} follows by applying some simple manipulations via the definition of $f$ in \eqref{eq:def_f};
    \item \eqref{eq:ea_0_step2} follows by bounding $f(c)$ using Lemma \ref{lem:f} and bounding the second term using \eqref{eq:g_h};
    \item \eqref{eq:ea_0_step3} follows by multiplying the bracketed term by $\sqrt{q}$ and then applying $q \ge 15 \frac{m^{5/2}}{\Delta^2}\log m$ (see \eqref{eq:n}).
\end{itemize}
By analogous reasoning, we have
\begin{equation}
    e_b(c) \stackrel{\eqref{eq:def_f}}{=} \frac{1+c}{1+c^2} \left(f(c)-c\cdot \frac{e_a(c)-ce_b(c)}{1+c}\right) \stackrel{\eqref{eq:g_h}}{\leq} \frac{1+c}{1+c^2}\left( -\frac1{5} m^{-3/2}q^{1/2}\Delta^2 + 3\frac{m}{\sqrt q}\log m \right) \stackrel{\eqref{eq:n}}{\leq} 0.
\end{equation}
Thus, we have constructed an envy-free allocation.

\subsubsection{Proof of Lemma \ref{lem:f} (Bound on $f$)}
We start with some technical lemmas.
\begin{lem}
    For all $c > 0$, we have $\sum_i |cu_i^a - u_i^b|\geq (1+c)\Delta$.
    \label{lem:z_large}
\end{lem}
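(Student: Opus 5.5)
Fix $c>0$ and let $\Ac^*$ be an optimal allocation, so that $\mathrm{Envy}(\Ac^*)\le -\Delta$. Encode it by signs $x_i^*\in\{1,-1\}$, with $x_i^*=1$ if item $i$ goes to Agent $a$, exactly as in \eqref{eq:def_x}. By the definitions in \eqref{eq:def_e}, the envies of $\Ac^*$ are $-\sum_i x_i^* u_i^a$ and $\sum_i x_i^* u_i^b$. The assumption $\mathrm{OptEnvy}\le-\Delta$ therefore gives
\[
\sum_i x_i^* u_i^a \ge \Delta, \qquad -\sum_i x_i^* u_i^b \ge \Delta .
\]

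Multiply the first inequality by $c>0$ and add the second. This yields $\sum_i x_i^*(cu_i^a - u_i^b) \ge (1+c)\Delta$. Since $|x_i^*|=1$, each term satisfies $x_i^*(cu_i^a-u_i^b)\le |cu_i^a-u_i^b|$. Summing over $i$ gives $\sum_i |cu_i^a-u_i^b| \ge (1+c)\Delta$, which is the claim.

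No step here is a real obstacle. The only points to check are the sign conventions: that the envy formulas in \eqref{eq:def_e} hold for an arbitrary $\pm1$ assignment and not only for the thresholding rule, and that $c>0$ is used so the weighted combination keeps the direction of both inequalities.
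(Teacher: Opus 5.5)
Your proof is correct and is essentially the paper's argument. The paper also takes an optimal allocation $\Ac^*$, lower bounds $\sum_i |cu_i^a-u_i^b|$ by the signed sum $-c\,\mathrm{Envy}_{a\to b}(\Ac^*)-\mathrm{Envy}_{b\to a}(\Ac^*)$, and then uses $\mathrm{OptEnvy}\le-\Delta$ with $c>0$; your $\pm1$ encoding is just a notational repackaging of the same steps.
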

\begin{proof}
    Letting $\Ac^* = (\Ac^*_a, \Ac^*_b)$ be an optimal allocation, we have
    \begin{align}
        \sum_i |cu_i^a - u_i^b| &\geq \sum_{i \in \Ac^*_a} (cu_i^a - u_i^b) - \sum_{i \in \Ac^*_b} (cu_i^a - u_i^b) \\
        &= -c \cdot {\rm Envy}_{a\rightarrow b}(\Ac^*) - {\rm Envy}_{b\rightarrow a}(\Ac^*) \\ 
        &\geq (1+c)\Delta
    \end{align}    
    as claimed.
\end{proof}
Next, we define the following quantity related to $f$:
\begin{equation}
    z_i(c) = \sqrt{\frac {q}{m(1+c^2)}}(cu_i^a - u_i^b), \label{eq:def_z}
\end{equation}
and let
\begin{equation}
    \mathsf{Q}(z) = \frac{1}{\sqrt{2\pi}} \int_{z}^{\infty} e^{-x^2/2}\ dx \label{eq:def_Q}
\end{equation}
be the upper tail of the standard Gaussian distribution.  The following lemma motivates the definition of $z_i(c)$, and will be used later.

\begin{lem} \label{lem:alloc_prob}
    Under our allocation rule \eqref{eq:alloc_rule} and $N(0,1)$ noise, the probability that Agent $a$ gets item $i$ is $1 - \mathsf{Q}(z_i(c))$.
\end{lem}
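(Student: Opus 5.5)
Since $\sigma^2=1$ and each item is sampled $\tau = q/m$ times, the estimates satisfy $v_i^a \sim N(u_i^a, m/q)$ and $v_i^b \sim N(u_i^b, m/q)$, and they are independent. The plan is to compute the law of the decision statistic $cv_i^a - v_i^b$ exactly and then read off the assignment probability.

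First, by independence and linearity of Gaussians, $cv_i^a - v_i^b$ is Gaussian with mean $cu_i^a - u_i^b$ and variance $(c^2+1)\,m/q$. Hence
\[
W_i := \sqrt{\tfrac{q}{m(1+c^2)}}\,\big(cv_i^a - v_i^b - (cu_i^a - u_i^b)\big) \sim N(0,1).
\]

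Second, under \eqref{eq:alloc_rule}, Agent $a$ gets item $i$ exactly when $cv_i^a - v_i^b > 0$. Multiplying by the positive constant $\sqrt{q/(m(1+c^2))}$ and recalling \eqref{eq:def_z}, this event is $W_i + z_i(c) > 0$, i.e., $W_i > -z_i(c)$. Its probability is therefore $\mathsf{Q}(-z_i(c))$.

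Third, by symmetry of the standard Gaussian, $\mathsf{Q}(-z) = 1-\mathsf{Q}(z)$, which gives the claimed $1-\mathsf{Q}(z_i(c))$. The event $\{cv_i^a = v_i^b\}$, where ties go to $b$, has probability zero and does not affect the result.

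None of these steps is a real obstacle. The only points that need care are the variance scaling $m/q$ coming from averaging $\tau$ samples, and using independence of $v_i^a$ and $v_i^b$ so that their variances add.
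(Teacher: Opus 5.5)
Your proposal is correct and follows essentially the same argument as the paper: show that $cv_i^a - v_i^b$ is Gaussian with mean $cu_i^a - u_i^b$ and variance $(1+c^2)\frac{m}{q}$, standardize to get $\mathsf{Q}(-z_i(c))$, then use $\mathsf{Q}(-z) = 1-\mathsf{Q}(z)$. Your added remark that ties occur with probability zero is a harmless extra detail that the paper leaves implicit.
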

\begin{proof}
    We established following \eqref{eq:vi_repeated} that $v_i^{\nu} \sim N(u_i^{\nu}, m/q)$ for $\nu \in \{a,b\}$, which implies that $cv_i^a - v_i^b$ follows a Gaussian distribution with mean $cu_i^a - u_i^b$ and variance $(1+c^2)\frac{m}{q}$, i.e., standard deviation $\sqrt{\frac{m(1+c^2)}{q}}$.  To obtain the probability of this exceeding 0, we can simply subtract the mean and divide by the standard deviation to simplify to a tail bound for $N(0,1)$, and doing so gives a probability of $\mathsf{Q}(-z_i(c)) = 1 - \mathsf{Q}(z_i(c))$.
\end{proof}

\begin{lem}
    For all $c > 0$ and all sufficiently large $m$, we have
    \begin{equation}
        \sum_i [1-2 \mathsf{Q}(z_i(c))]z_i(c) \geq 0.21 \frac{q}{m^2}\frac{(1+c)^2}{1+c^2}\Delta^2.
    \end{equation}
    \label{lem:z_q}
\end{lem}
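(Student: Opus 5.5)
The plan is to reduce the statement to a one-dimensional convexity argument. Write $\phi(z) = [1-2\mathsf{Q}(z)]\,z$. Since $1-2\mathsf{Q}(z) = \PP[|N(0,1)|\le z]$ for $z \ge 0$ and changes sign with $z$, the function $\phi$ is even and nonnegative, and $\phi(z) = \phi(|z|)$. The left-hand side is therefore $\sum_i \phi(|z_i(c)|)$.

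By Lemma \ref{lem:z_large} and the definition \eqref{eq:def_z},
\begin{equation}
    \sum_i |z_i(c)| \ge S := \sqrt{\frac{q}{m(1+c^2)}}\,(1+c)\Delta .
\end{equation}
The target is exactly $0.21\, S^2/m$, because $S^2/m = \frac{q}{m^2}\frac{(1+c)^2}{1+c^2}\Delta^2$. So it suffices to find an increasing convex minorant $\psi \le \phi$ on $[0,\infty)$ with $\psi(t) \ge 0.21\,t^2$ for small $t$. Jensen's inequality and monotonicity then give
\begin{equation}
    \sum_i \phi(|z_i|) \ge \sum_i \psi(|z_i|) \ge m\,\psi\Big(\frac1m\sum_i|z_i|\Big) \ge m\,\psi(S/m),
\end{equation}
and the remaining requirement is that $S/m$ lies in the region where $\psi(t)\ge 0.21t^2$.

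\textbf{Constructing $\psi$.} On $[0,1]$ the standard normal density is at least $e^{-1/2}/\sqrt{2\pi}$, so $1-2\mathsf{Q}(t) \ge 2t e^{-1/2}/\sqrt{2\pi} \ge 0.48\,t$, giving $\phi(t) \ge 0.48\,t^2$ there. Fix $t_0 = 1/2$. Define $\psi(t) = 0.48\,t^2$ for $t \le t_0$, and let $\psi$ continue as the tangent line $0.48(t_0^2 + 2t_0(t-t_0)) = 0.48(t - 1/4)$ for $t > t_0$. This $\psi$ is convex and increasing.

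It remains to check $\psi \le \phi$ everywhere:
\begin{itemize}
    \item On $[0,1]$, the tangent line lies below the parabola, which lies below $\phi$.
    \item For $t \ge 1$, we have $1-2\mathsf{Q}(t) \ge 1-2\mathsf{Q}(1) > 0.68$, so $\phi(t) \ge 0.68\,t \ge 0.48(t-1/4)$.
\end{itemize}

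\textbf{Checking that $S/m \le 1/2$.} This is the only genuinely delicate step, and it is where the hypotheses of Theorem \ref{thm:main} and "sufficiently large $m$" enter. Using $(1+c)^2/(1+c^2) \le 2$,
\begin{equation}
    (S/m)^2 \le \frac{2q\Delta^2}{m^3}.
\end{equation}
By \eqref{eq:n} we have $q/m \le 15 \frac{m^{3/2}}{\Delta^2}\log m + \log^2 m + 1$, so
\begin{equation}
    \frac{q\Delta^2}{m^3} \le \frac{15\log m}{\sqrt m} + \frac{\Delta^2(\log^2 m+1)}{m^2}.
\end{equation}
The first term tends to $0$ trivially. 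The second tends to $0$ because $\Delta = o(m/\log m)$. Hence $S/m \to 0$ uniformly in $c>0$, and in particular $S/m \le 1/2$ for all large $m$.

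\textbf{Conclusion.} For all large $m$ and all $c>0$, $m\,\psi(S/m) = 0.48\,S^2/m \ge 0.21\,S^2/m$, which is the claimed bound (with room to spare in the constant).
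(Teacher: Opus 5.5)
Your proof is correct. It uses the same two pointwise ingredients as the paper, namely $\phi(t)\ge 0.48t^2$ on $[0,1]$ and $\phi(t)\ge 0.68t$ for $t\ge 1$. It also relies on the same smallness condition, $q\Delta^2/m^3\to 0$, which comes from \eqref{eq:n} together with $\Delta=o(m/\log m)$. The difference is in how these are combined.

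The paper splits the indices into $S_1=\{|z_i|\le 1\}$ and $S_2=\{|z_i|>1\}$, and argues that one of the two sets carries a fixed fraction ($\frac23$ or $\frac13$) of the mass $\sum_i|z_i|$. It then applies Cauchy--Schwarz on $S_1$, or applies the linear bound on $S_2$ and uses $S\le m$ to turn a linear lower bound into the quadratic one. Those lossy fractions are why its constant drops to $0.21$.

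You instead glue the two bounds into a single increasing convex minorant: the parabola, continued by its tangent at $t=\frac12$. You then apply Jensen once. This removes the case analysis, and since Cauchy--Schwarz on $S_1$ is itself Jensen for $t^2$, your argument is the natural unified version of the paper's. It also gives the sharper constant $0.48$.

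As written, you need $S/m\le\frac12$, whereas the paper needs $S/m\le 1$. This makes no difference here, since $S/m\to 0$ uniformly in $c$ under the theorem's hypotheses. In fact your $\psi$ satisfies $\psi(t)\ge 0.21t^2$ on all of $[0,2]$, so for the stated constant your route tolerates an even weaker condition than the paper's.
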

\begin{proof}
    In the proof of this lemma, we will simply write $z_i$ instead of $z_i(c)$, but the proof will hold for all $c$.
    If $z_i>1$, then $1-2\mathsf{Q}(z_i) \geq 1-2\mathsf{Q}(1) > 0.68$, and therefore
    \begin{equation}
        (1-2\mathsf{Q}(z_i))z_i \geq 0.68z_i. \label{eq:Q_LB_1}
    \end{equation}
    On the other hand, if $0\leq z_i \leq 1$, then
    \begin{equation}
        1-2\mathsf{Q}(z_i) = \frac{1}{\sqrt{2\pi}} \int_{-z_i}^{z_i} e^{-x^2/2}\ dx \geq 2 \cdot \frac{1}{\sqrt{2\pi}} \cdot z_i \cdot e^{-1/2} \geq 0.48z_i,
    \end{equation}
    and therefore
    \begin{equation}
        (1-2\mathsf{Q}(z_i))z_i \geq 0.48z_i^2. \label{eq:Q_LB_2}
    \end{equation}
    Since $(1-2\mathsf{Q}(z_i))z_i$ is an even function, we conclude that
    \begin{equation}
        (1-2\mathsf{Q}(z_i))z_i \geq 
        \begin{cases}
            0.68|z_i| & \text{ if } |z_i| > 1 \\
            0.48z_i^2 & \text{ if } |z_i| \leq 1.
        \end{cases}
    \end{equation}
    Now, by Lemma \ref{lem:z_large}, we have
    \begin{equation}
        \sum_i |z_i| = \sum_i \sqrt{\frac{q}{m(1+c^2)}}|cu_i^a - u_i^b| \geq \sqrt\frac{q}{m} \frac{1+c}{\sqrt{1+c^2}} \Delta. \label{eq:sum_abs_z}
    \end{equation}
    We let 
    \begin{equation}
        S_1 = \{ i : \ |z_i| \le 1\},\quad S_2 = \{ i : \ |z_i|>1\}
    \end{equation}
    and establish Lemma \ref{lem:z_q} via the following two cases, at least one of which must hold due to \eqref{eq:sum_abs_z}:
    \begin{itemize}
        \item \emph{(Case 1: $\sum_{i \in S_1} |z_i| \geq \frac23 \sqrt\frac{q}{m} \frac{1+c}{\sqrt{1+c^2}} \Delta$.)} In this case, we have
        \begin{equation}
            \sum_i (1-2\mathsf{Q}(z_i))z_i \geq \sum_{i \in S_1} (1-2\mathsf{Q}(z_i))z_i \geq  \sum_{i \in S_1} 0.48z_i^2 \geq 0.48\frac{(\sum_{i \in S_1} |z_i|)^2}m \geq 0.21 \frac q{m^2} \frac{(1+c)^2}{1+c^2} \Delta^2,
        \end{equation}
        where we used \eqref{eq:Q_LB_2}, 
        the Cauchy--Schwarz inequality,
        and the assumption of this case.
        
        \item \emph{(Case 2: $\sum_{i \in S_2} |z_i| \geq \frac13 \sqrt\frac{q}{m} \frac{1+c}{\sqrt{1+c^2}} \Delta$.)} In this case, we have
        \begin{equation}
            \sum_i (1-2\mathsf{Q}(z_i))z_i \geq \sum_{i \in S_2} (1-2\mathsf{Q}(z_i))z_i \geq 0.68\sum_{i \in S_2} |z_i| \geq 0.21 \sqrt\frac qm \sqrt\frac{(1+c)^2}{1+c^2} \Delta.
        \end{equation}
        It remains to show that
        \begin{equation}
            \sqrt\frac qm \sqrt\frac{(1+c)^2}{1+c^2} \Delta \geq \frac q{m^2} \frac{(1+c)^2}{1+c^2} \Delta^2,
        \end{equation}
        which is equivalent to $q \leq \frac{m^3}{\Delta^2} \frac{1+c^2}{(1+c)^2}$.  Noting that $\frac{1+c^2}{(1+c)^2} \in \big[\frac{1}{2},1\big]$ it suffices to show that 
        \begin{equation}
            q \leq \frac{1}{2}\frac{m^3}{\Delta^2}, \label{eq:n_small}
        \end{equation}
        which we establish via two cases:
        \begin{itemize}
            \item If the first term in \eqref{eq:n} is dominant, then $q = \Theta( \frac{m^{5/2}}{\Delta^2} \log m )$, thus behaving as $o\big( \frac{m^3}{\Delta^2} \big)$.
            \item If the second term in \eqref{eq:n} is dominant, then $q = \Theta(m \log^2 m)$, which again behaves as $o\big( \frac{m^3}{\Delta^2} \big)$ due to the assumption $\Delta = o\big( \frac{m}{\log m} \big)$ in Theorem \ref{thm:main}.
        \end{itemize}
        Thus, in both cases we have $q \leq \frac{1}{2}\frac{m^3}{\Delta^2}$ when $m$ is large enough. \qedhere
    \end{itemize}
\end{proof}

\begin{lem}
    For any $c > 0$, we have with probability $1-m^{-\omega(1)}$ that\footnote{Note that behaving as $m^{-\omega(1)}$ means decaying to zero faster than any polynomial in $m$.}
    \begin{equation}
        f(c) \leq -\frac1{5}m^{-3/2}q^{1/2}\Delta^2 + \frac{m}{\sqrt{q}} \log m. \label{eq:f_c_bound}
    \end{equation}
    \label{lem:f_c}
\end{lem}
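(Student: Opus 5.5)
Fix $c>0$. The plan is to write $f(c)$ as a sum of independent bounded random variables, compute its mean using Lemma~\ref{lem:alloc_prob} and Lemma~\ref{lem:z_q}, and then control the fluctuation with a Bernstein/Hoeffding bound.

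First, by \eqref{eq:def_e} and \eqref{eq:def_f},
\begin{equation}
    f(c) = -\frac{1}{1+c}\sum_i x_i(c)\,(cu_i^a - u_i^b) = -\frac{\sqrt{m(1+c^2)/q}}{1+c}\sum_i x_i(c)\, z_i(c),
\end{equation}
using the definition \eqref{eq:def_z}. The $x_i(c)$ are independent across $i$, since they depend only on the independent estimates $(v_i^a,v_i^b)$. By Lemma~\ref{lem:alloc_prob}, $\Pr[x_i(c)=1] = 1-\mathsf{Q}(z_i(c))$, so $\EE[x_i(c)] = 1-2\mathsf{Q}(z_i(c))$. Hence
\begin{equation}
    \EE[f(c)] = -\frac{\sqrt{m(1+c^2)/q}}{1+c}\sum_i [1-2\mathsf{Q}(z_i(c))]z_i(c).
\end{equation}
Lemma~\ref{lem:z_q} bounds the sum below by $0.21\frac{q}{m^2}\frac{(1+c)^2}{1+c^2}\Delta^2$. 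This gives
\begin{equation}
    \EE[f(c)] \le -0.21\, m^{-3/2} q^{1/2}\,\frac{1+c}{\sqrt{1+c^2}}\,\Delta^2 \le -0.21\, m^{-3/2}q^{1/2}\Delta^2,
\end{equation}
since $\frac{1+c}{\sqrt{1+c^2}}\ge 1$. This is already slightly stronger than the $-\frac15$ coefficient in \eqref{eq:f_c_bound}.

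Next comes the deviation term. Write $f(c)-\EE[f(c)] = -\frac{1}{1+c}\sum_i (x_i(c)-\EE x_i(c))(cu_i^a-u_i^b)$. Each summand is bounded in absolute value by $2|cu_i^a-u_i^b|/(1+c)\le 2\max(c,1)/(1+c)\le 2$, because $u_i^\nu\in[0,1]$. Hoeffding's inequality then gives
\begin{equation}
    \Pr\big[f(c)-\EE f(c) \ge t\big] \le \exp\big(-t^2/(8m)\big).
\end{equation}
Take $t = \frac{m}{\sqrt q}\log m$. Since $q\le \frac12 m^3/\Delta^2$ (shown in the proof of Lemma~\ref{lem:z_q}), we only have $t\ge \sqrt{2}\,\Delta\log m/\sqrt m$, which need not be $\gg \sqrt m$. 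Crude Hoeffding with variance proxy $m$ is therefore too weak, and this is the main obstacle.

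To fix it, I would use Bernstein (Lemma~\ref{lem:bernstein}, in its non-identical form) with the true variances. We have $\mathrm{Var}(x_i)=4\mathsf{Q}(z_i)(1-\mathsf{Q}(z_i))\le 4\mathsf{Q}(|z_i|)$, so the variance of the $i$-th summand is at most $\frac{4(cu_i^a-u_i^b)^2}{(1+c)^2}\mathsf{Q}(|z_i|) = \frac{4m(1+c^2)}{q(1+c)^2}z_i^2\mathsf{Q}(|z_i|)$. The function $z^2\mathsf{Q}(|z|)$ is bounded by an absolute constant, so the total variance is $O(m^2/q)$. Bernstein with $t=\frac{m}{\sqrt q}\log m$ then gives exponent of order $\min\{\log^2 m,\ t/b\}$ with $b\le 2$. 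Here $t = \frac{m}{\sqrt q}\log m \gtrsim \Delta\log m$ by \eqref{eq:n_small}, and this is $\omega(\log m)$ because $\Delta\ge m^{1/4}\log^2 m$. Both exponents are therefore $\omega(\log m)$, and the failure probability is $m^{-\omega(1)}$.

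Combining the mean bound with the deviation bound gives \eqref{eq:f_c_bound}.
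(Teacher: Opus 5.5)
Your route is the same as the paper's. You write $f(c)$ via $x_i(c)z_i(c)$, bound the mean through Lemma~\ref{lem:alloc_prob} and Lemma~\ref{lem:z_q}, and apply Bernstein with total variance of order $m^2/q$ from the boundedness of $z^2\mathsf{Q}(|z|)$. The paper does this explicitly with Mills' inequality, giving $\mathrm{Var}[x_i(c)z_i(c)]\le 1$.

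The gap is in the last step, where you control the linear term $bt/3$ in Bernstein's denominator. You claim $t=\frac{m}{\sqrt q}\log m\gtrsim \Delta\log m$ ``by \eqref{eq:n_small}''. But \eqref{eq:n_small}, i.e.\ $q\le \frac12 m^3/\Delta^2$, only gives $t\ge \sqrt{2}\,\Delta\log m/\sqrt{m}$, which is what you computed yourself a few lines earlier. Combined with $\Delta\ge m^{1/4}\log^2 m$, this yields only $t\ge \sqrt{2}\,m^{-1/4}\log^3 m$, which tends to $0$. In that regime the $t/b$ branch of the Bernstein exponent is $o(1)$, so the tail bound is vacuous. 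As written, the argument does not give failure probability $m^{-\omega(1)}$.

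The missing ingredient is a much stronger upper bound on $q$. It comes from the explicit choice \eqref{eq:n}, not from the generic bound \eqref{eq:n_small} used inside Lemma~\ref{lem:z_q}. Substituting $\Delta\ge m^{1/4}\log^2 m$ into \eqref{eq:n} gives $q\le m\lceil 15m/\log^3 m+\log^2 m\rceil = O(m^2/\log^3 m)$. Hence $t=\frac{m}{\sqrt q}\log m=\Omega(\log^{5/2} m)$, and in particular $t\ge\log^2 m$ for large $m$. This is exactly the fact the paper uses: $t\ge\log^2 m$ gives $\log^2 m\,\bigl(\frac{m^2}{q}+\frac23 t\bigr)\le \frac53 t^2$, so the Bernstein exponent is at least $\frac{3}{10}\log^2 m$. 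With this correction your proof goes through.
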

\begin{proof}
    Recall the $\pm1$-valued allocation indicator $x_i(c)$ from \eqref{eq:def_x}, the weighted difference of utilities $z_i(c)$ from \eqref{eq:def_z}, and $f(\cdot)$ from \eqref{eq:def_f}.
    These definitions imply that
    \begin{equation}
        f(c) = -\frac{\sum_i x_i(c) (cu_i^a - u_i^b)}{1+c} = -\sqrt{\frac {m(1+c^2)}q} \sum_i \frac{x_i(c)z_i(c)}{1+c}. \label{eq:f_init}
    \end{equation}
    By the allocation probability established in Lemma \ref{lem:alloc_prob}, we have
    \begin{equation}
        \e[x_i(c)] = 1-2\mathsf{Q}(z_i). \label{eq:avg_x}
    \end{equation}
    By linearity of expectation and Lemma \ref{lem:z_q}, it follows from \eqref{eq:f_init}--\eqref{eq:avg_x} that
    \begin{equation}
        \e[f(c)] =  -\sqrt{\frac {m(1+c^2)}{q}} \sum_i \frac{1}{1+c} [1-2\mathsf{Q}(z_i(c))]z_i(c) \leq -0.21\frac{1+c}{\sqrt{1+c^2}}m^{-3/2}q^{1/2}\Delta^2 \leq -\frac15m^{-3/2}q^{1/2}\Delta^2, \label{eq:avg_f}
    \end{equation}
    where the last step uses $\sqrt{1+c^2} \le 1+c$.

    We now proceed to compute the corresponding variance.
    Since the $x_i(c)$ terms are independent of one another, we simply need to compute the variance of each term:
    \begin{equation}
        {\rm Var}[x_i(c)] = \e[x_i(c)^2] - \e[x_i(c)]^2 \stackrel{\eqref{eq:avg_x}}{=} 1 - [1-2\mathsf{Q}(z_i(c))]^2 = 4\mathsf{Q}(z_i(c))[1-\mathsf{Q}(z_i(c))]
    \end{equation}
    so that
    \begin{equation}
        {\rm Var}[x_i(c)z_i(c)] = 4z_i(c)^2 \mathsf{Q}(z_i(c))[1-\mathsf{Q}(z_i(c))]. \label{eq:var_xz}
    \end{equation}
    Assuming momentarily that $z_i(c)\geq 0$, a standard upper bound on the Q-function (Mill's inequality) gives
    \begin{equation}
        \mathsf{Q}(z_i(c)) \leq \frac1{\sqrt{2\pi}}\frac{e^{-\frac12z_i(c)^2}}{z_i(c)}, \label{eq:Q_bound}
    \end{equation}
    and hence
    \begin{equation}
        {\rm Var}[x_i(c)z_i(c)] \stackrel{\eqref{eq:var_xz}}{=} 4z_i(c)^2 \mathsf{Q}(z_i(c))[1-\mathsf{Q}(z_i(c))] \stackrel{\eqref{eq:Q_bound}}{\leq} \frac4{\sqrt{2\pi}} {e^{-\frac12z_i(c)^2}}{z_i(c)} \leq \frac4{\sqrt{2e\pi}} \leq 1,
    \end{equation}
    where the second-last step uses $z e^{-z^2/2} \le \frac{1}{\sqrt e}$, which can be verified by basic calculus (the maximum occurs at $z=1$).
    
    To drop the assumption $z_i(c)\geq 0$, we note that ${\rm Var}[x_i(c)z_i(c)] = 4z_i(c)^2 \mathsf{Q}(z_i(c))[1-\mathsf{Q}(z_i(c))]$ is an even function of $z_i(c)$ (due to $\mathsf{Q}(-z) = 1-\mathsf{Q}(z)$), so that ${\rm Var}(x_i(c)z_i(c)) \leq 1$ is still valid when $z_i(c) < 0$.  Hence, \eqref{eq:f_init} gives
    \begin{equation}
        {\rm Var}[f(c)] = \frac{m(1+c^2)}{q(1+c)^2}\sum_i {\rm Var}[x_i(c)z_i(c)] \leq  \frac{m^2(1+c^2)}{q(1+c)^2}\leq \frac{m^2}{q}.
    \end{equation}
    We are now in a position to apply Bernstein's inequality (see Lemma \ref{lem:bernstein} in Appendix \ref{sec:concentration}), noting that the assumption $u_i \in [0,1]$ implies $\left|x_i\cdot \frac{cu_i^a - u_i^b}{1+c}\right| \leq 1$, which in turn implies the centered counterpart $\left|(x_i - \EE[x_i])\cdot \frac{cu_i^a - u_i^b}{1+c}\right| \leq 2$ by the triangle inequality.  Applying Lemma \ref{lem:bernstein} (with parameters $T=m$, $b=2$, and $\epsilon = t/T$ therein) thus gives
    \begin{equation}
        \mathbb{P}\left[f(c) - \e[f(c)] > t\right] \leq 2 \exp \left( -\frac{\frac12 t^2}{{\rm Var}[f(c)] + \frac23 t}\right) \leq 2 \exp \left( -\frac{\frac12 t^2}{\frac{m^2}{q} + \frac23 t}\right). \label{eq:bernstein}
    \end{equation}
    To simplify this, we set $t = \frac{m}{\sqrt{q}} \log m$ and write
    \begin{equation}
        \log^2 m \cdot \left(\frac{m^2}{q} + \frac23 t\right) \leq t^2 + \frac23 t^2 = \frac53 t^2, \label{eq:bern_init}
    \end{equation}
    where the inequality follows from the equivalence $t \log^2 m \le t^2 \iff t \ge \log^2 m \iff \frac{m}{\sqrt q} \ge \log m \iff q \le \frac{m^2}{\log ^2 m}$, which in turn is seen to be true (for large enough $m$) by substituting the condition $\Delta \geq m^{1/4}\log^2 m$ from Theorem \ref{thm:main} into the choice of $q$ in \eqref{eq:n}. 
    The inequality \eqref{eq:bern_init} then further implies
    \begin{equation}
        \frac{\frac12 t^2}{\frac{m^2}{q} + \frac23 t} \geq \frac{3}{10}\log^2m.
    \end{equation}
    Substituting into \eqref{eq:bernstein} and recalling the bound on $\EE[f(c)]$ from \eqref{eq:avg_f}, we obtain
    \begin{equation}
        \mathbb{P}\left[f(c) > -\frac15 m^{-3/2}q^{1/2}\Delta^2 + \frac{m}{\sqrt{q}}\log m\right] \leq 2\exp\left(-\frac{3}{10} \log^2m\right) = m^{-\omega(1)},
    \end{equation}
completing the proof.
\end{proof}
Lemma \ref{lem:f} follows by taking a union bound over the $m^6$ values of $c \in C$ in Lemma \ref{lem:f_c}.

\subsubsection{Proof of Lemma \ref{lem:g}  (Bound on $g$)}

Recall that $g(\cdot)$ in \eqref{eq:def_g} is expressed in terms of the quantities in \eqref{eq:def_x}--\eqref{eq:def_e'}.  Substituting these definitions into \eqref{eq:def_g}, we can write $g(\cdot)$ as
\begin{equation}
    g(c) = \frac{1}{1+c} \sum_i x_i(c)(v_i^a - u_i^a + c(v_i^b- u_i^b)).
\end{equation}
To understand the behavior of $g$, the following covariance calculation will be useful:\footnote{Recall that ${\rm Cov}[X,Y] = \EE\big[ (X-\EE[X])(Y-\EE[Y])\big]$}
\begin{align}
    {\rm Cov}(cv_i^a - v_i^b, v_i^a - u_i^a + c(v_i^b- u_i^b)) 
        &= {\rm Cov}(cv_i^a, v_i^a -u_i^a) + {\rm Cov}(-v_i^b, c(v_i^b-u_i^b)) \label{eq:cov_step1} \\
        &= c {\rm Var}[v_i^a] - c {\rm Var}[v_i^b]  \label{eq:cov_step2} \\
        & = 0,  \label{eq:cov_step3}
\end{align} 
where \eqref{eq:cov_step1} uses the fact that query outcomes for Agent $a$ and Agent $b$ are independent of one another, \eqref{eq:cov_step2} uses the property that shifting by a constant (e.g., $u_i^a$) does not change the covariance, and \eqref{eq:cov_step3} follows since the two estimated utilities are formed using the same number of samples with the same amount of noise.
Since uncorrelated Gaussian random variables are also independent, we conclude that $(cv_i^a - v_i^b, v_i^a - u_i^a + c(v_i^b- u_i^b))$ form a pair of independent multivariate Gaussians.

Recall that we are considering a noise level of $\sigma^2 = 1$.  By the independence of  $(cv_i^a - v_i^b, v_i^a - u_i^a + c(v_i^b- u_i^b))$, conditioned on either $x_i(c)=1$ or $x_i(c)=-1$ (each of which is deterministic given $cv_i^a - v_i^b$; see \eqref{eq:alloc_rule}), we have that $v_i^a - u_i^a + c(v_i^b- u_i^b)$ is a Gaussian distribution with zero mean and variance $\frac{m(1+c^2)}{q}$.  It thus follows that $x_i(c)(v_i^a - u_i^a + c(v_i^b- u_i^b))$ is (unconditionally) a Gaussian distribution with zero mean and variance $\frac{m(1+c^2)}{q}$.

By the independence of query outcomes across items, we observe that $g(c)$ is a sum of $m$ independent Gaussians, and is thus itself Gaussian, with zero mean and variance $\frac{m^2}{q}\frac{1+c^2}{(1+c)^2}$, i.e., standard deviation $\frac{m}{\sqrt{q}}\frac{\sqrt{1+c^2}}{1+c}$.  Since $\frac{\sqrt{1+c^2}}{1+c} = \Theta(1)$ regardless of $c$, it follows that
\begin{equation}
    \p\left[|g(c)|\geq \frac{m}{\sqrt{q}}\log m\right] \le \exp\Big( - \Omega( (\log m)^2 ) \Big) = m^{-\omega(1)}.
\end{equation}
Taking a union bound over the $m^6$ values of $c \in C$ gives the desired result.

\subsubsection{Proof of Lemma \ref{lem:h}  (Bound on $h$)}

We start with the following useful lemma regarding the discrete set $C$ (see \eqref{eq:def_C}).
\begin{lem}
    With probability $1-o(1)$, for all pairs $(c_1, c_2)$ of consecutive elements in $C$, there exist at most $\log m$ items $i$ with $x_i(c_1) \neq x_i(c_2)$.
    \label{lem:consec_pair}
\end{lem}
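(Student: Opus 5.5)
We plan a direct probabilistic argument: bound the probability that a single item switches sides between two consecutive thresholds, use independence across items to bound the chance that many items switch, and finish with a union bound over the $m^6$ consecutive pairs in $C$.

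Fix consecutive $c_1 = k/m^3$ and $c_2 = (k+1)/m^3$ in $C$. The quantity $c v_i^a - v_i^b$ is linear in $c$, so $x_i(c_1) \neq x_i(c_2)$ requires it to change sign on $[c_1,c_2]$. Changing $c$ by $1/m^3$ moves it by exactly $|v_i^a|/m^3$, so a sign change forces
\[
|c_1 v_i^a - v_i^b| \le |v_i^a|/m^3 .
\]
To bound the probability of this event, condition on $v_i^a$. Then $c_1 v_i^a - v_i^b$ is Gaussian in $v_i^b \sim N(u_i^b, m/q)$, with density at most $\sqrt{q/(2\pi m)}$. Hence the conditional probability is at most $2|v_i^a| m^{-3}\sqrt{q/(2\pi m)}$. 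Taking expectations and using $\EE|v_i^a| \le |u_i^a| + \sqrt{m/q} \le 2$ (valid because $u_i^a \in [0,1]$ and $q \ge m$) gives
\[
p_i := \PP[x_i(c_1) \neq x_i(c_2)] \le 4 m^{-3}\sqrt{q/m}.
\]

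Next we use the bound $q \le m^2/\log^2 m$, already verified in the proof of Lemma~\ref{lem:f_c} from $\Delta \ge m^{1/4}\log^2 m$ and \eqref{eq:n}. With it, $p_i \le 4 m^{-3} m^{1/2} \le m^{-2.4}$ for large $m$. The events $\{x_i(c_1)\neq x_i(c_2)\}$ are independent across $i$, since queries for different items are independent. Therefore, with $L = \lceil \log m \rceil$ and a union bound over subsets of size $L$,
\[
\PP[\text{at least } L \text{ items switch}] \le \binom{m}{L}\max_i p_i^{L} \le (m \cdot m^{-2.4})^{L} = m^{-1.4 L}.
\]

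Finally, we take a union bound over the fewer than $m^6$ consecutive pairs in $C$. The total failure probability is at most $m^{6 - 1.4\log m} = o(1)$, which proves Lemma~\ref{lem:consec_pair}.

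The only delicate step is the single-item switch probability, in particular handling the random slope $v_i^a$, which may be negative or large. Conditioning on $v_i^a$ and bounding $\EE|v_i^a|$ resolves this. The remaining steps are routine, provided the polynomial slack from $q \le m^2$ is kept.
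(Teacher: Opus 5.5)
Your proof is correct and follows the paper's route. The per-item switch probability is bounded by the Gaussian density of $v_i^b$ times the interval length $|v_i^a|/m^3$, independence across items gives a $\binom{m}{L}p^{L}$ bound, and a union bound covers the $m^6$ consecutive pairs.

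There are two minor differences. First, you handle the random slope by averaging, using $\EE|v_i^a|\le 2$, where the paper truncates on the high-probability event $|v_i^a|\le 2$. Averaging gives an unconditional per-item probability, which makes the product over independent items slightly cleaner. Second, you use $q\le m^2/\log^2 m$ where the paper uses the cruder $q\le m^3$; either bound suffices.
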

\begin{proof}
    By the allocation rule in \eqref{eq:alloc_rule}, for each $i$, the condition $x_i(c_1) \neq x_i(c_2)$ implies that $v_i^b$ is between $c_1v_i^a$ and $c_2v_i^a$; note that $c_2 - c_1 = \frac{1}{m^3}$ (see \eqref{eq:def_C}). It is also useful to note the following consequence of $u_i^a \in [0,1]$ and $v_i^a \sim N(u_i^a,m/q)$:
    \begin{equation}
        \p[ v_i^a \geq 2] = \mathbb{P}\left[N(0,1) >\frac{2-u_i^a}{\sqrt{m/q}}\right] \leq \p\left[N(0,1) > \sqrt{q/m}\right] \leq \p[N(0,1) > \log m], \label{eq:bound_of_2_init}
    \end{equation}
    where the last step uses $q/m \ge \log^2 m$ (see \eqref{eq:n}). 
    This decays as $\exp\big( - \Omega((\log m)^2) \big) = m^{-\omega(1)}$, i.e., faster than polynomial in $m$.  An entirely analogous argument holds for $\p[ v_i^a \le -2]$, and combining these gives
    \begin{equation}
        \p[ |v_i^a| \geq 2] = m^{-\omega(1)}. \label{eq:bound_of_2}
    \end{equation}
    Next, we observe that
    \begin{equation}
        \p[v_i^b \in [c_1v_i^a, c_2v_i^a]] = \frac{1}{\sqrt{2\pi m/q}} \left|\int_{c_1v_i^a}^{c_2v_i^a} e^{-\frac qm (x-u_i^b)^2 / 2}\ dx \right| \leq \sqrt{\frac qm}|v_i^a|(c_2-c_1).
    \end{equation}
    The three terms on the right-hand side are respectively upper bounded by $m$ (by crudely loosening \eqref{eq:n} to $q \le m^3$), $2$ (under the high-probability event in \eqref{eq:bound_of_2}), and $\frac{1}{m^3}$, yielding an overall upper bound of $2m^{-2}$. 
    The probability of having at least $\log m$ such items is thus upper bounded by
    \begin{equation}
        \binom{m}{\log m}(2m^{-2})^{\log m} \leq m^{\log m}(2m^{-2})^{\log m} = m^{-\omega(1)}.
    \end{equation}
    Taking the union bound over all $m^6-1$ possible pairs $(c_1, c_2)$ gives the desired result.
\end{proof}

We now break down the desired inequality \eqref{eq:h_bound} into the upper bound and lower bound separately:
\begin{equation}
    h(c) \leq (2c-1) \frac{m}{\sqrt q}\log m,
    \label{eq:h_upp}
\end{equation}
\begin{equation}
    h(c) \geq \Big(-\frac2c+1\Big) \frac{m}{\sqrt q}\log m.
    \label{eq:h_low}
\end{equation}
Recall $e'(c)$ from \eqref{eq:def_e'}, and observe that for all $c$, under the high-probability event $|v_i| \leq 2$, we have
\begin{equation}
    |h(c)| \leq \frac{1}{1+c}(|e'_a(c)|+c|e'_b(c)|) \leq |e'_a(c)| + |e'_b(c)| \leq \sum_i |v_i^a| + \sum_i |v_i^b| \leq 4m.
\end{equation}

When $c$ takes its highest value within $C$ (i.e., $c=m^3$), we have $h(c) \leq 4m \leq (2c-1) \frac{m}{\sqrt q}\log m$ (e.g., even crudely using $q \le m^3$) so that \eqref{eq:h_upp} holds. On the other hand, when $c$ takes its smallest value (i.e., $c=\frac1{m^3}$), we have $h(c) \geq -4m \geq (-\frac2c+1) \frac{m}{\sqrt q}\log m$ so that \eqref{eq:h_low} holds.

Suppose for contradiction that \eqref{eq:h_upp} and \eqref{eq:h_low} never hold true at the same time. Let $c_2$ be the smallest $c\in C$ such that \eqref{eq:h_upp} holds (since $c=m^3$ satisfies \eqref{eq:h_upp}, such a $c$ exists and the smallest is well defined). Note that $c_2 \neq \frac1{m^3}$, since \eqref{eq:h_low} holds when $c=\frac1{m^3}$. Let $c_1 = c_2 - \frac1{m^3} \in C$. By the definition of $c_2$ and what we assumed (for contradiction), \eqref{eq:h_low} fails for $c_2$ and \eqref{eq:h_upp} fails for $c_1 = c_2 -\frac1{m^3}$. We will bound $h(c_2)-h(c_1)$ in two ways and show that this is impossible.

Since \eqref{eq:h_low} fails for $c_2$ and \eqref{eq:h_upp} fails for $c_1$, we have
\begin{align}
    h(c_2)-h(c_1) &\leq \left(-\frac2{c_2}+1\right) \frac{m}{\sqrt q}\log m - (2c_1-1) \frac{m}{\sqrt q}\log m\\
    &=  \left(-\frac2{c_2}+1\right) \frac{m}{\sqrt q}\log m - \left(2c_2-\frac2{m^3}-1\right) \frac{m}{\sqrt q}\log m\\
    &\le -\frac{m}{\sqrt{q}}\log m,
    \label{eq:h_diff_lower}
\end{align}
where we used $x + \frac{1}{x} \ge 2$ and $\frac{2}{m^3} \le 1$.

On the other hand, by the definition of $h$ in \eqref{eq:def_h}, we have
\begin{align}
     h(c_2) - h(c_1)
    &= \frac{1}{1+c_2}(e'_a(c_2) - c_2 e'_b(c_2)) - \frac{1}{1+c_1}(e'_a(c_1) - c_1 e'_b(c_1))\\
    &= \left(\frac{1}{1+c_2}(e'_a(c_2) - c_2 e'_b(c_2)) - \frac{1}{1+c_1}(e'_a(c_2) - c_1 e'_b(c_2))\right) \nonumber \\
        &\qquad + \left(\frac{1}{1+c_1}(e'_a(c_2) - c_1 e'_b(c_2))- \frac{1}{1+c_1}(e'_a(c_1) - c_1 e'_b(c_1))\right)\\
    &= \frac{(c_1-c_2)(e'_a(c_2)+e'_b(c_2))}{(1+c_1)(1+c_2)} + \frac{1}{1+c_1}[e'_a(c_2)-e'_a(c_1)-c_1(e'_b(c_2)-e'_b(c_1))]. \label{eq:two_terms_0}
\end{align}
We proceed to handle the two terms on the right-hand side:
\begin{itemize}
    \item For the second term, under the high-probability events (i) at most $\log m$ items have $x_i(c_1) \neq x_i(c_2)$ (Lemma \ref{lem:consec_pair_sigma}), and (ii) $|v_i^a| \leq 2$ and $|v_i^b| \leq 2$ (see \eqref{eq:bound_of_2}), we have
    \begin{equation}
        \left|e'_a(c_2)-e'_a(c_1)-c_1(e'_b(c_2)-e'_b(c_1))\right| \leq 2(1+c_1) \log m,
    \end{equation}
    and this $1+c_1$ term cancels with $\frac{1}{1+c_1}$ in \eqref{eq:two_terms_0}.
    \item For the first term, we again use $|v_i^a| \leq 2$ and $|v_i^b| \leq 2$, but more crudely bound
    \begin{equation}
        |e'_a(c_2) + e'_b(c_2)| \leq 4m
    \end{equation}
    since each $e'_{\nu}$ ($\nu \in \{a,b\}$) consists of a summation over $m$ estimated utilities (weighted by $+1$ or $-1$).  It follows that
    \begin{equation}
        \left|\frac{(c_1-c_2)(e'_a(c_2)+e'_b(c_2))}{(1+c_1)(1+c_2)}\right| \leq \frac1{m^3} \cdot 4m = \frac4{m^2}.
    \end{equation}
\end{itemize}
Combining these findings into \eqref{eq:two_terms_0}, we have with probability $1-o(1)$ that
\begin{equation}
    |h(c_2)-h(c_1)| \leq \frac4{m^2} + 2 \log m
    \label{eq:h_diff_upper}
\end{equation}
which we claim implies for sufficiently large $m$ that
\begin{equation}
    |h(c_2)-h(c_1)| < \frac{m}{\sqrt{q}}\log m.
    \label{eq:h_diff_upper_new}
\end{equation}
To see that \eqref{eq:h_diff_upper} implies \eqref{eq:h_diff_upper_new}, we substitute the assumption $\Delta \geq m^{1/4} \log^2m$ from Theorem \ref{thm:main} into the choice of $q$ in \eqref{eq:n} to obtain $q \le m\big\lceil \frac{15m}{\log^3 m} + \log^2 m \big\rceil$, which scales as $o(m^2)$, thus implying $\frac{m}{\sqrt q} \log m = \omega(\log m)$.

Equations \eqref{eq:h_diff_lower} and \eqref{eq:h_diff_upper_new} directly contradict one another, so our assumption that no such $c \in C$ exists must be false.
Therefore, there must exist some $c \in C$ such that both \eqref{eq:h_upp} and \eqref{eq:h_low} hold.


\subsection{The Case $q \ge m$ with General $\sigma$} \label{sec:state_gen_sigma}

The following theorem generalizes Theorem \ref{thm:main} from $\sigma = 1$ to general choices of $\sigma$.

\begin{thm} \label{thm:sigma_large_n}
    If $\Delta \geq m^{1/4}\log^2 m$ and $\Delta = o\big( \frac{m}{\log m} \big)$, then there exists a polynomial-time algorithm that outputs an envy-free allocation with probability $1 - o(1)$ using the following number of queries:
    \begin{equation}
        q = m\bigg\lceil \sigma^2 \bigg(15\frac{m^{3/2}}{\Delta^2}\log m + \log^2 m \bigg) \bigg\rceil.
        \label{eq:n_sigma}
    \end{equation}
\end{thm}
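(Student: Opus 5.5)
The cleanest route is a rescaling reduction to Theorem~\ref{thm:main}, together with a direct re-run of the three-lemma argument with the noise variance carried through. The algorithm is unchanged. We query every item $\tau = q/m$ times, form $v_i^\nu$ as in \eqref{eq:vi_repeated}, and use the threshold rule \eqref{eq:alloc_rule} with $c$ selected from $C$ via the observable criterion \eqref{eq:h_bound}. The only change is that $m/\sqrt q$ is replaced throughout by $\sigma m/\sqrt q$. The key observation is that $v_i^\nu \sim N(u_i^\nu, \sigma^2 m/q)$, so the estimates have exactly the law they would have with $\sigma=1$ and an effective query budget $q' = q/\sigma^2$. With $q$ as in \eqref{eq:n_sigma}, we get $q'/m \ge 15\frac{m^{3/2}}{\Delta^2}\log m + \log^2 m$, which is precisely the inequality used in the $\sigma=1$ analysis; the ceiling only affects integrality of $\tau$, not the distributions.

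Concretely, I would go through Lemmas~\ref{lem:f}, \ref{lem:g}, \ref{lem:h} and their supporting lemmas, substituting $q'$ for $q$ wherever $q$ enters through the variance $m/q$.

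\textbf{Lemma~\ref{lem:f}.} Lemmas~\ref{lem:alloc_prob}, \ref{lem:z_q}, \ref{lem:f_c} are unchanged once $z_i(c)$ is defined with $q'$. The bound becomes $f(c) \le -\frac15 m^{-3/2}q'^{1/2}\Delta^2 + \frac{m}{\sqrt{q'}}\log m$.

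\textbf{Lemma~\ref{lem:g}.} The covariance computation \eqref{eq:cov_step1}--\eqref{eq:cov_step3} still gives zero, since both agents share the same $\sigma$. Hence $g(c)$ is Gaussian with standard deviation $\Theta(m/\sqrt{q'})$.

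\textbf{Lemma~\ref{lem:h}.} Use the threshold $\frac{m}{\sqrt{q'}}\log m$ in \eqref{eq:h_bound}. The algorithm knows $\sigma$, so this is computable.

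The final combination \eqref{eq:ea_0_step1}--\eqref{eq:ea_0_step3} then goes through verbatim in terms of $q'$.

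The main obstacle is the set of auxiliary conditions that previously used $\sigma=1$ implicitly, and I would check each one in terms of $q'$:
\begin{enumerate}
\item Case~2 of Lemma~\ref{lem:z_q} needs $q' \le \frac12 m^3/\Delta^2$. This follows exactly as before from $\Delta = o(m/\log m)$, since $q'$ is given by the same formula as \eqref{eq:n}, up to rounding.
\item The Bernstein step needs $q' \le m^2/\log^2 m$. This follows from $\Delta \ge m^{1/4}\log^2 m$.
\item The bound $|v_i^\nu| \le 2$ used in Lemma~\ref{lem:consec_pair} and in the proof of Lemma~\ref{lem:h} needs $\sqrt{q/(\sigma^2 m)} = \sqrt{q'/m} \ge \log m$, which is guaranteed by the $\log^2 m$ term.
\item The density bound in Lemma~\ref{lem:consec_pair} becomes $\sqrt{q'/m}\,|v_i^a|(c_2-c_1)$, which needs $q'\le m^3$.
\item The smoothness comparison \eqref{eq:h_diff_upper_new} needs $\frac{m}{\sqrt{q'}}\log m = \omega(\log m)$, i.e.\ $q' = o(m^2)$.
\end{enumerate}
Since $q'$ satisfies the same formula as in Theorem~\ref{thm:main}, up to the harmless rounding $q' \le m\lceil\cdot\rceil/\sigma^2$, every check reduces to the $\sigma=1$ case. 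I expect only the rounding and the case $\sigma\to 0$ to need care. If $\sigma$ is so small that the ceiling dominates, then $q'$ could exceed the range where items 1, 2, 4 and 5 hold. In that case I would enforce all of them by capping $q'$ from above: if necessary, the algorithm can add artificial independent $N(0,\cdot)$ noise to the estimates, inflating the variance up to the level $m/q''$ with $q''$ given by \eqref{eq:n}. This reduces the problem exactly to Theorem~\ref{thm:main}.
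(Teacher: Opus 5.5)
Your proposal is correct and follows essentially the paper's own argument. The paper likewise notes that $v_i^\nu \sim N(u_i^\nu,\sigma^2 m/q)$ reproduces the $\sigma=1$ setting with effective budget $q/\sigma^2$, with rounding costing at most a factor of $2$ when $q>m$; when $q=m$ it adds artificial noise to raise the variance to $\frac{\Delta^2}{15m^{3/2}\log m+\Delta^2\log^2 m}$, which is your inflation step. Your itemized re-check of the auxiliary conditions (Case~2 of Lemma~\ref{lem:z_q}, the Bernstein step, $|v_i^\nu|\le 2$, Lemma~\ref{lem:consec_pair}, and \eqref{eq:h_diff_upper_new}) only makes explicit what the paper leaves implicit.
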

\begin{proof}
    The proof is nearly identical to that of Theorem \ref{thm:main}. Recall that in the proof of Theorem \ref{thm:main}, each item is sampled $q/m$ times, so that $v_i \sim N(u_i, m/q)$ (omitting the superscript $a$ or $b$ for brevity). Here we again sample each item $\frac{q}{m}$ times, and consider the following two cases:
    \begin{itemize}
        \item If $q > m$, then $v_i \sim N(u_i, \sigma^2 m/q)$.  When the argument to $\lceil\cdot\rceil$ in \eqref{eq:n_sigma} is already an integer, this evaluates to $N(u_i, \frac{\Delta^2}{15m^{3/2}\log m + \Delta^2 \log^2 m})$, in which the variance of $v_i$ does not depend on $\sigma$ and thus the proof for $\sigma=1$ applies.  Slightly more care is needed when there is rounding involved, and we avoid repeating the details, but instead simply provide the intuition that rounding up to a multiple of $m$ is inconsequential when $q > m$, since it only amounts to multiplying by a factor of 2 or less.
        \item If $q=m$, then each item is sampled exactly once.  In this case, we observe that the number of samples equaling~$m$ in \eqref{eq:n_sigma} implies that $\sigma^2 \le \frac{\Delta^2}{15 m^{3/2}\log m + \Delta^2\log^2 m}$, which means that extra noise may be added to produce $v_i \sim N(u_i, \frac{\Delta^2}{15 m^{3/2}\log m + \Delta^2\log^2 m})$ (instead of $N(u_i,\sigma^2)$) and thus recover the same conditions as the case $q > m$. \qedhere
    \end{itemize}
    
\end{proof}

Observe that setting $\sigma = 1$ in Theorem \ref{thm:sigma_large_n} recovers Theorem \ref{thm:main}.  Moreover, whenever $\Delta \le \Otil( m^{3/4} )$, the bound on $q$ in~\eqref{eq:n_sigma} behaves as $\Otil\big( \frac{m^{2.5}}{\Delta^2} \big)$, thus recovering Theorem \ref{thm:upper_main} in this regime.  The regime $\Delta \gg m^{3/4}$ will be handled in the next subsection.

\subsection{The Case $q < m$} \label{sec:small_q}

In the case that $q < m$ (i.e., fewer queries than items), handling general $\sigma$ turns out to require non-trivial additional effort compared to $\sigma = 1$, so to avoid repetition we handle general $\sigma$ from the beginning.  Using Theorem \ref{thm:main} as a stepping stone, we will show the following.

\begin{thm}
    Suppose that the following conditions hold:
    \begin{gather}
        \Delta^2 > 160 \sigma m^{3/2} \log^2 m
        \label{eq:condition1} \\
        \Delta^4 > 160^2 m^{3}\sigma^{4} \log^2 m
        \label{eq:condition2} \\
        \Delta < \min\{2\sigma^2 m, \sqrt{2} \sigma m\}.
        \label{eq:condition3} 
    \end{gather}
    Then, there exists a polynomial-time algorithm that outputs an envy-free allocation with probability $1 - o(1)$ using the following number of queries:
    \begin{equation}
        q = \left\lceil \max \left\{ 160^2 \frac{m^4}{\Delta^4}\sigma^4 \log^2 m, 160\frac{\sigma m^{5/2}}{\Delta^2}\log^2 m\right\} \right\rceil . \label{eq:n_some_not}
    \end{equation}
    \label{thm:sigma}
\end{thm}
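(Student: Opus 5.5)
Following the outline in Section~\ref{sec:main_upper}, the algorithm draws a uniformly random subset $S\subseteq[m]$ of size $q$ and queries each item in $S$ once. On $S$ it runs the thresholding procedure from the proof of Theorem~\ref{thm:main}: compute $h$ on $S$, pick $c\in C$ satisfying \eqref{eq:h_bound}, and allocate by \eqref{eq:alloc_rule}. Each item of $[m]\setminus S$ goes to $a$ or $b$ independently with probability $\frac12$. The final envy of agent $\nu$ is then $e_\nu^S(c)+R_\nu$, where $e_\nu^S$ is the envy from the sampled items and $R_\nu$ comes from the unsampled ones. The goal is to show $e^S_\nu(c)\le -\Theta(\Delta q/m)$ while $|R_\nu| = O(\sqrt{m\log m})$ with high probability, and that \eqref{eq:n_some_not} makes the first term dominate.

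\emph{Step 1 (gap of the subinstance).} Fix an optimal allocation $\Ac^*$. Its negative envies are sums over items of terms $\pm u_i^a$ and $\pm u_i^b$, each bounded by $1$. Restricting to a uniformly random $S$ of size $q$, Hoeffding's inequality for sampling without replacement gives an induced allocation of $S$ with negative envy at least $\Delta_S := \frac{q}{m}\Delta - O(\sqrt{q\log m})$ for both agents, with high probability. Condition \eqref{eq:condition1}, together with the choice of $q$, should give $\sqrt{q\log m}\ll \Delta q/m$, so $\Delta_S\ge \frac{q\Delta}{2m}$.

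\emph{Step 2 (apply the $q\ge m$ analysis on $S$).} Apply the $q\ge m$ analysis to the $q$ items of $S$, with one sample each and noise $\sigma^2$. This is the regime of Theorem~\ref{thm:sigma_large_n} with $m$ replaced by $q$ and $\Delta$ by $\Delta_S$. Rather than citing that theorem as a black box, whose query formula assumes extra samples, I would rerun Lemmas~\ref{lem:f}, \ref{lem:g}, and~\ref{lem:h} with per-item variance $\sigma^2$ and $q$ items. Lemma~\ref{lem:f} then gives, up to constants,
\[
  f(c)\le -\frac{\Delta_S^2}{\sigma q^{3/2}}+\sigma\sqrt q\log q,
\]
and Lemmas~\ref{lem:g} and~\ref{lem:h} give the balancing error $O(\sigma\sqrt q\log q)$. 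The combination step \eqref{eq:ea_0_step1}--\eqref{eq:ea_0_step3} then yields
\[
  e_\nu^S(c)\le -\Omega\Bigl(\frac{\Delta_S^2}{\sigma q^{3/2}}\Bigr)=-\Omega\Bigl(\frac{q^{1/2}\Delta^2}{\sigma m^2}\Bigr)
\]
once this dominates $\sigma\sqrt q\log q$. That requirement reads $\Delta^2\gtrsim \sigma^2 m^2\log m$. If the $\sigma$-dependence is tracked more carefully, so that Lemma~\ref{lem:z_q} falls into its Case~2 (the linear regime), the bound becomes $-\Omega(q\Delta/m)$ instead. I expect conditions \eqref{eq:condition2} and \eqref{eq:condition3} to decide which regime applies. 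In particular, \eqref{eq:condition3} seems designed to keep $\Delta_S/q$ inside the normalized range, so that the analogue of \eqref{eq:n_small} and the smoothness condition $q_{\rm eff}\ll m_{\rm eff}^2$ of Lemma~\ref{lem:h} continue to hold.

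\emph{Step 3 (random part).} $R_\nu$ is a sum of at most $m$ independent zero-mean terms bounded by $1$, so Hoeffding's inequality gives $|R_\nu|\le\sqrt{2m}\log m$ with probability $1-o(1)$. Envy-freeness then requires $\frac{q^{1/2}\Delta^2}{\sigma m^2}\gtrsim \sqrt m\log m$, i.e.\ $q\gtrsim \sigma^2 m^5\log^2 m/\Delta^4$, or, using the linear-regime bound, $q\gtrsim \sigma m^{3/2}\log m/\Delta$. These conditions should match the two terms in the maximum in \eqref{eq:n_some_not}, once the constants are tracked.

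\emph{Main obstacle.} The hardest step is Step 2: carefully re-deriving Lemmas~\ref{lem:f}--\ref{lem:h} with general $\sigma$ and only one sample per item. In this setting the variance is no longer tunable by repetition, so the crude bounds used before (e.g.\ $|v_i|\le 2$ and $q/m\ge\log^2 m$) fail and must be replaced by $|v_i|\le 1+\sigma\log m$. This changes the smoothness estimate in Lemma~\ref{lem:consec_pair}, and therefore the resolution of $C$. I would first check whether conditions \eqref{eq:condition1}--\eqref{eq:condition3} are exactly what makes each of these modified estimates go through, and adjust the threshold set $C$ (for instance, a finer grid scaled by $\sigma$) if they are not.
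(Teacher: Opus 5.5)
\textbf{There is a genuine gap in Step~2.} Your plan matches the paper's. You draw a uniformly random $S$ of size $q$, query each item once, threshold with a balanced $c$ on $S$, and flip fair coins on $S^c$. You then get a sampling-without-replacement gap of order $\frac{q}{m}\Delta$ and rerun Lemmas~\ref{lem:f}--\ref{lem:h} with $q$ items and effective query count $q/\sigma^2$. However, the drift computation in Step~2 is wrong, and with it the argument cannot produce \eqref{eq:n_some_not}.

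Substituting $m'=q$, $q'=q/\sigma^2$, $\Delta'=\Delta_S$ into the drift term of Lemma~\ref{lem:f} gives $-\frac15(m')^{-3/2}(q')^{1/2}(\Delta')^2=-\frac{\Delta_S^2}{5\sigma q}$, not $-\Delta_S^2/(\sigma q^{3/2})$. With $\Delta_S\ge\frac{q\Delta}{2m}$ this is $-\frac{1}{20}\frac{q\Delta^2}{\sigma m^2}$, which is linear in $q$. Your noise term $\sigma\sqrt q\log q$ already uses $q'=q/\sigma^2$, so your two terms are inconsistent with each other. The lost factor $\sqrt q$ breaks both comparisons you make afterwards:
\begin{itemize}
\item Your condition $\Delta^2\gtrsim\sigma^2m^2\log m$ for beating $\sigma\sqrt q\log q$ contradicts \eqref{eq:condition3}, which gives $\Delta^2<2\sigma^2m^2$.
\item Your Step~3 requirement $q\gtrsim\sigma^2m^5\log^2 m/\Delta^4$ is not a term of \eqref{eq:n_some_not}, even up to constants. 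For constant $\sigma$ and $\Delta=m^{0.9}$, which satisfies all three hypotheses for large $m$, it exceeds $m$. So it is not even compatible with querying fewer than $m$ items.
\end{itemize}

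\textbf{The linear-drift fallback is not available.} Which case of Lemma~\ref{lem:z_q} occurs is dictated by the unknown instance; for example, if every $|z_i|\le 1$, only Case~1 is present. The lemma therefore only guarantees the quadratic bound. Case~2 is used merely to show that the linear expression dominates the quadratic one, via \eqref{eq:n_small}. In the subsampled problem, \eqref{eq:n_small} reads $q'\le\frac12(m')^3/(\Delta')^2$, which is equivalent to $\Delta\le\sqrt2\sigma m$. So \eqref{eq:condition3} is exactly what certifies the quadratic drift $\Theta(q\Delta^2/(\sigma m^2))$; it does not switch you to a linear one.

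\textbf{With the corrected drift, the proof closes as in the paper.} The requirement $\frac{q\Delta^2}{\sigma m^2}\gtrsim\sigma\sqrt q\log m$ is the first term of \eqref{eq:n_some_not}. The requirement $\frac{q\Delta^2}{\sigma m^2}\gtrsim\sqrt m\log^2 m$, which absorbs the fair-coin fluctuation $\sqrt m\log m$ and the balancing slack, is the second term.

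On your anticipated obstacle: the paper keeps the grid $C$ unchanged. It uses $|v_i|\le\sqrt m\log m$, valid because \eqref{eq:condition2} forces $\sigma\le m^{1/4}$, and accepts a balancing window of $\sqrt m\log^2 m+\sigma\sqrt q\log m$. That window is harmless next to the random part; your sharper $|v_i|\le 1+\sigma\log m$ would also work.

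Finally, your Step~1 check $\sqrt{q\log m}\ll q\Delta/m$ does hold. However, it rests on the lower bound $\sigma>\Delta/(\sqrt2 m)$ from \eqref{eq:condition3} together with the second term of \eqref{eq:n_some_not}, not on \eqref{eq:condition1} by itself.
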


Note that by substituting \eqref{eq:condition1} and \eqref{eq:condition2} into \eqref{eq:n_some_not}, we indeed find that $q \le m$ (with strict inequality except in some very specific cases).  We also note that squaring both sides of $\Delta < 2\sigma^2 m$ (from \eqref{eq:condition3}), solving for $\sigma^4$, and substituting into \eqref{eq:condition2}, we obtain   
\begin{equation}
    \Delta > 80 \sqrt{m} \log m \label{eq:condition4}
\end{equation}
which provides an explicit lower bound on $\Delta$ not depending on $\sigma$.  We remark that we have made no attempt to optimize the constant factors in our results, as our focus is on the scaling laws.

Next, we state the following simplified corollary for the case of a constant noise level.

\begin{cor} 
    In the case that $\sigma^2 > 0$ is a constant (not depending on $m$) and we have $\Delta > C m^{3/4} \log m$ and $\Delta < C' m$ for sufficiently large $C$ and sufficiently small $C'$, there exists a polynomial-time algorithm that outputs an envy-free allocation with probability $1 - o(1)$ using the following number of non-adaptive queries:
    \begin{equation}
        q = \bigg\lceil 160 \frac{\sigma m^{5/2}}{\Delta^2} \log^2 m \bigg\rceil. \label{eq:n_large_delta}
    \end{equation}
    \label{cor:delta_large}
\end{cor}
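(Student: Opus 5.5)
The plan is to derive Corollary~\ref{cor:delta_large} directly from Theorem~\ref{thm:sigma} by specializing to constant $\sigma$. We check that the three hypotheses \eqref{eq:condition1}--\eqref{eq:condition3} hold, and that the maximum in \eqref{eq:n_some_not} is attained by its second term, which gives \eqref{eq:n_large_delta}. Non-adaptivity and polynomial running time come from the algorithm behind Theorem~\ref{thm:sigma}, which samples a random subset of items once each, fixed in advance. It then runs the thresholding rule \eqref{eq:alloc_rule} with the $c$-search of Lemma~\ref{lem:h} and allocates the unsampled items uniformly at random.

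First, with $\sigma=\Theta(1)$ and $\Delta > C m^{3/4}\log m$, we have $\Delta^2 > C^2 m^{3/2}\log^2 m$. This exceeds $160\sigma m^{3/2}\log^2 m$ once $C^2 > 160\sigma$, so \eqref{eq:condition1} holds. Next, $\Delta^4 > C^4 m^3 \log^4 m$, which exceeds $160^2 m^3 \sigma^4 \log^2 m$ for large $C$ (even with slack $\log^2 m$), so \eqref{eq:condition2} holds. Finally, \eqref{eq:condition3} follows from $\Delta < C' m$ with $C' < \min\{2\sigma^2,\sqrt2\sigma\}$. Since $\sigma$ is a fixed constant, all choices of $C, C'$ are legitimate constants depending only on $\sigma$.

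Second, we compare the two terms inside the maximum in \eqref{eq:n_some_not}. Their ratio is
\[
\frac{160^2 m^4\sigma^4\log^2 m/\Delta^4}{160\sigma m^{5/2}\log^2 m/\Delta^2} = \frac{160\sigma^3 m^{3/2}}{\Delta^2}.
\]
This is at most $1$ as soon as $\Delta^2 \ge 160\sigma^3 m^{3/2}$, which is implied by $\Delta > Cm^{3/4}\log m$ for large $m$. Hence $q$ equals the right-hand side of \eqref{eq:n_large_delta}, and Theorem~\ref{thm:sigma} yields the claim with probability $1-o(1)$.

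The only mildly delicate step is bookkeeping the constants so that one choice of $C$ (depending on $\sigma$) works for all three conditions simultaneously. Each condition is implied by the $\log m$ slack for large $m$, so I expect no real obstacle; the substantive work lies entirely in Theorem~\ref{thm:sigma}.
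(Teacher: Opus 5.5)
Your proposal is correct and follows the paper's own proof. You verify conditions \eqref{eq:condition1}--\eqref{eq:condition3} of Theorem~\ref{thm:sigma} using $\Delta > Cm^{3/4}\log m$ and $\Delta < C'm$ with $C' < \min\{2\sigma^2,\sqrt{2}\sigma\}$, and you show via the ratio $160\sigma^3 m^{3/2}/\Delta^2$ that the second term of \eqref{eq:n_some_not} attains the maximum. (One inconsequential slip: in the $q<m$ algorithm, $c$ is selected by the $\hbar$ criterion of Lemma~\ref{lem:hbar_sigma}, not by Lemma~\ref{lem:h}.)
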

\begin{proof}
    The assumption $\Delta > C m^{3/4} \log m$ for sufficiently large $C$, along with $\sigma$ being a constant, implies that conditions \eqref{eq:condition1} and \eqref{eq:condition2} are satisfied.  Condition \eqref{eq:condition3} also holds by setting $C ' < \min\{2\sigma^2,\sqrt{2}\sigma\}$.  Finally, the ratio between the two terms in \eqref{eq:n_some_not} is $160 \sigma^3 \frac{m^{3/2}}{\Delta^2}$, so the assumption $\Delta > C m^{3/4} \log m$ implies (for suitable $C$ and large enough $m$) that the second term achieves the maximum in \eqref{eq:n_some_not}, thus recovering \eqref{eq:n_large_delta}.
\end{proof}

This corollary recovers Theorem \ref{thm:upper_main} (stated less precisely using $\Otil(\cdot)$ notation) in the regime $\Delta \ge \omega(m^{3/4} \log m)$.  Since we recovered Theorem \ref{thm:upper_main} for any $\Delta \le \Otil( m^{3/4} )$ in the previous subsection, this means that we have now recovered Theorem \ref{thm:upper_main} in its entirety.

\subsubsection{Proof of Theorem \ref{thm:sigma} via Auxiliary Results}

We consider randomly querying $q < m$ items, uniformly randomly across all $\binom{m}{q}$ subsets, once each.  We do not query the remaining $m-q$ items. Let $S$ denote the set of queried items, meaning its complement $S^c = \{1,\dotsc,m\} \setminus S$ is the set of non-queried items.

We again let
\begin{equation}
    C = \left\{ \frac{k}{m^3} \,\middle|\, k=1,2,\ldots m^6 \right\}. \label{eq:C_repeated}
\end{equation}
For some $c \in C$ to be chosen later, for the items that we sample, we follow our earlier strategy:
\begin{equation}
    \text{Assign item $i$ to agent }
    \begin{cases}
        a & \text{if } cv_i^a - v_i^b>0\\
        b & \text{otherwise}.
    \end{cases} 
\end{equation}
On the other hand, for the items we do not sample, we assign them to each agent randomly with probability $\frac{1}{2}$ each, independent of all other choices.

It will be useful to decompose the envy quantities from \eqref{eq:def_e}--\eqref{eq:def_e'} into the contributions of the queried items $S$ and non-queried items $S^c$:
\begin{align}
    e_a(c) = \underbrace{-\sum_{i \in S} x_i(c)u_i^a}_{\ebar_a(c)} \underbrace{-\sum_{i \in S^c} x_i(c) u_i^a}_{\etil_a(c)}, \quad e_b(c) = \underbrace{\sum_{i \in S} x_i(c) u_i^b}_{\ebar_b(c)}+ \underbrace{\sum_{i \in S^c} x_i(c) u_i^b}_{\etil_b(c)}, \label{eq:def_e2} \\
    e'_a(c) = \underbrace{-\sum_{i \in S} x_i(c) v_i^a}_{\ebar'_a(c)} \underbrace{-\sum_{i \in S^c} x_i(c) v_i^a}_{\etil'_a(c)}, \quad e'_b(c) = \underbrace{\sum_{i \in S} x_i(c) v_i^b}_{\ebar'_b(c)} + \underbrace{\sum_{i \in S^c} x_i(c) v_i^b}_{\etil'_b(c)}.  \label{eq:def_e'2}
\end{align}
In short, we use $\ebar$ for the queried part, $\etil$ for the non-queried part, and $e$ for the total.

Let $\fbar(c), \gbar(c), \hbar(c)$ be defined similarly to $f(c),g(c),h(c)$ in \eqref{eq:def_f}--\eqref{eq:def_h}, except that we include only the items that we query:
\begin{align}
    \fbar(c) &= \frac{c\ebar_a(c) + \ebar_b(c)}{1+c}, \label{eq:fbar} \\
    \gbar(c) &= \frac{1}{1+c}(\ebar_a(c) - \ebar'_a(c) - c\ebar_b(c) + c\ebar'_b(c)) \label{eq:gbar} \\
    \hbar(c) &= \frac1{1+c} (\ebar'_a(c) - c\ebar'_b(c)). \label{eq:hbar}
\end{align}
We then have the following analogs of our earlier lemmas.

\begin{lem}
    \emph{(Analog of Lemma \ref{lem:f})} Under the setup of Theorem \ref{thm:sigma}, with probability $1-o(1)$, we have
    \begin{equation}
        \fbar(c) \leq -\frac1{20}\frac{q \Delta^2}{\sigma m^2} + \sigma \sqrt{q} \log {q} 
    \end{equation}
    for all $c \in C$.
    \label{lem:f1_sigma}
\end{lem}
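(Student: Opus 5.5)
The plan is to mirror the proof of Lemma~\ref{lem:f} (via Lemmas~\ref{lem:z_large}, \ref{lem:alloc_prob}, \ref{lem:z_q}, \ref{lem:f_c}), with three changes: only the random set $S$ of $q$ queried items contributes, each is sampled once with noise $N(0,\sigma^2)$, and the gap restricted to $S$ must be lower bounded. I would first condition on $S$ and then control the randomness of $S$ and of the noise separately.

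\emph{Step 1: gap on $S$.} For fixed $c$, Lemma~\ref{lem:z_large} gives $\sum_{i\in[m]}|cu_i^a-u_i^b|\ge(1+c)\Delta$. Define $w_i=|cu_i^a-u_i^b|/(1+c)\in[0,1]$. Since $S$ is a uniformly random $q$-subset, $\EE\sum_{i\in S}w_i\ge q\Delta/m$. A Hoeffding/Bernstein bound for sampling without replacement (which is dominated by the with-replacement case) then gives $\sum_{i\in S}w_i\ge \frac{q\Delta}{2m}$ with probability $1-m^{-\omega(1)}$, as long as $q\Delta/m\gg\sqrt{q}\log m$. Equivalently, $\Delta\gg m\log m/\sqrt q$, which I expect to follow from the choice of $q$ in \eqref{eq:n_some_not} together with \eqref{eq:condition1}--\eqref{eq:condition2}. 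Then I union bound over $c\in C$.

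\emph{Step 2: expectation given $S$.} Set $z_i(c)=\frac{cu_i^a-u_i^b}{\sigma\sqrt{1+c^2}}$, the single-sample analog of \eqref{eq:def_z}. By Lemma~\ref{lem:alloc_prob}, $\EE[x_i]=1-2\mathsf{Q}(z_i)$, so
\[
\EE[\fbar(c)\mid S]=-\frac{\sigma\sqrt{1+c^2}}{1+c}\sum_{i\in S}(1-2\mathsf{Q}(z_i))z_i .
\]
I would repeat the case split of Lemma~\ref{lem:z_q} on $S$ with $|S|=q$:
\begin{itemize}
\item Case 1 (the items with $|z_i|\le1$ carry a constant fraction of the mass): Cauchy--Schwarz gives a bound $\gtrsim \frac{(q\Delta/m)^2}{q\,\sigma^2}\cdot\frac{(1+c)^2}{1+c^2}$, which yields the $-\frac{1}{20}\frac{q\Delta^2}{\sigma m^2}$ term.
\item Case 2 (the items with $|z_i|>1$ carry the mass): the linear bound gives $\gtrsim \frac{q\Delta}{m\sigma}$. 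This dominates the Case 1 bound as long as $\Delta\lesssim \sigma m$-ish, which should be exactly where condition \eqref{eq:condition3} ($\Delta<\sqrt2\sigma m$, $\Delta<2\sigma^2m$) enters.
\end{itemize}
I would track the constants loosely, aiming for $\tfrac1{20}$ after losing factors of $2$ from Step 1 and from $\frac{1+c}{\sqrt{1+c^2}}\ge1$.

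\emph{Step 3: concentration given $S$.} Conditional on $S$, the $x_i$ are independent. As in Lemma~\ref{lem:f_c}, each summand $x_i\frac{cu_i^a-u_i^b}{1+c}$ has variance at most $\sigma^2\cdot\frac{1+c^2}{(1+c)^2}\le\sigma^2$, since $\mathrm{Var}[x_iz_i]\le1$, and is bounded by $1$. Bernstein with $t=\sigma\sqrt q\log q$ then gives tail $\exp(-\Omega(\log^2 q))$, provided $t\gtrsim1$, which follows from \eqref{eq:condition4}-type lower bounds on $q$. A union bound over the $m^6$ values of $c$ needs $\log^2 q\gg\log m$, which holds since $q$ is polynomial in $m$ by \eqref{eq:n_some_not}.

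The main obstacle I expect is the bookkeeping in Steps 1--2. Specifically, I need to verify that the sampling concentration of the gap and the Case~2 comparison both go through using only \eqref{eq:condition1}--\eqref{eq:condition3}, and the Case~2 comparison is where the dependence on $\sigma$ is genuinely different from the $\sigma=1$ proof.
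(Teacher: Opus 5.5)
Your proposal is correct and follows essentially the same route as the paper, which proves the restricted gap $\sum_{i\in S}|cu_i^a-u_i^b|\ge(1+c)\frac{q\Delta}{2m}$ via Hoeffding for sampling without replacement (Lemma \ref{lem:capture_large_sigma}), reruns Lemmas \ref{lem:z_q} and \ref{lem:f_c} with $m'=q$, $q'=q/\sigma^2$, $\Delta'=\frac{q\Delta}{2m}$ (where the Case~2 requirement becomes exactly $\Delta\le\sqrt2\sigma m$), and finishes with a union bound over $C$. One small correction to Step~3: the Bernstein exponent is $\Omega(\log^2 q)$ only if $t=\sigma\sqrt q\log q\ge\log^2 q$, i.e.\ $\sigma\sqrt q\ge\log q$ (the analog of $q\le m^2/\log^2 m$ in Lemma \ref{lem:f_c}), not merely $t\gtrsim1$; this condition does hold, e.g.\ by combining $\sigma>\Delta/(\sqrt2 m)$ with \eqref{eq:n_some_not} and \eqref{eq:condition4}.
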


\begin{lem}
    \emph{(Analog of Lemma \ref{lem:g})} Under the setup of Theorem \ref{thm:sigma}, with probability $1-o(1)$, we have for all $c \in C$ that $|\gbar(c)| \leq  \sigma \sqrt{q} \log {q}$.
    \label{lem:g1_sigma}
\end{lem}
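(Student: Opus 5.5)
The plan is to mirror the proof of Lemma~\ref{lem:g} exactly, restricted to the queried set $S$, and to condition on $S$ throughout. Expanding definitions \eqref{eq:gbar}, \eqref{eq:def_e2} and \eqref{eq:def_e'2} gives
\begin{equation*}
    \gbar(c) = \frac{1}{1+c}\sum_{i\in S} x_i(c)\big(v_i^a-u_i^a + c(v_i^b-u_i^b)\big).
\end{equation*}
For $i\in S$ each item is queried exactly once, so $v_i^\nu\sim N(u_i^\nu,\sigma^2)$ independently across $\nu$ and $i$. The allocation indicator $x_i(c)$ is a deterministic function of $cv_i^a-v_i^b$.

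Next comes the same covariance computation as in \eqref{eq:cov_step1}--\eqref{eq:cov_step3}. Since both agents' estimates have the same variance $\sigma^2$, we get
\begin{equation*}
    {\rm Cov}\big(cv_i^a-v_i^b,\; v_i^a-u_i^a+c(v_i^b-u_i^b)\big) = c\sigma^2-c\sigma^2 = 0.
\end{equation*}
Joint Gaussianity then implies independence. Conditioned on $x_i(c)$, the term $v_i^a-u_i^a+c(v_i^b-u_i^b)$ is still $N(0,(1+c^2)\sigma^2)$, because multiplying a symmetric Gaussian by an independent sign leaves its law unchanged. Hence each summand $x_i(c)(\cdots)$ is $N(0,(1+c^2)\sigma^2)$, and these are independent across $i\in S$.

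It follows that, given $S$, $\gbar(c)$ is a centered Gaussian with variance $|S|\sigma^2\frac{1+c^2}{(1+c)^2}\le q\sigma^2$. The standard Gaussian tail bound then gives
\begin{equation*}
    \p\big[|\gbar(c)|\ge \sigma\sqrt q\log q\big]\le 2\exp\!\big(-\tfrac12\log^2 q\big).
\end{equation*}
The randomness of $S$ and of the allocation of the unqueried items plays no role, since $\gbar$ involves only $S$.

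Finally, we take a union bound over the $m^6$ values of $c\in C$. The only point needing care is that the tail is in terms of $\log q$ rather than $\log m$. We need $m^6 e^{-\frac12\log^2 q}=o(1)$, i.e.\ $\log^2 q\gg \log m$. This holds because \eqref{eq:n_some_not} gives $q\ge 160\,\sigma m^{5/2}\log^2 m/\Delta^2$, and combining this with \eqref{eq:condition3} ($\Delta<\sqrt2\sigma m$) yields $q\gtrsim \sqrt m\log^2 m/\sigma$. Alternatively, the first term of \eqref{eq:n_some_not} together with $\Delta<2\sigma^2 m$ gives $q\gtrsim \log^2 m$; I would use the first term with the second constraint to get a polynomial lower bound on $q$ whenever $\sigma$ is not too large. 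Either route gives $\log q=\Omega(\log m)$, so $\log^2 q=\Omega(\log^2 m)$ and the union bound yields failure probability $m^{-\omega(1)}$.

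Verifying $\log q=\Omega(\log m)$ uniformly over the admissible $\sigma$ range is the main (minor) obstacle. If needed, I would fall back on \eqref{eq:n_some_not}, which gives $q\ge 160^2\frac{m^4\sigma^4}{\Delta^4}\log^2 m \ge 160^2\frac{\log^2 m}{16}$ by \eqref{eq:condition3}. I would then check whether a polynomial bound such as $q\ge\sqrt m$ follows from the two max terms combined: their geometric mean is $\gtrsim m^{13/4}\sigma^3\log^2 m/\Delta^3 \ge m^{1/4}\log^2 m/(2\sqrt2)$, which suffices.
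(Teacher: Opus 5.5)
Your argument is the same as the paper's proof. The paper shows that $\gbar(c)$ is a centered Gaussian with variance $\sigma^2 q\frac{1+c^2}{(1+c)^2}\le\sigma^2 q$ via the zero-covariance/independence argument from Lemma~\ref{lem:g}. It then applies a Gaussian tail bound at $\log q$ standard deviations and takes a union bound over $C$. You are right to flag that this union bound over $m^6$ values needs $\log q=\Omega(\log m)$, a point the paper leaves implicit.

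Your verification of that point needs a small repair. The geometric mean of the two terms in \eqref{eq:n_some_not} scales as $m^{13/4}\sigma^{5/2}\log^2 m/\Delta^3$, not $\sigma^3$. After using $\Delta<\sqrt2\sigma m$ it still carries a factor $\sigma^{-1/2}$, so it is not $\sigma$-free. The bound $q\gtrsim\log^2 m$ only gives $\log q=\Omega(\log\log m)$, which is too weak. The clean fix is to combine your bound $q>80\sqrt{m}\log^2 m/\sigma$ with $\sigma\le m^{1/4}$, which follows from \eqref{eq:condition2} and $\Delta\le m$ (cf.~\eqref{eq:sigma_ub_implied}). This gives $q>80m^{1/4}\log^2 m$, hence $\log q\ge\frac14\log m$, and the union bound then yields failure probability $m^{-\omega(1)}$.
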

\begin{proof}
    We first consider a fixed choice of $c \in C$.  Observe that for each queried item, $v_i^a - u_i^a + c(v_i^b- u_i^b)$ is a Gaussian distribution with zero mean and variance $\sigma^2(1+c^2)$. Hence, $\gbar(c)$ is a Gaussian distribution with zero mean and variance $\sigma^2 q \frac{1+c^2}{(1+c)^2}$, i.e., standard deviation ${\sigma}{\sqrt q} \frac{\sqrt{1+c^2}}{1+c}$.  Since $\frac{\sqrt{1+c^2}}{1+c} \le 1$, this means that $\p\big[ |\gbar(c)| \geq \sigma \sqrt{q} \log q \big]$ is a Gaussian tail bound of at least $\log q$ standard deviations from the mean, and is thus superpolynomially small.  Applying a union bound over the $m^6$ values of $c \in C$ completes the proof.
\end{proof}

\begin{lem}
    \emph{(Analog of Lemma \ref{lem:h})} Under the setup of Theorem \ref{thm:sigma}, with probability $1-o(1)$, there exists $c \in C$ such that 
    \begin{equation}
        \Big(-\frac3c + 2\Big)(\sqrt{m} \log^2 m + \sigma \sqrt{q} \log m)\leq \hbar(c) \leq (3c-2) (\sqrt{m}\log^2 m + \sigma \sqrt{q}\log m).
        \label{eq:h1_bound_sigma}
    \end{equation}
    \label{lem:hbar_sigma}
\end{lem}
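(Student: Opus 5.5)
The plan is to mirror the intermediate-value argument used for Lemma \ref{lem:h}, with two changes: $\hbar$ only sums over the queried set $S$, and each queried item has a single observation with variance $\sigma^2$ rather than $m/q$. Write $B = \sqrt{m}\log^2 m + \sigma\sqrt{q}\log m$. Lower bound \eqref{eq:h1_bound_sigma} is $L(c)=(-\frac3c+2)B$ and upper bound is $U(c)=(3c-2)B$. For consecutive $c_1<c_2$ in $C$ we have
\[
U(c_1) - L(c_2) = \Big(3c_1 + \frac{3}{c_2} - 4\Big)B \ge (6 - 4 - 3m^{-3})B \ge B,
\]
so the contradiction step requires $|\hbar(c_2)-\hbar(c_1)| < B$.

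\textbf{Step 1 (boundedness).} With high probability, $|v_i^\nu| \le 1 + \sigma\log m$ for all $i\in S$ and both agents, by Gaussian tails and a union bound over at most $2m$ variables. Hence $|\hbar(c)| \le \sum_{i\in S}(|v_i^a|+|v_i^b|) \le 2q(1+\sigma\log m)$.

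\textbf{Step 2 (endpoints).} At $c=m^3$, the upper bound in \eqref{eq:h1_bound_sigma} holds because $(3m^3-2)B \ge m^3\sqrt m \gg 2q(1+\sigma\log m)$. This uses $q\le m$ together with the condition $\Delta < 2\sigma^2 m$ of \eqref{eq:condition3}, which keeps $\sigma$ from being tiny; if needed, $\sigma\log m \cdot q$ is also controlled by the $\sigma\sqrt q \log m$ term of $B$. Symmetrically, the lower bound holds at $c=m^{-3}$.

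\textbf{Step 3 (consecutive pairs).} Prove the analog of Lemma \ref{lem:consec_pair}: with high probability, for every consecutive pair in $C$, at most $\log m$ queried items have $x_i(c_1)\neq x_i(c_2)$. The argument bounds the probability that $v_i^b$ lands between $c_1 v_i^a$ and $c_2 v_i^a$. The density of $v_i^b$ is at most $1/(\sqrt{2\pi}\sigma)$, and the interval has length at most $m^{-3}(1+\sigma\log m)$. So each item flips with probability at most $m^{-3}(\sigma^{-1}+\log m)$, which is $O(m^{-2})$ provided $\sigma \ge m^{-1}$; this follows from \eqref{eq:condition3} together with \eqref{eq:condition4}, since $80\sqrt m\log m<\Delta<2\sigma^2 m$. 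The binomial bound and a union bound over the $m^6$ pairs then finish this step.

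\textbf{Step 4 (difference bound).} Decompose $\hbar(c_2)-\hbar(c_1)$ exactly as in \eqref{eq:two_terms_0}.
\begin{itemize}
\item The first term is at most $m^{-3}\cdot 2q(1+\sigma\log m)$, which is negligible.
\item The second term is at most $2\log m\,(1+\sigma\log m) = 2\log m + 2\sigma\log^2 m$.
\end{itemize}
We need this to be below $B$. The piece $2\log m$ is far below $\sqrt m\log^2 m$. For the piece $\sigma\log^2 m$, we need $\sigma\log m \le \sigma\sqrt q$, i.e.\ $q\ge\log^2 m$, which \eqref{eq:n_some_not} clearly ensures. The extra $\sqrt m\log^2 m$ slack in $B$ is designed to absorb the non-queried randomness later, not here. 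The contradiction argument of Lemma \ref{lem:h} then goes through verbatim, with the constants $3,2$ in place of $2,1$.

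The main obstacle is making the high-probability bound on $|v_i|$ and the flip-probability bound uniform in $\sigma$, since $\sigma$ may be $o(1)$ or $\omega(1)$. This has to be handled through conditions \eqref{eq:condition1}--\eqref{eq:condition3}. If $\sigma$ is very small, I would add artificial noise as in the proof of Theorem \ref{thm:sigma_large_n} so that the density bound in Step 3 stays valid.
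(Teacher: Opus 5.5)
Your proposal is correct and follows essentially the same route as the paper: endpoint checks at $c=m^{\pm 3}$, an at-most-$\log m$-flips lemma for consecutive grid points, the same two-term decomposition of the difference, and a contradiction with the margin $(2-3m^{-3})B$. The one variation is your truncation $|v_i^\nu|\le 1+\sigma\log m$ in place of the paper's $\sqrt{m}\log m$, so your jump bound $2\log m+2\sigma\log^2 m$ is absorbed by the $\sigma\sqrt{q}\log m$ part of $B$ (strictly, $q\ge\log^2 m$ only gives $2\sigma\log^2 m\le 2\sigma\sqrt{q}\log m$, so compare against the full margin $(2-3m^{-3})B$ or use that $q$ is polynomially larger than $\log^2 m$); drop the artificial-noise remark, since the theorem's conditions already force $\sigma\ge m^{-1}$, as you show in Step 3, and adding noise would change the statistic $\overline{h}$ itself.
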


Once again, since $\hbar(c)$ depends only on the estimated utilities, it is feasible for an algorithm to iterate over each $c \in C$ and check whether \eqref{eq:h1_bound_sigma} holds.   We let the algorithm use any such $c$ value.

{\bf Proof of Theorem \ref{thm:sigma} given these lemmas.}  Along with $\fbar(c)$ from \eqref{eq:fbar}, we consider the following counterparts with non-queried items only, and with all items:
\begin{equation}
    \ftil(c) = \frac{c\etil_a(c) + \etil_b(c)}{1+c}, \quad f(c) = \frac{c e_a(c) + e_b(c)}{1+c}, \label{eq:f_more_defs}
\end{equation}
thus yielding $f(c) = \fbar(c) + \ftil(c)$.  We also define $\gtil(c),g(c)$ and $\htil(c),h(c)$ in an analogous manner.  The following lemma bounds various contributions from the non-queried items.

\begin{lem} \label{lem:ftilde}
    With probability $1-o(1)$, it holds for all $c \in C$ that
    \begin{gather}
        |\ftil(c)| \leq \sqrt{m} \log m, \\
        \bigg| \frac{\etil_a(c) - c \etil_b(c)}{1+c} \bigg| \leq \sqrt{m} \log m.
    \end{gather}
\end{lem}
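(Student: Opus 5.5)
The key structural fact is that for non-queried items $i \in S^c$, the allocation indicator $x_i(c)$ is a fair random sign that does not depend on $c$. The algorithm flips these coins independently of everything else, and nothing about them is tied to the threshold. So $\etil_a(c) = -\sum_{i\in S^c} x_i u_i^a$ and $\etil_b(c) = \sum_{i\in S^c} x_i u_i^b$ are each a single random variable, the same for every $c \in C$. The plan is therefore:
\begin{enumerate}
\item Condition on the queried set $S$, and note that the signs $\{x_i\}_{i\in S^c}$ are i.i.d.\ Rademacher variables independent of $S$.
\item Bound $\etil_a$ and $\etil_b$ separately by Hoeffding's inequality.
\item Pass to $\ftil(c)$ and $\frac{\etil_a(c) - c\,\etil_b(c)}{1+c}$ by noting that both are convex-type combinations of $\etil_a$ and $\etil_b$.
\end{enumerate}
No union bound over $C$ is then needed, although a union bound over the $m^6$ values of $c$ would also be harmless.

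For step 2, each summand $x_i u_i^\nu$ has mean zero and lies in $[-1,1]$, since $u_i^\nu \in [0,1]$, and there are $|S^c| \le m$ of them. Hoeffding's inequality (or Lemma~\ref{lem:bernstein} with $b=1$ and variance at most $1$) then gives
\[
\PP\big[|\etil_\nu| > \sqrt{m}\log m\big] \le 2\exp\Big(-\tfrac{m\log^2 m}{2|S^c|}\Big) \le 2\exp\big(-\tfrac12\log^2 m\big) = m^{-\omega(1)}
\]
for each $\nu\in\{a,b\}$. A union bound over the two agents shows that, with probability $1-o(1)$, both $|\etil_a|$ and $|\etil_b|$ are at most $\sqrt m\log m$.

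For step 3, on this event and for every $c>0$,
\[
|\ftil(c)| \le \frac{c|\etil_a| + |\etil_b|}{1+c} \le \max\{|\etil_a|, |\etil_b|\} \le \sqrt m \log m,
\]
and the same triangle-inequality bound applies to $\big|\frac{\etil_a - c\,\etil_b}{1+c}\big|$, because the weights $\frac{1}{1+c}$ and $\frac{c}{1+c}$ are nonnegative and sum to $1$. This gives both claims simultaneously for all $c\in C$.

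The only real point needing care is step 1, justifying that the random assignments of non-queried items are the same across all $c$ and independent of the queried data. This holds by construction of the algorithm, where the coins are flipped once, independently of everything. If one instead imagined fresh coins for each $c$, then a union bound over the $m^6$ elements of $C$ with the superpolynomial tail above would still suffice.
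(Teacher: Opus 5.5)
Your proposal is correct and follows essentially the same route as the paper: Hoeffding's inequality applied to $\etil_a$ and $\etil_b$ (each a sum of at most $m$ independent symmetric $\pm u_i^\nu$ terms) at $t=\sqrt{m}\log m$, then the triangle inequality with the nonnegative weights $\frac{1}{1+c}$ and $\frac{c}{1+c}$, which sum to $1$. The only difference is that the paper takes a union bound over $c\in C$ and $\nu\in\{a,b\}$, whereas you note that the coin flips for non-queried items do not depend on $c$, so that union bound is unnecessary (though harmless).
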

\begin{proof}
    We first consider a fixed choice of $c \in C$.  Recall that the non-queried items are assigned independently with probability $\frac{1}{2}$ each.  Hence, from \eqref{eq:def_e2}, $\etil_{\nu}(c)$ (for $\nu \in {a,b}$) is a summation of $|S^c| \le m$ independent random variables, each taking two values of the form $\{u,-u\}$ (for some $u \in [0,1]$) with probability $\frac{1}{2}$ each.  We can thus apply Hoeffding's inequality to obtain
    \begin{equation}
        \PP[ |\etil_{\nu}(c)| \ge t ] \le 2\exp\bigg( \frac{-2t^2}{4m} \bigg).
    \end{equation}
    Setting $t = \sqrt{m} \log m$, we find that this decays as $m^{-\omega(1)}$.  Hence, by a union bound over $c \in C$ and $\nu \in \{a,b\}$, we have $|\etil_{\nu}(c)| \le \sqrt{m} \log m$ for all $(c,\nu)$ with probability $1-o(1)$.  The first claim of the lemma follows by substituting into the definition of $\ftil$ in \eqref{eq:f_more_defs}, and the second claim follows similarly.
\end{proof}

In the following, we condition on the high-probability events of Lemmas \ref{lem:f1_sigma}, \ref{lem:g1_sigma}, \ref{lem:hbar_sigma}, and \ref{lem:ftilde}, for a combined probability of  $1-o(1)$.

Let $c$ be such that the conclusion of Lemma \ref{lem:hbar_sigma} holds.   Towards understanding the envy of the allocation, we first write
\begin{align}
    \frac{e_a(c)-ce_b(c)}{1+c} 
    &= \underbrace{\frac{\ebar_a(c)-c\ebar_b(c)}{1+c}}_{= \gbar(c) + \hbar(c)} + \underbrace{\frac{\etil_a(c)-c\etil_b(c)}{1+c}}_{\text{Bounded in Lemma \ref{lem:ftilde}}}  \\ 
    &\leq  \gbar(c) + \hbar(c) + \sqrt{m} \log m.
\end{align}
Further applying $|\gbar(c)| \leq \sigma \sqrt{q} \log m$ (via Lemma \ref{lem:g1_sigma} and $q \le m$) and $\hbar(c) \le (3c-2) (\sqrt{m}\log^2 m + \sigma \sqrt{q} \log m)$ (via Lemma \ref{lem:hbar_sigma}), we obtain 
\begin{equation}
    \frac{e_a(c)-ce_b(c)}{1+c} \leq 3c(\sqrt{m}\log^2 m + \sigma \sqrt{q}\log m). \label{eq:eab_upper}
\end{equation}
A similar argument gives the lower bound
\begin{equation}
    \frac{e_a(c)-ce_b(c)}{1+c} \geq -\frac3c(\sqrt{m}\log^2 m + \sigma \sqrt{q}\log m). \label{eq:eab_lower}
\end{equation}
We now observe that
\begin{align}
    e_a(c) &=\frac{1+c}{1+c^2} \left(cf(c)+\frac{e_a(c)-ce_b(c)}{1+c}\right) \label{eq:ea_step1} \\
    &\leq \frac{c(1+c)}{1+c^2}\Big( -\frac1{20}\frac{q \Delta^2}{\sigma m^2} + \sqrt{m}\log m + \sigma \sqrt{q} \log {m} + 3\sqrt{m}\log^2 m + 3\sigma \sqrt{q}\log m\Big)  \label{eq:ea_step2} \\
    &\leq 0,  \label{eq:ea_step3}
\end{align}
where:
\begin{itemize}
    \item \eqref{eq:ea_step1} follows by re-arranging the definition of $f$ from \eqref{eq:f_more_defs};
    \item \eqref{eq:ea_step2} follows by applying $f = \fbar + \ftil$ along with Lemmas \ref{lem:f1_sigma} and \ref{lem:ftilde} and $q\le m$ for the first term, and applying \eqref{eq:eab_upper} to the second term;
    \item \eqref{eq:ea_step3} follows by crudely bounding $\sqrt{m} \log m \le \sqrt{m}\log^2 m$ and then adding the following two inequalities:
    \begin{gather}
        -\frac{1}{40} \frac{q \Delta^2}{\sigma m^2} + 4\sigma \sqrt{q}\log m \leq 0 \label{eq:first_to_add} \\
        -\frac{1}{40} \frac{q \Delta^2}{\sigma m^2} + 4\sqrt{m} \log^2 m  \leq 0, \label{eq:second_to_add}
    \end{gather}
    which in turn follow from the fact that $q$ is lower bounded by each of the two terms in \eqref{eq:n_some_not}.  (In \eqref{eq:first_to_add} we can first simplify by dividing through by $\sqrt{q}$.)
\end{itemize}
By analogous reasoning (but with \eqref{eq:eab_lower} instead of \eqref{eq:eab_upper}), we have
\begin{align}
    e_b(c) &=\frac{1+c}{1+c^2} \left(f(c)-c\cdot \frac{e_a(c)-ce_b(c)}{1+c}\right) \nonumber \\
    &\le  \frac{1+c}{1+c^2}\left( -\frac1{20}\frac{q \Delta^2}{\sigma m^2} + \sqrt{m}\log m + \sigma \sqrt{q} \log {m} + 3\sqrt{m}\log^2 m + 3\sigma \sqrt{q}\log m \right) \leq 0.
\end{align}
Thus, we have constructed an envy-free allocation.

\subsubsection{Proof of Lemma \ref{lem:f1_sigma} (Bound on $\fbar$)}

We will focus on a single value of $c$ and then generalize. Throughout the analysis, we let $\Delta' = \frac12\frac{q}{m}{\Delta}$, which measures the gap of the ``smaller problem'' consisting of only $q < m$ queried items. The factor of $\frac{1}{2}$ is to ensure that the gap is valid with high probability, as formalized in the following lemma.

\begin{lem}
    \emph{(Analog of Lemma \ref{lem:z_large})} Under the setup of Theorem \ref{thm:sigma}, for any $c > 0$, we have probability $1-{m^{-\omega(1)}}$ that
    \begin{equation}
        \sum_{i\in S} |cu_i^a - u_i^b| \geq (1+c) \Delta'.
    \end{equation}
    \label{lem:capture_large_sigma}
\end{lem}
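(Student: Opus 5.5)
Fix $c>0$ and write $w_i = |cu_i^a - u_i^b| \in [0, \max\{c,1\}] \subseteq [0,1+c]$. Lemma~\ref{lem:z_large} gives the deterministic bound $W := \sum_{i=1}^m w_i \ge (1+c)\Delta$. The quantity of interest is the sum $\sum_{i\in S} w_i$ over a uniformly random $q$-subset $S$, whose mean is $\frac{q}{m}W \ge (1+c)\frac{q}{m}\Delta = 2(1+c)\Delta'$. It therefore suffices to show that this sum is at least half its mean with probability $1-m^{-\omega(1)}$.

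To keep the scale consistent, I would normalize by setting $w'_i = w_i/(1+c) \in [0,1]$ and $W' = \sum_i w'_i \ge \Delta$. Sampling without replacement is handled by Hoeffding's classical result that concentration bounds for sampling with replacement carry over to without replacement. Alternatively, one can couple $S$ with independent Bernoulli$(q/m)$ inclusions and condition on the size; I would rather cite Hoeffding/Serfling directly.

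The key step is a multiplicative Chernoff or Bernstein lower tail for bounded nonnegative summands. The variance of the sum is at most $q\cdot \frac{1}{m}\sum_i w_i'^2 \le \frac{q}{m}W'$, since $w'_i\le 1$. Bernstein's inequality, in the form of Lemma~\ref{lem:bernstein} transferred to without-replacement sampling, then gives
\[
\PP\Big[\textstyle\sum_{i\in S} w'_i \le \tfrac12 \tfrac{q}{m}W'\Big] \le \exp\Big(-\Omega\big(\tfrac{q}{m}W'\big)\Big) \le \exp\Big(-\Omega\big(\tfrac{q\Delta}{m}\big)\Big).
\]

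The main obstacle is checking that $\frac{q\Delta}{m} = \omega(\log m)$, so that the bound is $m^{-\omega(1)}$; this must come from the conditions of Theorem~\ref{thm:sigma}. Using the second term of \eqref{eq:n_some_not}, we have $q\ge 160\sigma m^{5/2}\log^2 m/\Delta^2$, so $\frac{q\Delta}{m}\ge 160\sigma m^{3/2}\log^2 m/\Delta$. Combined with $\Delta<\sqrt2\sigma m$ from \eqref{eq:condition3}, this gives at least $\frac{160}{\sqrt2}\sqrt m\log^2 m$, which is certainly $\omega(\log m)$. Note that $\sigma$ cancels here, which is the point of invoking \eqref{eq:condition3}. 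Multiplying back by $(1+c)$ yields $\sum_{i\in S}|cu_i^a-u_i^b|\ge(1+c)\Delta'$ with the stated probability.
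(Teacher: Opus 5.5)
Your proof is correct and follows the paper's route. Both arguments lower bound the mean of $\sum_{i\in S}|cu_i^a-u_i^b|$ by $2(1+c)\Delta'$ via Lemma~\ref{lem:z_large}, apply a concentration inequality for sampling without replacement, and use \eqref{eq:n_some_not} together with $\Delta<\sqrt2\sigma m$ from \eqref{eq:condition3} to make the exponent superlogarithmic.

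The only difference is the concentration inequality. The paper uses Hoeffding's inequality for sampling without replacement with summand range $1+c$, which gives $\exp(-\tfrac12 q^2\Delta^2/m^3)$ with exponent $\Omega(\log^4 m)$. Your Bernstein-type bound uses $\sum_i (w_i')^2\le W'$ to get the stronger exponent $\Omega(q\Delta/m)=\Omega(\sqrt m\log^2 m)$. Either suffices.
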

\begin{proof}
    By Lemma \ref{lem:z_large}, we have $\sum_i |cu_i^a - u_i^b| \geq (1+c) \Delta$. We compute the following mean, noting that each item is in $S$ with probability $\frac{q}{m}$:
    \begin{equation}
        \e \left[\sum_{i\in S} |cu_i^a - u_i^b| \right] = \frac{q}{m} \sum_i |cu_i^a - u_i^b| \geq \frac{q}{m}(1+c) \Delta = 2(1+c)\Delta'.
    \end{equation}
    We now observe that $\sum_{i\in S} |cu_i^a - u_i^b|$ is equal to the sum of values upon drawing $q$ items from a population of size $m$ without replacement, where the $i$-th value is $|cu_i^a - u_i^b|$.  This allows us to apply Hoeffding's inequality for sampling without replacement (e.g., see \cite[Prop.~1.2]{bardenet2015concentration}) to obtain
    \begin{align}
        \p\left[\sum_{i\in S} |cu_i^a - u_i^b| \leq (1+c)\Delta'\right]
        &\le \p\left[\sum_{i\in S} |cu_i^a - u_i^b| \leq \e\left[\sum_{i\in S} |cu_i^a - u_i^b|\right] - (1+c)\Delta'\right] \\
        &\le \exp\left( - \frac{2 \big((1+c)\Delta'\big)^2 }{ q(1+c)^2 } \right) \\
        &= \exp(-2 (\Delta')^2/q) \\
        &\le \exp(-2 (\Delta')^2/m) \label{eq:post_hoeff_0} \\
        & = \exp\left(-\frac12 \frac{q^2\Delta^2}{m^3}\right), \label{eq:post_hoeff}
    \end{align}
    where in \eqref{eq:post_hoeff_0} we applied $q \le m$, and in \eqref{eq:post_hoeff} we substituted $\Delta' = \frac{1}{2}\frac{q}{m}\Delta$.  Using $q \ge 160\frac{\sigma m^{5/2}}{\Delta^2}\log^2 m$ from \eqref{eq:n_some_not} followed by $\sigma > \frac{\Delta}{\sqrt{2} m}$ from \eqref{eq:condition3} and squaring, we have $q^2 \ge \frac{160^2 m^3 \log^4 m}{2\Delta^2}$, which implies that $\frac12 \frac{q^2\Delta^2}{m^3} = \Omega( \log^4 m)$ and thus \eqref{eq:post_hoeff} decays to zero superpolynomially fast in $m$.
\end{proof}

In the proof of Theorem \ref{thm:main}, the only way we used the existence of an allocation with envy at most $-\Delta$ was through Lemma~\ref{lem:z_large}. As a result, Lemma \ref{lem:capture_large_sigma} can be used to prove the following analog of Lemma \ref{lem:f_c}, with $q$ in place of $m$ (since we are working with the $q < m$ queried items) and $\Delta'$ in place of $\Delta$ (since Lemma \ref{lem:capture_large_sigma} justifies $\Delta'$ as the ``effective gap'' for the reduced problem).

\begin{lem}
    \emph{(Analog of Lemma \ref{lem:f_c})} Under the setup of Theorem \ref{thm:sigma}, for any $c > 0$, it holds with probability $1-m^{-\omega(1)}$ that
    \begin{equation}
        \fbar(c) \leq -\frac1{20}\frac{q \Delta^2}{\sigma m^2} + \sigma \sqrt{q} \log {q}.
    \end{equation}
    \label{lem:f_c_sigma}
\end{lem}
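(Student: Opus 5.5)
The plan is to repeat the proof of Lemma~\ref{lem:f_c} on the queried items only, conditioning on the random set $S$. Three changes are needed: $m$ becomes $|S|=q$, each item is sampled once with noise variance $\sigma^2$, and Lemma~\ref{lem:capture_large_sigma} supplies the gap $\Delta'$ in place of Lemma~\ref{lem:z_large}. First fix $S$ on the event of Lemma~\ref{lem:capture_large_sigma}, which has probability $1-m^{-\omega(1)}$. Define
\[
z_i(c)=\frac{cu_i^a-u_i^b}{\sigma\sqrt{1+c^2}}, \qquad i\in S.
\]
By the argument of Lemma~\ref{lem:alloc_prob}, with $v_i^\nu\sim N(u_i^\nu,\sigma^2)$, we get $\e[x_i(c)]=1-2\mathsf{Q}(z_i(c))$. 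As in \eqref{eq:f_init},
\[
\fbar(c)=-\frac{\sigma\sqrt{1+c^2}}{1+c}\sum_{i\in S}x_i(c)z_i(c).
\]

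Next I prove the analog of Lemma~\ref{lem:z_q}, namely
\[
\sum_{i\in S}\bigl[1-2\mathsf{Q}(z_i)\bigr]z_i \;\ge\; 0.21\,\frac{(1+c)^2}{1+c^2}\,\frac{\Delta'^2}{\sigma^2 q}.
\]
The piecewise bounds $0.68|z|$ (for $|z|>1$) and $0.48z^2$ (for $|z|\le1$) still apply. Lemma~\ref{lem:capture_large_sigma} gives $\sum_{i\in S}|z_i|\ge \frac{(1+c)\Delta'}{\sigma\sqrt{1+c^2}}$. Case~1 then uses Cauchy--Schwarz over at most $q$ terms. Case~2 needs
\[
\frac{(1+c)\Delta'}{\sigma\sqrt{1+c^2}} \;\ge\; \frac{(1+c)^2\Delta'^2}{(1+c^2)\sigma^2 q},
\]
which is implied by $\Delta'\le \sigma q/\sqrt2$. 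Since $\Delta'=\frac{q\Delta}{2m}$, this reduces to $\Delta\le\sqrt2\,\sigma m$, which is exactly condition~\eqref{eq:condition3}. This Case~2 check is where the hypotheses enter, and I expect it to be the main point to verify carefully.

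Multiplying through and using $\sqrt{1+c^2}\le 1+c$ gives
\[
\e[\fbar(c)\mid S]\;\le\; -0.21\,\frac{\Delta'^2}{\sigma q} \;=\; -\frac{0.21}{4}\,\frac{q\Delta^2}{\sigma m^2} \;\le\; -\frac1{20}\,\frac{q\Delta^2}{\sigma m^2}.
\]

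For concentration, the per-item variance bound $\mathrm{Var}[x_iz_i]\le 1$ carries over unchanged, so
\[
\mathrm{Var}[\fbar(c)\mid S]\le \sigma^2 q .
\]
Each summand $x_i\frac{cu_i^a-u_i^b}{1+c}$ is bounded by~$1$, so the centered summands are bounded by~$2$. Applying Bernstein (Lemma~\ref{lem:bernstein}) with $t=\sigma\sqrt q\log q$ gives exponent
\[
\frac{t^2/2}{\sigma^2q+\tfrac23 t}.
\]
Whenever $\sigma\sqrt q\gtrsim\log q$, this exponent is $\Omega(\log^2 q)$. That requirement follows from $q\ge 160\,\sigma m^{5/2}\Delta^{-2}\log^2 m$ together with~\eqref{eq:condition3}: they give $\sigma^2 q\gtrsim \log^2 m$, since $\sigma^2 q\ge 160\,\sigma^3m^{5/2}\Delta^{-2}\log^2m$ and $\sigma\gtrsim\Delta/m$.

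The resulting tail probability is $\exp(-\Omega(\log^2 q))$. Because $q\ge m^{\Omega(1)}$ (indeed $q\gtrsim \sqrt m\log^2 m$ by~\eqref{eq:condition4}-type bounds), this is $m^{-\omega(1)}$. A union bound with the failure event of Lemma~\ref{lem:capture_large_sigma} completes the proof.
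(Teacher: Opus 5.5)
Your proposal is correct and follows the paper's proof. The paper likewise reruns Lemma~\ref{lem:f_c} on the queried items via the substitutions $m'=q$, $q'=q/\sigma^2$, $\Delta'=\frac{q\Delta}{2m}$, and the only condition it checks is your Case~2 requirement $\Delta\le\sqrt2\,\sigma m$ from \eqref{eq:condition3}. Your explicit checks of the Bernstein-step requirement $\sigma\sqrt q\ge\log q$ and of $\log q=\Omega(\log m)$ fill in details the paper leaves implicit. For the former, note that you also need $\Delta\gtrsim\sqrt m$ (e.g.\ from \eqref{eq:condition4}) to turn $\sigma^2 q\gtrsim\frac{\Delta}{\sqrt m}\log^2 m$ into $\sigma^2 q\gtrsim\log^2 m$.
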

\begin{proof}
    The proof is mostly the same as the proof of Lemma \ref{lem:f_c}, so we only describe the differences.   
    As hinted above, we substitute different choices of variables: 
    \begin{itemize}
        \item We set $m' = q$ because here we are only working with the $q$ queried items;
        \item We set $\Delta' = \frac12 \frac qm \Delta$ in accordance with Lemma \ref{lem:capture_large_sigma};
        \item We set $q'=\frac{q}{\sigma^2}$ to align with the fact that averaging $\frac{\tau}{\sigma^2}$ unit-variance Gaussians yields the same variance as averaging $\tau$ variance-$\sigma^2$ Gaussians.  While this explanation may suggest that $q'$ should be an  integer, this is not the case -- we are still sampling every item once with $N(0,\sigma^2)$ noise, and it is the \emph{difference in noise level} that leads to each occurrence of $q$ ultimately being replaced by $\frac{q}{\sigma^2}$.  Specifically, it is easy to check that for Lemma \ref{lem:alloc_prob} to remain true as stated, we should replace $z_i(c) = \frac{1}{\sqrt{\sigma^2(1+c^2)}} (cu_i^a - u_i^b)$ in \eqref{eq:def_z}.  This amounts to replacing $\frac{m'}{q'}$ by $\sigma^2$, from which we get $q'=\frac{q}{\sigma^2}$ upon substituting $m' = q$.
    \end{itemize}
    Lemma \ref{lem:f_c} is proved via Lemma \ref{lem:z_q}, whose proof is essentially unchanged except that we need to check that \eqref{eq:n_small} holds upon the above-given variable substitutions (i.e., $q' \leq \frac{1}{2}\frac{(m')^3}{(\Delta')^2}$). We have 
    \begin{equation}
        q' \leq \frac{1}{2} \frac{(m')^3}{(\Delta')^2} \Longleftrightarrow \frac{q}{\sigma^2} \leq \frac{q^3}{2(\frac12 \frac qm \Delta)^2} \Longleftrightarrow \Delta \leq \sqrt{2}\sigma m,
    \end{equation}
    which is satisfied due to \eqref{eq:condition3}. 
    The proof of Lemma \ref{lem:f_c} can then be repeated, with the above variable substitutions yielding the following analog of \eqref{eq:f_c_bound}:
    \begin{equation}
        \fbar(c) \leq -\frac15 (m')^{-3/2}(q')^{1/2}(\Delta')^2 + \frac{m'}{\sqrt{q'}} \log {m'} = -\frac1{20}\frac{q \Delta^2}{\sigma m^2} + \sigma \sqrt{q} \log {q}.
    \end{equation}
    This completes the proof.
\end{proof}

A union bound over the $m^6$ values of $c \in C$ then gives Lemma \ref{lem:f1_sigma}.

\subsubsection{Proof of Lemma \ref{lem:hbar_sigma}  (Bound on $\hbar$)}

Recall that $x_i(c)$ denotes the $\pm1$-valued allocation indicator variable as defined in \eqref{eq:def_x}, and that the set $C$ is defined in \eqref{eq:C_repeated}.

\begin{lem}
    \emph{(Analog of Lemma \ref{lem:consec_pair})}
    Under the setup of Theorem \ref{thm:sigma}, with probability $1-o(1)$, for all pairs $(c_1, c_2)$ of consecutive elements in $C$, there exist at most $\log m$ items $i$ with $x_i(c_1) \neq x_i(c_2)$.
    \label{lem:consec_pair_sigma}
\end{lem}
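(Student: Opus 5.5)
The plan is to repeat the proof of Lemma \ref{lem:consec_pair}, restricted to the queried set $S$, and to replace the noise-specific constants with bounds that hold uniformly over the range of $\sigma$ allowed by \eqref{eq:condition1}--\eqref{eq:condition3}.

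\textbf{Non-queried items.} For $i \in S^c$, the assignment is a single fair coin flip made independently of $c$. Hence $x_i(c_1) = x_i(c_2)$ for every pair $(c_1,c_2)$, and these items never contribute. It therefore suffices to control the items $i \in S$. Each of these is sampled exactly once, so $v_i^\nu \sim N(u_i^\nu, \sigma^2)$, independently across items and agents.

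\textbf{Range of $\sigma$ and boundedness of estimates.} First I would extract two-sided bounds on $\sigma$ from the assumptions.
\begin{itemize}
\item Condition \eqref{eq:condition1} together with $\Delta \le m$ gives $\sigma < \frac{\Delta^2}{160 m^{3/2}\log^2 m} \le \sqrt{m}$.
\item Condition \eqref{eq:condition3} gives $\sigma > \frac{\Delta}{\sqrt2 m}$. Combined with \eqref{eq:condition4} this yields $\sigma \ge \frac{80\log m}{\sqrt{2m}}$, so $1/\sigma \le \sqrt{m}$.
\end{itemize}
With these in hand, the Gaussian tail gives $\p[|v_i^a| \ge 1 + \sigma\log m] \le \p[|N(0,1)| \ge \log m] = m^{-\omega(1)}$. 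This replaces the bound $|v_i^a|\le 2$ used in \eqref{eq:bound_of_2}. A union bound over the at most $m$ items and both agents shows that this holds for all items with probability $1-o(1)$.

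\textbf{Per-item flip probability.} For $i\in S$, the event $x_i(c_1)\neq x_i(c_2)$ requires $v_i^b$ to lie between $c_1 v_i^a$ and $c_2 v_i^a$. Since $v_i^b$ is independent of $v_i^a$, I condition on $v_i^a$. The density of $v_i^b$ is at most $\frac{1}{\sigma\sqrt{2\pi}}$, so
\[
\p[x_i(c_1)\neq x_i(c_2)] \le \frac{|v_i^a|}{\sigma m^3}.
\]
On the high-probability event above, this is at most
\[
\frac{1+\sigma\log m}{\sigma m^3} \le \frac{\sqrt m + \log m}{m^3} \le m^{-2}
\]
for large $m$.

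\textbf{Counting and union bound.} The flip events are independent across items given the event on the $v_i^a$. The probability that at least $\log m$ of the at most $q\le m$ queried items flip is therefore at most
\[
\binom{m}{\log m}\,(m^{-2})^{\log m} \le m^{-\log m} = m^{-\omega(1)}.
\]
A union bound over the $m^6-1$ consecutive pairs in $C$ completes the argument.

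\textbf{Main obstacle.} The one genuinely new point is that $\sigma$ may scale with $m$, so the density factor $1/\sigma$ and the magnitude of $v_i^a$ must both be controlled. This is handled by the bounds $1/\sigma \le \sqrt{m}$ and $\sigma\le\sqrt m$ extracted above. The mixing of conditioning on $v_i^a$ with the high-probability event is handled exactly as in Lemma \ref{lem:consec_pair}.
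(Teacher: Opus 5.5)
Your proposal is correct and follows the paper's proof essentially step for step. It uses the same sequence: two-sided control of $\sigma$ from \eqref{eq:condition1}--\eqref{eq:condition4}, a high-probability bound on $|v_i^a|$, the density-times-interval-length bound on each flip probability, the $\binom{m}{\log m}$ counting bound, and a union bound over the $m^6-1$ consecutive pairs. The only difference is that you truncate at $|v_i^a|\le 1+\sigma\log m$ instead of the paper's $\sqrt{m}\log m$; this cancels $\sigma$ against the $1/\sigma$ density factor, so only $1/\sigma\le\sqrt{m}$ is needed and your $\sigma\le\sqrt{m}$ (like the paper's $\sigma\le m^{1/4}$) is never used.
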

\begin{proof}
    We first establish some loose but useful bounds on $\sigma$.  For an upper bound, using \eqref{eq:condition2} along with $\Delta \le m$ and $160^2 \log^2 m \ge 1$ gives
    \begin{equation}
        \sigma  \leq m^{1/4}. \label{eq:sigma_ub_implied}
    \end{equation}
    For a lower bound, using the first term in \eqref{eq:condition3} followed by the crude bound $\Delta \ge 2$ (which follows from the much stronger bound in \eqref{eq:condition4}) gives
    \begin{equation}
        \sigma \geq \sqrt{\frac{\Delta}{2m}} \geq \frac{1}{\sqrt m}. \label{eq:sigma_loose_lb}
    \end{equation}
    For each $i$, the condition $x_i(c_1) \neq x_i(c_2)$ implies that $v_i^b$ lies between $c_1v_i^a$ and $c_2v_i^a$.  Observe that combining $u_i^a \in [0,1]$ and $v_i \sim N(u_i^a,\sigma^2)$ gives
    \begin{equation}
        \p\left[ |v_i^a| \geq \sqrt{m} \log m\right] = 2\mathbb{P}\left[N(0,1) >\frac{\sqrt{m} \log m-u_i^a}{\sigma}\right] \stackrel{\eqref{eq:sigma_ub_implied}}{\leq} \p\left[ N(0,1) > \Omega(m^{1/4} \log m)\right].
    \label{eq:v_at_most_m_logm}
    \end{equation}
    This decays as $\exp\big( - \Omega\big( \sqrt{m} (\log m)^2\big) \big)$, which is strictly faster than any polynomial in $m$.
    
    Under the event $|v_i^a| \leq \sqrt{m} \log m$, we have for any consecutive $(c_1,c_2)$ (i.e., $c_2 = c_1 + \frac{1}{m^3}$) that
    \begin{equation}
        \p\left[v_i^b \text{ between } c_1v_i^a \text{ and } c_2v_i^a\right] = \frac{1}{\sqrt{2\pi \sigma^2}} \left|\int_{c_1v_i^a}^{c_2v_i^a} e^{- (x-u_i^b)^2 / {(2\sigma^2})}\ dx \right| \leq \frac{1}{\sigma}|v_i^a|(c_2-c_1) \leq m^{-2} \log m,
    \end{equation}
    where in the last step, the three terms are bounded by $\sqrt{m}$ (see \eqref{eq:sigma_loose_lb}), $\sqrt{m} \log m$, and $\frac{1}{m^3}$ respectively.  
    By independence across items, the probability of having at least $\log m$ items with the above property is upper bounded by
    \begin{equation}
        \binom{m}{\log m}(m^{-2}\log m)^{\log m} \leq m^{\log m}(m^{-2}\log m)^{\log m} = m^{-\omega(1)}.
    \end{equation}
    Taking the union bound over all $m^6-1$ possible pairs $(c_1, c_2)$ gives the desired result.
\end{proof}

We now break down the desired statement \eqref{eq:h1_bound_sigma} into the upper bound and lower bound separately:
\begin{equation}
    \hbar(c) \leq (3c-2) (\sqrt{m}\log^2 m + \sigma \sqrt{q}\log m)
    \label{eq:h_upp_sigma}
\end{equation}
\begin{equation}
    \hbar(c) \geq \Big(-\frac3c+2\Big)  (\sqrt{m}\log^2 m + \sigma \sqrt{q}\log m).
    \label{eq:h_low_sigma}
\end{equation}
Note that under the high-probability event $|v_i^{\nu}| \leq \sqrt{m}\log m$ (for $\nu \in \{a,b\}$ and all $i$), we have for all $c$ that
\begin{equation}
    |\hbar(c)| \stackrel{\eqref{eq:hbar}}{\leq} \frac{1}{1+c}(|\ebar'_a(c)|+c|\ebar'_b(c)|) \leq |\ebar'_a(c)| + |\ebar'_b(c)| \stackrel{\eqref{eq:def_e'2}}{\leq} \sum_i |v_i^a| + \sum_i |v_i^b| \leq 2m^{3/2}\log m. \label{eq:hbar_crude_upper}
\end{equation}

Similar to the proof of Lemma \ref{lem:h}, we first look at the extreme values of $c$.  
When $c=m^3$, we have $\hbar(c) \leq 2m^{3/2}\log m \leq (3c-2) (\sqrt{m}\log^2 m + \sigma \sqrt{q}\log m)$ so that \eqref{eq:h_upp_sigma} holds. When $c=\frac1{m^3}$, we have $\hbar(c) \geq -2m^{3/2}\log m \geq (-\frac3c+2) (\sqrt{m}\log^2 m + \sigma \sqrt{q}\log m)$, where we simply bound $\sigma\sqrt{q} \ge 0$, so that \eqref{eq:h_low_sigma} holds.

Suppose by contradiction that \eqref{eq:h_upp_sigma} and \eqref{eq:h_low_sigma} never hold true at the same time. Let $c_2$ be the smallest $c\in C$ such that \eqref{eq:h_upp_sigma} holds (since $c=m^3$ satisfies \eqref{eq:h_upp_sigma}, such a $c$ exists and the smallest is well defined). Note that $c_2 \neq \frac1{m^3}$, since \eqref{eq:h_low_sigma} holds when $c=\frac1{m^3}$. Let $c_1 = c_2 - \frac1{m^3} \in C$. By the definition of $c_2$ and what we assumed (for contradiction), \eqref{eq:h_low_sigma} fails for $c_2$ and \eqref{eq:h_upp_sigma} fails for $c_1 = c_2 -\frac1{m^3}$. We will bound $\hbar(c_2)-\hbar(c_1)$ in two ways and show that this is impossible.

Since \eqref{eq:h_low_sigma} fails for $c_2$ and \eqref{eq:h_upp_sigma} fails for $c_1$, we have
\begin{align}
    \hbar(c_2)-\hbar(c_1) &\leq \left(-\frac{3}{c_2}+2\right) (\sqrt{m}\log^2 m + \sigma \sqrt{q}\log m) - (3c_1 - 2)  (\sqrt{m}\log^2 m + \sigma \sqrt{q}\log m)\\
    &=  \left(-\frac3{c_2}+2\right) (\sqrt{m}\log^2 m + \sigma \sqrt{q}\log m) - \left(3c_2-\frac3{m^3}-2\right) (\sqrt{m}\log^2 m + \sigma \sqrt{q}\log m)\\
    &\le -(2-o(1))(\sqrt{m}\log^2 m + \sigma \sqrt{q}\log m),
    \label{eq:h_diff_lower2}
\end{align}
where we used $x + \frac{1}{x} \ge 2$ and $\frac{3}{m^3} = o(1)$.

On the other hand, we have from the definition of $\hbar$ in \eqref{eq:hbar} that
\begin{align}
     \hbar(c_2) - \hbar(c_1)
    &= \frac{1}{1+c_2}(\ebar'_a(c_2) - c_2 \ebar'_b(c_2)) - \frac{1}{1+c_1}(\ebar'_a(c_1) - c_1 \ebar'_b(c_1))\\
    &= \left(\frac{1}{1+c_2}(\ebar'_a(c_2) - c_2 \ebar'_b(c_2)) - \frac{1}{1+c_1}(\ebar'_a(c_2) - c_1 \ebar'_b(c_2))\right) \nonumber \\
    &\qquad + \left(\frac{1}{1+c_1}(\ebar'_a(c_2) - c_1 \ebar'_b(c_2))- \frac{1}{1+c_1}(\ebar'_a(c_1) - c_1 \ebar'_b(c_1))\right)\\
    &= \frac{(c_1-c_2)(\ebar'_a(c_2)+\ebar'_b(c_2))}{(1+c_1)(1+c_2)} + \frac{1}{1+c_1}\left[\ebar'_a(c_2)-\ebar'_a(c_1)-c_1(\ebar'_b(c_2)-\ebar'_b(c_1))\right]. \label{eq:two_terms}
\end{align}
We proceed to handle the two terms on the right-hand side:
\begin{itemize}
    \item For the second term, under the high-probability events (i) at most $\log m$ items have $x_i(c_1) \neq x_i(c_2)$ (Lemma \ref{lem:consec_pair_sigma}), and (ii) $|v_i^a| \leq \sqrt{m}\log m$ and $|v_i^b| \leq \sqrt{m} \log m$ (established in \eqref{eq:v_at_most_m_logm}), we have
    \begin{equation}
        |\ebar'_a(c_2)-\ebar'_a(c_1)-c_1(\ebar'_b(c_2)-\ebar'_b(c_1))| \leq (1+c_1)\sqrt{m} \log^2 m,
    \end{equation}
    and this $1+c_1$ term cancels with $\frac{1}{1+c_1}$ in \eqref{eq:two_terms}.
    \item For the first term, we use the same reasoning as \eqref{eq:hbar_crude_upper} to obtain 
    \begin{equation}
        |\ebar'_a(c_2) + \ebar'_b(c_2)| \leq 2m^{3/2}\log m,
    \end{equation}
    which in turn implies (via $|c_1 - c_2| = \frac{1}{m^3}$) that
    \begin{equation}
        \frac{(c_1-c_2)(\ebar'_a(c_2)+\ebar'_b(c_2))}{(1+c_1)(1+c_2)} \leq \frac1{m^3} \cdot 2m^{3/2}\log^2 m \leq \frac1m.
    \end{equation}
\end{itemize}
Substituting these findings into \eqref{eq:two_terms}, we have with probability $1-o(1)$ that
\begin{equation}
    |\hbar(c_2)-\hbar(c_1)| \leq \frac1m + \sqrt{m}\log^2 m = (1+o(1))\sqrt{m}\log^2 m.
    \label{eq:h_diff_upper2}
\end{equation}
Equations \eqref{eq:h_diff_lower2} and \eqref{eq:h_diff_upper2} directly contradict one another, so our assumption that no such $c \in C$ exists must be false.
Therefore, there must exist some $c \in C$ such that both \eqref{eq:h_upp_sigma} and \eqref{eq:h_low_sigma} hold.

\section{Proof of \cref{thm:lower_main} (Lower Bound)} \label{sec:lb_proof}

\subsection{Main Proof Steps via Auxiliary Results}

We consider the randomized instance specified in \cref{tab:alloc_hard}, where the ``latent'' binary variables $X_1,\dotsc,X_m$ are independently drawn from ${\rm Bernoulli}(1/2)$. 
We assume for convenience that $m$ is a multiple of~$4$; if this is not the case, it suffices to round down to the nearest multiple of $4$, with any ``remainder'' items having zero utility for both agents. 
While the algorithm (namely, its querying strategy and/or allocation rule) may be randomized in general, we have for any notion of ``error'' (e.g., positive envy) that
\begin{equation}
    \p[ {\rm error} ] = \e_{\mathsf{Alg}}[ \p\big[ {\rm error} \,|\, \mathsf{Alg}] \big] \ge \min_{\mathsf{Alg}} \p\big[ {\rm error} \,|\, \mathsf{Alg}], \label{eq:det_alg}
\end{equation}
and as a result, in order to establish a lower bound, it suffices to consider deterministic strategies.  Note that this reduction to deterministic strategies is standard, in particular being a component of Yao's minimax principle (e.g., see \cite[Sec.~2.2.2]{Mot10}).

We will show that under the conditions of \cref{thm:lower_main}, any deterministic algorithm fails to achieve an envy-free allocation with a certain probability. 
We will first show in \cref{lem:opt_envy} that with a suitable choice of $\epsilon$ (for the instance in \cref{tab:alloc_hard}), the optimal allocation of this hard instance has
envy at most $-\Delta$ with high probability. Then, \cref{lem:pos_envy} introduces two sufficient conditions of any output allocation having positive envy. \cref{lem:first_half} and \cref{lem:second_half} each provide a query complexity bound corresponding to one of the failing conditions. Then, we will state \cref{thm:alloc_lower} (a more precise version of Theorem \ref{thm:lower_main} with $\sigma$ dependence), which combines the two conditions and states the final query complexity bound.

\begin{restatable}{lem}{optenvy}
    \label{lem:opt_envy}
    For any $\delta\in(0,1)$, $\Delta \in (0,m)$, and $\gamma\in(0,0.5]$ satisfying $2\sqrt{\log(2/\delta)\cdot m} \le \frac{m}{2}$ and $\frac{4(\Delta+1)}{m} \le \frac{1}{2}$,  under the choice $\epsilon=\frac{2(\Delta+1)}{m-2\sqrt{\log(2/\delta)\cdot m}}$, it holds with probability at least $1-\delta$ that $\mathrm{OptEnvy}\le-\Delta$.
\end{restatable}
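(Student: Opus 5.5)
We exhibit an explicit allocation with envy at most $-\Delta$ with probability at least $1-\delta$. The natural candidate handles the two halves of Table \ref{tab:alloc_hard} separately.
\begin{itemize}
\item For $i \le m/2$: give item $i$ to Agent $a$ if $X_i=1$ and to Agent $b$ if $X_i=0$. Each agent then receives exactly the items they value at $\frac12+\epsilon$, and the other agent receives items they value at $\frac12-\epsilon$.
\item For $i>m/2$: items are valued identically by both agents. We split them so that the sum of their contributions to $\mathrm{Envy}_{a\to b}$ (equivalently to $\mathrm{Envy}_{b\to a}$) is as close to zero as possible.
\end{itemize}
On the second half, the number of $\frac12+\gamma$ items is $N_1=\sum_{i>m/2}X_i$ and the number of $\frac12-\gamma$ items is $m/2-N_1$. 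Since $m$ is a multiple of $4$, $m/2$ is even, so $N_1$ and $m/2-N_1$ have the same parity. If both counts are even, split each class evenly, giving contribution exactly $0$. If both are odd, split evenly except for one item of each class: give one agent the leftover high item and the other the leftover low item. The imbalance is then $\pm 2\gamma\le 1$ in the disfavoured direction for one agent, and we can choose which agent absorbs it. In either case both agents' envy contributions from the second half are at most $1$ in absolute value, and the two contributions are negatives of each other. Since the favourable direction only helps, we simply bound this contribution by $+1$ for both agents.

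For the first half, let $K=\sum_{i\le m/2}X_i$, the number of items given to $a$. Agent $a$ gains $\frac12+\epsilon$ on each of these $K$ items and suffers $\frac12-\epsilon$ on each of the $m/2-K$ items given to $b$, so
\[
-\mathrm{Envy}_{a\to b}\Big|_{\text{first half}} = K\big(\tfrac12+\epsilon\big)-\big(\tfrac m2-K\big)\big(\tfrac12-\epsilon\big) = K-\tfrac m4+\epsilon\tfrac m2 .
\]
Symmetrically, $-\mathrm{Envy}_{b\to a}\big|_{\text{first half}} = \tfrac m4 - K + \epsilon\tfrac m2$. Hence both negative envies are at least $\epsilon\frac m2 - |K-\frac m4|$.

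Next we control the fluctuation of $K$. By Hoeffding's inequality for $K\sim\mathrm{Bin}(m/2,1/2)$, with probability at least $1-\delta$,
\[
|K-m/4| \le \sqrt{\tfrac{m}{4}\log(2/\delta)} \le \epsilon\sqrt{\log(2/\delta)\, m},
\]
where the last step uses the loose factor and the fact that $\epsilon\ge 1/m$-scale; alternatively, we simply keep $\sqrt{m\log(2/\delta)}$ as an additive term.

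It remains to check the final inequality, which is where the choice of $\epsilon$ must be matched. We need $\epsilon\frac m2 - |K-\frac m4| - 1 \ge \Delta$. Without the $\epsilon$ factor on the deviation, the given $\epsilon$ would have to satisfy $\epsilon m/2 \ge \Delta+1+\sqrt{m\log(2/\delta)}$, which does not match the stated form. The stated form $\epsilon(m-2\sqrt{\log(2/\delta) m}) = 2(\Delta+1)$ instead suggests that the deviation enters multiplied by $\epsilon$, i.e.\ a rounding or matching in which the deviation costs $\epsilon$ per item rather than $\tfrac12$.

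So we refine the first-half allocation. Give $a$ exactly $m/4$ items in total: all of $a$'s favoured items when $K\le m/4$, topped up with $m/4-K$ of $b$'s favoured items. Then each agent receives exactly $m/4$ items from the first half, and the $\tfrac12$ parts cancel exactly. Every misassigned item costs $2\epsilon$ relative to the ideal, so each agent's negative envy from the first half is at least
\[
2\epsilon\Big(\tfrac m4 - |K-\tfrac m4|\Big) \ge \epsilon\Big(\tfrac m2 - 2\sqrt{\tfrac m4\log(2/\delta)}\Big) \ge \tfrac{\epsilon}{2}\Big(m-2\sqrt{m\log(2/\delta)}\Big) = \Delta+1 .
\]
Subtracting the second-half imbalance of at most $1$ gives envy at most $-\Delta$.

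Finally we verify that the instance is valid. The conditions $2\sqrt{\log(2/\delta)m}\le m/2$ and $\frac{4(\Delta+1)}m\le\frac12$ give $\epsilon\le \frac{4(\Delta+1)}{m}\le\frac12$, so the utilities lie in $[0,1]$. The main obstacle is the constant bookkeeping in this last inequality; the even-split refinement of the first half is the key idea.
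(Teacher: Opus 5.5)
Your final argument is correct and is essentially the paper's proof. It uses the same balanced first-half allocation, giving one agent exactly $m/4$ items, so that both negative envies equal $2\epsilon\min\{K, m/2-K\}$. It also uses the same even split of the second half, which costs at most $1$, and the same Hoeffding bound $|K-m/4|\le\sqrt{\tfrac m4\log(2/\delta)}$.

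You should delete the abandoned intermediate claim $\sqrt{\tfrac m4\log(2/\delta)}\le\epsilon\sqrt{m\log(2/\delta)}$, which is false unless $\epsilon\ge 1/2$, and you should state the symmetric case $K>m/4$ explicitly.
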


The proof is given in \cref{sec:alloc_proof_opt_envy}. The main idea is to show that there exists an allocation with negative envy at least $\Delta$, which implies that the optimal allocation must also have negative envy at least $\Delta$.

For any output allocation $\widehat{\Ac}$ based on the noisy queries, we define $\widehat{X}\in\{0,1\}^m$, where $\widehat{X}_i=1$ if $i\in\widehat{\Ac}_a$ and $\widehat{X}_i=0$ if $i\in\widehat{\Ac}_b$. Considering any (possibly adaptive) deterministic algorithm that produces $\widehat{\Ac}$ (or equivalently $\widehat{X}$), we define the following index sets: 
\begin{gather}
    \widehat{A}_{\eps}=\left\{i:i\le \frac{m}{2}\wedge\widehat{X}_i=1\right\} \label{eq:Ahat_def} \\
    \widehat{B}_{\eps}=\left\{i:i\le \frac{m}{2}\wedge\widehat{X}_i=0\right\}, \label{eq:Bhat_def} 
\end{gather}
where the subscript is used to highlight that these items have valuations $\frac{1}{2} \pm \eps$.  
Moreover, we define $S_i=1$ if $i$ is allocated to Agent $a$, and $S_i=-1$ otherwise, and let
\begin{equation}
    V_{\gamma}=\sum_{i\,:\,i>m/2} S_i\mu^a_i. \label{eq:V_def} 
\end{equation}
In addition, we let $d_H(X,Y)=|\{i:X_i\neq Y_i\}|$ denote the Hamming distance between two binary vectors of the same length. Then, the following lemma presents sufficient conditions for an allocation to have positive envy. The first condition \eqref{eq:pos_cond1} essentially states that the estimation of the first-half items is ``not good enough'', and the second condition  \eqref{eq:pos_cond2} essentially states that the number of first-half items and/or the utilities of second-half items in the allocation are ``too imbalanced''.

\begin{restatable}{lem}{posenvy}
    \label{lem:pos_envy}
    For any allocation $\widehat{\Ac}$, if there exists a constant $K > 0$ such that
    \begin{align}
        d_H(X_{[1:m/2]},\widehat{X}_{[1:m/2]}) > \frac{m}{4} - \frac{K\gamma\sqrt{m}}{2\epsilon} \label{eq:pos_cond1}
    \end{align}
    and
    \begin{align}
        \left|\frac{1}{2}(|\widehat{A}_{\eps}|-|\widehat{B}_{\eps}|)+V_{\gamma} \right| \ge K \gamma \sqrt{m}, \label{eq:pos_cond2}
    \end{align}
    then it holds that $\mathrm{Envy}(\widehat{\Ac})>0$.
\end{restatable}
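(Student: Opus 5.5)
The plan is to compute both envies of $\widehat{\Ac}$ in closed form for the instance of \cref{tab:alloc_hard}, splitting each into the contribution of the first-half items ($i \le m/2$) and of the second-half items ($i > m/2$). This should show that $-\mathrm{Envy}(\widehat{\Ac})$ equals a common ``accuracy'' term minus the absolute value of an ``imbalance'' term. Conditions \eqref{eq:pos_cond1} and \eqref{eq:pos_cond2} then bound these two terms respectively.

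For the first half, write $A = |\widehat{A}_\eps|$ and $B = |\widehat{B}_\eps|$, so $A + B = m/2$. Let $p$ be the number of $i \in \widehat{A}_\eps$ with $X_i = 1$, and $r$ the number of $i \in \widehat{B}_\eps$ with $X_i = 0$. Then $d := d_H(X_{[1:m/2]},\widehat{X}_{[1:m/2]}) = (A-p) + (B-r)$, and hence $p + r = m/2 - d$. Using $\mu^a_i = \frac12 \pm \eps$ according to $X_i$:
\begin{equation*}
\sum_{i\in\widehat{A}_\eps}\mu^a_i-\sum_{i\in\widehat{B}_\eps}\mu^a_i=\tfrac12(A-B)+\eps\big[(2p-A)-(B-2r)\big]=\tfrac12(A-B)+\eps\big(\tfrac m2-2d\big).
\end{equation*}
By the symmetric computation with $\mu^b_i = \frac12 \mp \eps$, the first-half part of $-\mathrm{Envy}_{b\to a}$ is $\frac12(B-A)+\eps(\frac m2-2d)$.

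For the second half, $\mu^a_i = \mu^b_i$. So by the definition \eqref{eq:V_def}, the second half contributes exactly $V_\gamma$ to $-\mathrm{Envy}_{a\to b}$ and $-V_\gamma$ to $-\mathrm{Envy}_{b\to a}$.

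Now set $T = \frac12(A-B)+V_\gamma$ and $W = \eps(\frac m2-2d)$. Adding the two halves gives $-\mathrm{Envy}_{a\to b} = W + T$ and $-\mathrm{Envy}_{b\to a} = W - T$. Therefore
\begin{equation*}
-\mathrm{Envy}(\widehat{\Ac}) = \min\{W+T,\,W-T\} = W - |T|.
\end{equation*}
Condition \eqref{eq:pos_cond1}, multiplied by $2\eps > 0$, gives $W < K\gamma\sqrt m$. Condition \eqref{eq:pos_cond2} says $|T| \ge K\gamma\sqrt m$. Hence $-\mathrm{Envy}(\widehat{\Ac}) < 0$, which is the claim.

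The only real obstacle is bookkeeping: getting the signs right in the $\eps$-terms and checking that the same $W$ appears for both agents. I would verify this by the case count over the four (allocated, $X_i$) combinations shown above. Nothing probabilistic is needed, since the statement is deterministic given $X$ and $\widehat{X}$.
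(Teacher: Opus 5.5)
Your proposal is correct and follows essentially the same route as the paper, which also writes $-\mathrm{Envy}_{a\to b}(\widehat{\Ac}) = \frac12(|\widehat{A}_{\eps}|-|\widehat{B}_{\eps}|) + \eps(\frac m2 - 2d_H) + V_\gamma$ and the mirrored expression for $b\to a$, then takes the minimum to obtain $W-|T|<0$. The paper does the bookkeeping with the counts $N_{aa},N_{ab},N_{ba},N_{bb}$; your $p$ and $r$ are exactly $N_{aa}$ and $N_{bb}$.
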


The proof is given in \cref{sec:alloc_proof_pos_envy}. The main idea is to express the envy of each agent in terms of $\widehat{A}_{\eps},\widehat{B}_{\eps},V_{\gamma}$, and $d_H(X_{[1:m/2]},\widehat{X}_{[1:m/2]})$. When both conditions hold, the maximal envy between two agents must be positive.

With $q$ denoting the total number of queries, the next lemma presents a threshold for $q$ under which \eqref{eq:pos_cond1} holds with high probability.  Before stating the lemma, we highlight that there are two sources of randomness in our setup:
\begin{itemize}
    \item The latent variables $X_1,\dotsc,X_m$ are i.i.d.~${\rm Bernoulli}(1/2)$ and dictate the items' valuations as summarized in Table \ref{tab:alloc_hard};
    \item Once the instance is generated, the algorithm performs $q$ sequentially-chosen queries whose outcomes are random due to the noise.  (Recall, however, that we assumed that the algorithm itself is deterministic.)
\end{itemize}
Accordingly, we let $\Dc$ denote the ``data'' collected throughout the course of the algorithm, i.e., the $q$ queries made and their resulting outcomes.  While it is most natural to think of generating $\Xv = (X_1,\dotsc,X_m)$ and then generating $\Dc$ given $\Xv$, the bulk of our analysis will actually concern the posterior distribution of $\Xv$ given $\Dc$.


\begin{restatable}{lem}{firsthalf}
    \label{lem:first_half}
    For any $c > 0$ and $K > 0$,\footnote{Here we could ``merge'' $c$ and $K$ into a single constant replacing $cK$, but the form we present here will be convenient so that we can let $K$ coincide with that in Lemma \ref{lem:pos_envy}.} and for any estimate $\widehat{X}_{[1:m/2]}$, if
    \begin{align}
    \label{eq:10-5}
        \frac{cK\gamma\sqrt{m}}{\epsilon}\le 10^{-5} m
    \end{align}
    and the number of queries satisfies
    \begin{align}
        q\le \max\left\{ \frac{\sigma cK\gamma\sqrt{m}}{\sqrt{2}\epsilon^2},
        \frac{\sigma^2c^2K^2\gamma^2}{\epsilon^4} \right\}, \label{eq:Q_init_bound}
    \end{align}
    then it holds with probability at least $0.99$ (with respect to $\Dc$) that both of the following are true: (i) We have
    \begin{align}
        \EE\left[d_H(X_{[1:m/2]},\widehat{X}_{[1:m/2]}) \,\middle|\, \Dc\right]\ge \frac{m}{4}-\frac{100cK\gamma\sqrt{m}}{\epsilon}, \label{eq:dH_lower}
    \end{align}   
    and (ii) there exists a constant $C > 0$ (not depending on $m$ or $\Dc$) such that
    \begin{align}
    \mathrm{Var}\left[d_H(X_{[1:m/2]},\widehat{X}_{[1:m/2]}) \,\middle|\, \Dc\right]\ge Cm. \label{eq:var_LB}
    \end{align}
\end{restatable}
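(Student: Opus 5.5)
The argument runs through the posterior of $X_{[1:m/2]}$ given $\Dc$. Since the prior is a product and the observations of item $i$ depend only on $X_i$, the posterior factorizes. Given $\Dc$, the $X_i$ for $i \le m/2$ are conditionally independent Bernoulli variables with parameters $p_i = \PP[X_i=1 \mid \Dc]$. The estimate $\widehat{X}$ is a deterministic function of $\Dc$. Hence $d_H$ is, conditionally, a sum of independent Bernoulli$(\pi_i)$ variables, where $\pi_i = p_i$ if $\widehat{X}_i=0$ and $\pi_i = 1-p_i$ otherwise. In particular $\pi_i \ge \min\{p_i,1-p_i\}$ and
\[
\EE[d_H \mid \Dc] \ge \sum_{i\le m/2}\min\{p_i,1-p_i\}.
\]
Writing $\beta_i = |p_i - 1/2|$, this lower bound equals $\frac{m}{4} - \sum_i \beta_i$. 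So (i) reduces to showing $\sum_{i \le m/2}\beta_i \le 100cK\gamma\sqrt m/\epsilon$ with probability $0.99$ over $\Dc$.

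For (ii), the conditional variance is $\sum_i \pi_i(1-\pi_i) \ge \sum_i(\frac14 - \beta_i^2) \ge \frac{m}{8} - \frac{1}{2}\sum_i \beta_i$, using $\beta_i \le 1/2$. By \eqref{eq:10-5}, $100cK\gamma\sqrt m/\epsilon \le 10^{-3}m$. So on the same event the variance is at least $m/8 - 10^{-3}m \ge Cm$, and (ii) follows directly from the bound needed for (i).

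To bound $\sum_i \beta_i$, I first bound its expectation and then apply Markov's inequality with factor $100$, which gives probability $0.99$. Let $N_i$ be the number of queries to item $i$; adaptivity is allowed, and $\sum_i \EE[N_i] \le q$. Each query of item $i \le m/2$ gives two Gaussian observations whose means differ by $2\epsilon$ between $X_i=1$ and $X_i=0$.

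Conditioned on $\Dc$, $p_i$ is the posterior for a two-point test. Hence $\EE[\beta_i] = \frac12 \|P_i^{(1)} - P_i^{(0)}\|_{\rm TV}$, in the notation of Lemma~\ref{lem:assouad}, because the average over the prior of $|p_i - 1/2|$ equals half the TV distance between the two conditional laws. I bound this TV distance by Pinsker's inequality. I then bound the KL divergence by convexity (Lemma~\ref{lem:kl_convex}), mixing over the other coordinates, followed by the chain rule (Lemma~\ref{lem:kl_chain_rule}):
\[
D \le \EE[N_i]\cdot 2\cdot\frac{(2\epsilon)^2}{2\sigma^2} = \frac{4\epsilon^2}{\sigma^2}\,\EE[N_i],
\]
with the expectation averaged over the mixing. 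This gives $\EE[\beta_i] \lesssim \frac{\epsilon}{\sigma}\sqrt{\EE[N_i]}$.

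There is also a trivial bound $\EE[\beta_i] \le \frac12 \PP[N_i \ge 1] \le \frac12\EE[N_i]$, since unqueried items keep posterior $1/2$. Summing with Cauchy--Schwarz yields
\[
\EE\Big[\sum_i \beta_i\Big] \lesssim \min\Big\{\frac{\epsilon}{\sigma}\sqrt{mq},\; q\Big\} + \text{(refinement)}.
\]
Under \eqref{eq:Q_init_bound}, the second branch $q \le \sigma^2c^2K^2\gamma^2/\epsilon^4$ gives $\frac{\epsilon}{\sigma}\sqrt{mq} \le cK\gamma\sqrt m/\epsilon$. The first branch $q \le \sigma cK\gamma\sqrt m/(\sqrt2\epsilon^2)$ is the small-$q$ regime. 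There a mixed bound is needed: $\EE[\beta_i] \le \min\{\frac12\EE[N_i], \frac{\epsilon}{\sigma}\sqrt{2\EE[N_i]}\}$. To handle it I split items according to whether $\EE[N_i]$ is below or above $\sigma^2/\epsilon^2$. Items below the threshold contribute at most $\frac{\epsilon}{\sigma}\sqrt{2}\sum_i \EE[N_i]/\sqrt{\EE[N_i]}$, and I would use concavity to get a bound of order $\frac{\epsilon}{\sigma}\sqrt{2}\, q\cdot\frac{\epsilon}{\sigma}\cdot\frac{\sigma^2}{\epsilon^2}$. I expect this reconciliation, matching exactly the $\max$ in \eqref{eq:Q_init_bound} with constants yielding the factor $cK\gamma\sqrt m/\epsilon$, to be the main obstacle. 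I would first try the split by the threshold $\sigma^2/\epsilon^2$ and accept a constant loss absorbed into the $100$.

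The other delicate point is justifying $\EE|p_i - 1/2| = \frac12\mathrm{TV}$ for each $i$ individually under adaptive sampling. This follows from the definition of the posterior, with the marginal law of $\Dc$ written as the equal mixture of $P_i^{(0)}$ and $P_i^{(1)}$.
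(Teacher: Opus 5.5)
\textbf{There is a genuine gap: the small-$q$ branch of \eqref{eq:Q_init_bound} is not established.}

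Your reduction itself is sound, and in places cleaner than the paper's. The ingredients are:
\begin{itemize}
\item $\EE[d_H\mid\Dc]\ge\frac m4-\sum_i\beta_i$;
\item $\mathrm{Var}[d_H\mid\Dc]=\sum_i(\frac14-\beta_i^2)\ge\frac m8-\frac12\sum_i\beta_i$;
\item the identity $\EE\beta_i=\frac12\|P_i^{(1)}-P_i^{(0)}\|_{\rm TV}$;
\item Markov's inequality applied to $\sum_i\beta_i$.
\end{itemize}
Together these give (i) and (ii) on a single event, and avoid the paper's flipped estimator and its counting argument for the variance.

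The problem is the first term of the maximum in \eqref{eq:Q_init_bound}, $q\le q_1:=\frac{\sigma cK\gamma\sqrt m}{\sqrt2\epsilon^2}$. This is exactly the branch the paper uses for $\Delta=\omega(m^{3/4})$ with $\gamma=\frac12$. There you need $\EE\sum_i\beta_i\lesssim\frac{\epsilon}{\sigma}q$.

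Your per-item inputs cannot give this. You have $\EE\beta_i\le\frac12\EE N_i$ and $\EE\beta_i\lesssim\frac{\epsilon}{\sigma}\sqrt{\EE N_i}$, together with $\sum_i\EE N_i\le q$. The function $x\mapsto\min\{\frac12x,\frac{\epsilon}{\sigma}\sqrt x\}$ is concave, so the worst case spreads $\EE N_i\approx 2q/m$ over all $i\le m/2$. An adaptive deterministic algorithm can essentially realize this by choosing later items as a function of a continuous earlier outcome. The best you can then conclude is $\min\{q/2,\frac{\epsilon}{\sigma}\sqrt{mq}\}$, up to constants.

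Compare this with the target at $q=q_1$:
\begin{itemize}
\item $q_1/2$ exceeds $\frac{cK\gamma\sqrt m}{\epsilon}$ by the factor $\frac{\sigma}{2\sqrt2\epsilon}$.
\item $\frac{\epsilon}{\sigma}\sqrt{mq_1}$ exceeds it by $\sqrt{q_1/q_2}$, where $q_2$ is the second term of the max. This factor is $\ge1$ precisely when the first branch is the binding one. In the paper's application it is $\Theta(\Delta/(\sqrt\sigma\, m^{3/4}))\to\infty$.
\end{itemize}

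Your own sketch confirms the loss: $\frac{\epsilon}{\sigma}\sqrt2\,q\cdot\frac{\epsilon}{\sigma}\cdot\frac{\sigma^2}{\epsilon^2}=\sqrt2\,q$, which has no factor $\epsilon/\sigma$. Also, the crossover of your mixed bound sits at $\EE N_i\asymp\epsilon^2/\sigma^2$, not at $\sigma^2/\epsilon^2$. The trivial bound $\frac12\PP[N_i\ge1]$ is missing exactly the factor $\epsilon/\sigma$, and no splitting or constant loss recovers it.

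\textbf{How the missing factor can be supplied.}

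The paper's route is a structural bound: at most $\min\{q,m/2\}$ items have $\EE N_i>0$. The prior is i.i.d., so a deterministic algorithm can be relabeled so that each newly queried item is the next unused index; then only items in $[1:q]$ are ever queried. Cauchy--Schwarz over these items gives $\sum_i\sqrt{\EE N_i}\le\sqrt{q\min\{q,m/2\}}$. The resulting bound is $\frac{\sqrt2\epsilon}{\sigma}\sqrt{q\min\{q,m/2\}}$, whose first branch is $\frac{\sqrt2\epsilon q}{\sigma}$.

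Alternatively, you can stay in your framework and replace $\frac12\EE N_i$ by a linear bound with the correct slope.
\begin{itemize}
\item By the posterior factorization, the log-odds of $X_i$ given $\Dc$ is $L_i=\sum_{t:\,i_t=i}\ell_t$, where $\ell_t=\frac{2\epsilon}{\sigma^2}(y^a-y^b)$ for the two observations of that query.
\item Hence $\beta_i\le\frac14|L_i|$.
\item The event $\{i_t=i\}$ is determined by the past, and $|\ell_t|$ has the same law under both hypotheses. A Wald-type identity therefore gives $\EE\beta_i\le\frac14\EE|\ell|\,\EE N_i=O\big((\frac{\epsilon}{\sigma}+\frac{\epsilon^2}{\sigma^2})\EE N_i\big)$.
\item For $\epsilon\lesssim\sigma$, the crossover with the $\sqrt{\EE N_i}$ bound is now at $\EE N_i\asymp1$. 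Your concavity argument then yields $O\big(\frac{\epsilon}{\sigma}\min\{q,\sqrt{mq}\}\big)$, which covers both branches up to a constant.
\item For $\epsilon\gtrsim\sigma$, the trivial bound already gives $q/2\lesssim\frac{\epsilon}{\sigma}q$.
\end{itemize}

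A minor point, which the paper shares: the chain-rule bound involves $\EE[N_i\mid X_i=1]$, and $\sum_i\EE[N_i\mid X_i=1]$ can be as large as $2q$, not $q$. Symmetrizing the two KL directions, $\|P_i^{(1)}-P_i^{(0)}\|_{\rm TV}^2\le\frac14\big(D(P_i^{(1)}\|P_i^{(0)})+D(P_i^{(0)}\|P_i^{(1)})\big)\le\frac{2\epsilon^2}{\sigma^2}\EE N_i$, restates the bound in terms of the unconditional $\EE N_i$, for which $\sum_i\EE N_i\le q$.
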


The proof is given in \cref{sec:alloc_proof_first_half}. 
The main idea is to use Assouad's lemma (Lemma \ref{lem:assouad} in Appendix \ref{sec:concentration}) to derive a lower bound on the Hamming distance.  An average Hamming distance of $m/4$ can be achieved by random guessing, and \eqref{eq:dH_lower} dictates that the average distance is very close to this trivial value.  This, in turn, suggests that we have high uncertainty about the value of $X_i$ for most items, which suggests a high variance in estimation accuracy as formalized in \eqref{eq:var_LB}.  Note also that the reason for $q$ being a maximum of two terms is that our analysis upper bounds the number of queries made to items in $[1:m/2]$, and we take the tighter of two such upper bounds, namely $\min\{ q, m/2 \}$ (respectively corresponding to the two terms in \eqref{eq:Q_init_bound}).

Next, the following lemma presents a threshold for $q$ to make \eqref{eq:pos_cond2} hold.  Note that here and throughout the analysis, we made no attempt to optimize constants, since our focus in this paper is on scaling laws.

\begin{restatable}{lem}{secondhalf}
    \label{lem:second_half}
    For sufficiently large $m$, any $\rho\in(0,1)$, and any output allocation $\widehat{\Ac}$, there exists some constant $K$ depending on $\rho$ such that if the number of queries satisfies
    \begin{align*}
        q\le\max\left\{ \frac{10^{-4}\sigma m }{\sqrt{2} \cdot \gamma}, \frac{10^{-8}\sigma^2 m}{\gamma^2}\right\},
    \end{align*}
    then it holds with probability at least $0.99$ (with respect to $\Dc$) that
    \begin{align*}
        \PP\left[\left|\frac{1}{2}(|\widehat{A}_{\eps}|-|\widehat{B}_{\eps}|)+V_{\gamma} \right| \ge K\gamma \sqrt{m}\,\middle|\,\Dc\right]\ge 1-\rho.
    \end{align*}
\end{restatable}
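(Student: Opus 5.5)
Conditioned on the data $\Dc$, the quantity $W := \frac{1}{2}(|\widehat{A}_{\eps}|-|\widehat{B}_{\eps}|)+V_{\gamma}$ has a random part coming from the second-half latent variables. Since the algorithm is deterministic, the allocation $\widehat{X}$, and hence $S_i$ and $\frac{1}{2}(|\widehat{A}_{\eps}|-|\widehat{B}_{\eps}|)$, are functions of $\Dc$. All remaining randomness is the posterior of $\Xv$ given $\Dc$. Write
\[
V_\gamma = \sum_{i>m/2} S_i\big(\tfrac12 + \gamma(2X_i-1)\big).
\]
Given $\Dc$, this equals a $\Dc$-measurable constant plus $\gamma\sum_{i>m/2} S_i(2X_i-1)$. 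The goal is then an anti-concentration statement: conditioned on $\Dc$, $W$ is a fixed shift of $\gamma$ times a sum of independent $\pm1$-valued variables. If at least $\Omega(m)$ of those variables have posterior nearly uniform, the sum has spread of order $\gamma\sqrt m$. Such a shifted sum lies in an interval of length $2K\gamma\sqrt m$ with probability at most $\rho$ once $K$ is small enough in terms of $\rho$.

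\textbf{Step 1 (posterior independence).} Under the uniform prior, and with queries to item $i$ depending only on $X_i$ through that item's observations, the posterior of $\Xv$ given $\Dc$ factorizes across items. For second-half items, $\mu^a_i=\mu^b_i$, so each query yields two samples of $N(\frac12\pm\gamma,\sigma^2)$. Let $N_i$ be the number of queries to item $i$. The posterior log-likelihood ratio of $X_i$ is Gaussian-like with mean of order $N_i\gamma^2/\sigma^2$ and standard deviation of order $\gamma\sqrt{N_i}/\sigma$.

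\textbf{Step 2 (most items are barely informed).} Call item $i>m/2$ \emph{uninformative} if its posterior satisfies $\PP[X_i=1\mid\Dc]\in[0.1,0.9]$. I will show that with probability at least $0.99$ over $\Dc$, at least $m/8$ second-half items are uninformative.
\begin{itemize}
\item Unqueried items are trivially uninformative. If $q$ is at most the first term, $10^{-4}\sigma m/(\sqrt2\gamma)$, and this is $\le m/4$, then at least $m/4$ items are unqueried.
\item Otherwise, average over items. The expected total KL divergence, or equivalently the expected squared log-likelihood ratio summed over items, is at most $O(q\gamma^2/\sigma^2)$. This is at most $10^{-8}m$ times a constant under the second term. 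The bound can be formalized through Lemma \ref{lem:kl_chain_rule} together with Pinsker's inequality or a direct Gaussian LLR computation.
\item Markov's inequality then shows that only a small fraction of items have a large LLR.
\end{itemize}
The $\max$ in the hypothesis is handled by noting that either term alone suffices, through these two respective arguments.

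\textbf{Step 3 (anti-concentration).} Conditioned on $\Dc$ in the good event, let $U$ be the set of uninformative items, with $|U|\ge m/8$. Then $W = c(\Dc) + \gamma\sum_{i\in U}S_i(2X_i-1)$ plus terms independent of $\{X_i\}_{i\in U}$. Conditioning additionally on those other terms, apply Berry--Esseen (Lemma \ref{lem:berry_esseen}). Each summand has variance at least $1-0.8^2=0.36$ and bounded third moment, so the normalized sum is within $O(1/\sqrt m)$ of Gaussian, uniformly over intervals. Any interval of length $2K\gamma\sqrt m$ then has probability at most $\frac{2K}{\sqrt{2\pi\cdot 0.36/8}}+O(m^{-1/2})$. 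Choosing $K = \Theta(\rho)$ and taking $m$ large makes this at most $\rho$, which gives the claim.

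\textbf{Main obstacle.} The main obstacle is Step 2 under adaptive queries. The algorithm could concentrate its queries on items that happen to look informative, so a bound on $N_i$ alone is insufficient. I would first try to bound $\EE\big[\sum_i \min(1, |\mathrm{LLR}_i|)\big]$ via the per-query expected KL increment $2\gamma^2/\sigma^2$, which is valid under adaptivity by the chain rule. A second route is a direct bound $\EE|\mathrm{LLR}_i| \le \sqrt{\EE[N_i]}\,O(\gamma/\sigma)+\EE[N_i]\,O(\gamma^2/\sigma^2)$ combined with Cauchy--Schwarz across items. The first term of the $\max$ corresponds to the linear-in-$\gamma$ regime of this bound, and the second to the Cauchy--Schwarz regime.
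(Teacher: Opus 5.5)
\textbf{Comparison with the paper.} Your proposal is correct in outline and shares the paper's overall structure. Both arguments use posterior independence given $\Dc$ (Lemma \ref{lem:posterior_indep}), then show that $\Omega(m)$ second-half items keep non-trivial posterior variance with probability $0.99$ over $\Dc$, then apply Berry--Esseen anti-concentration to $2\gamma\sum_{i>m/2}S_iX_i$ with $K=\Theta(\rho)$.

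The routes differ in Step 2. The paper never looks at likelihood ratios. It reruns the Assouad/Pinsker/chain-rule bound of Lemma \ref{lem:first_half} with $\epsilon$ replaced by $\gamma$, which gives $\EE[d_H]\ge \frac m4-\zeta_\gamma(q,m)$ for every estimator. It then uses the flipped-output trick and Markov over $\Dc$ to get $\EE[d_H\mid\Dc]\ge 0.24m$. Finally, it argues by contradiction with a MAP-type estimator that few items can have small posterior variance.

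You instead control the posteriors directly. Unqueried items have posterior exactly $1/2$ pathwise, which handles the first term of the max. It also sidesteps the paper's somewhat delicate ``WLOG only items $1,\dots,q$ are queried'' step, which the paper needs to get $\min\{q,m/2\}$ under adaptivity. Queried items you handle through $\mathrm{LLR}_i$. The paper's route is economical because it recycles Lemma \ref{lem:first_half} verbatim. Yours is more elementary and self-contained, and it identifies explicitly which items are uninformative.

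\textbf{Closing Step 2.} Write $\mathrm{LLR}_i=(2X_i-1)\mu N_i+M_i$ with $\mu=4\gamma^2/\sigma^2$. Here $M_i$ is a martingale with $\EE[M_i^2]=8(\gamma^2/\sigma^2)\EE[N_i]$, by optional stopping, which remains valid under adaptive queries. Your second route then gives $\sum_i\EE|\mathrm{LLR}_i|\le 4q\gamma^2/\sigma^2+2(\gamma/\sigma)\sqrt{qm}=O(10^{-4}m)$ under the second term. Markov's inequality finishes the argument from there. Alternatively, $\EE\min\{1,\mathrm{LLR}_i^2\}\le 24(\gamma^2/\sigma^2)\EE[N_i]$, which is the KL-scale bound you were after.

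Two of your other claims do not hold as stated:
\begin{itemize}
\item The untruncated $\sum_i\EE[\mathrm{LLR}_i^2]$ contains a term $\mu^2\EE[N_i^2]$. This can be of order $q^2\gamma^4/\sigma^4$ if a few items are queried heavily, so it is not $O(q\gamma^2/\sigma^2)$.
\item $\EE\min\{1,|\mathrm{LLR}_i|\}$ is of order $(\gamma/\sigma)\sqrt{\EE N_i}$, not $(\gamma^2/\sigma^2)\EE N_i$. So your first route, via per-query KL increments, does not give the bound you state.
\end{itemize}

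When the first term of the max exceeds $m/4$, the unqueried-items argument no longer applies and you must fall back on the LLR bound. That case is fine: it forces $\gamma/\sigma<4\cdot10^{-4}/\sqrt2$, so $q\gamma^2/\sigma^2=O(10^{-8}m)$.
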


The proof is given in \cref{sec:alloc_proof_second_half} and uses broadly similar ideas to those in the proof of \cref{lem:first_half}, but with more focus on the second half of the items.  In particular, we show that conditioned on  $\mathcal{D}$, the quantity $V_{\gamma}$ can be expressed as an independent sum with sufficient variance to apply a central limit theorem argument, capturing that the fluctuations in $V_{\gamma}$ are too significant for $\frac{1}{2}(|\widehat{A}_{\eps}|-|\widehat{B}_{\eps}|)+V_{\gamma}$ to be contained in a window of length $2K\gamma\sqrt{m}$ with probability exceeding $\rho$.

By combining the thresholds for both conditions (from \cref{lem:first_half} and \cref{lem:second_half}), the following theorem establishes a threshold on $q$ below which positive envy is unavoidable; specializing this result to the regime $\sigma^2 = \Theta(1)$ recovers Theorem~\ref{thm:lower_main}.

\begin{restatable}{thm}{alloclower}
    \label{thm:alloc_lower}
    For any $\Delta \in(1,m/2)$, there exist constants $c_1 > 0$ and $c_2 > 0$ such that, for any (possibly adaptive and/or randomized) algorithm, if the number of queries satisfies\footnote{Strictly speaking the cases $\omega(m^{3/4})$ and $O(m^{3/4})$ are not exhaustive, but the proof remains unchanged when we change these cases to being above or below $C m^{3/4}$ for sufficiently large $C$.}
    \begin{align}
    q\le\begin{cases}
        \frac{c_1 \sigma m^{2.5}}{\Delta^2}&\text{when }\Delta=\omega(m^{3/4}),\\
        \frac{c_2 \sigma^2 m^{2.5}}{ \Delta^2 }&\text{when }\Delta=O(m^{3/4}),
    \end{cases} \label{eq:q_alloc_lower}
    \end{align}
    then the output allocation $\widehat{\Ac}$ satisfies the following for sufficiently large $m$:
    \begin{align*}
        \PP[\mathrm{Envy}(\widehat{\Ac})>0\wedge \mathrm{OptEnvy}\le -\Delta] \ge 1/3.
    \end{align*}
\end{restatable}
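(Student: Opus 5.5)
We reduce to deterministic algorithms via \eqref{eq:det_alg}. The plan is to choose $\epsilon$ as in Lemma~\ref{lem:opt_envy} and choose $\gamma$ depending on the regime of $\Delta$. We then intersect three events: the event of Lemma~\ref{lem:opt_envy}, which gives $\mathrm{OptEnvy}\le-\Delta$ with probability $\ge 1-\delta$ for a small constant $\delta$; the event that \eqref{eq:pos_cond1} holds; and the event that \eqref{eq:pos_cond2} holds. By Lemma~\ref{lem:pos_envy}, the last two together force $\mathrm{Envy}(\widehat{\Ac})>0$. Since $\Delta\in(1,m/2)$, we have $\epsilon=\Theta(\Delta/m)$, and the side conditions of Lemma~\ref{lem:opt_envy} hold for large $m$ after possibly shrinking constants; the range $\Delta$ close to $m/2$ is handled by adjusting constants.

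\textbf{Choice of $\gamma$.} In the regime $\Delta=\omega(m^{3/4})$, take $\gamma=\frac12$. The second-half constraint of Lemma~\ref{lem:second_half} then allows $q\lesssim \sigma m$. The first-half constraint of Lemma~\ref{lem:first_half} (first term) allows $q\lesssim \sigma cK\sqrt m/\epsilon^2 \asymp \sigma cK m^{2.5}/\Delta^2$. Condition \eqref{eq:10-5} requires $cK\sqrt m/\epsilon \lesssim m$, i.e.\ $cK m^{1.5}/\Delta\lesssim m$, which is fine because $\Delta\gg\sqrt m$. Since $m^{2.5}/\Delta^2\le m$ in this regime, the binding constraint is the $\sigma m^{2.5}/\Delta^2$ one, which gives $c_1$.

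In the regime $\Delta=O(m^{3/4})$, take $\gamma=\Theta(\epsilon m^{1/4})$. The first-half second term allows $\sigma^2c^2K^2\gamma^2/\epsilon^4\asymp\sigma^2 m^{1/2}/\epsilon^2\asymp\sigma^2 m^{2.5}/\Delta^2$. The second-half second term allows $\sigma^2 m/\gamma^2\asymp \sigma^2 m^{1/2}/\epsilon^2$, which is the same order. Condition \eqref{eq:10-5} becomes $cK m^{3/4}\lesssim m$, which holds. We also need $\gamma\le \frac12$, which holds since $\epsilon m^{1/4}=O(\Delta m^{-3/4})=O(1)$; the constants are tuned to keep this below $\frac12$.

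\textbf{Probability bookkeeping.} With $\rho$ small, fix $K=K(\rho)$ from Lemma~\ref{lem:second_half}. Conditioned on $\Dc$, with $\Dc$ in the $0.99$-probability good set, \eqref{eq:pos_cond2} holds with conditional probability $\ge 1-\rho$. For \eqref{eq:pos_cond1}, Lemma~\ref{lem:first_half} gives conditional mean of $d_H$ at least $\frac m4-100cK\gamma\sqrt m/\epsilon$ and conditional variance at least $Cm$. We must show $d_H>\frac m4-\frac{K\gamma\sqrt m}{2\epsilon}$ with conditional probability bounded below by a constant, say at least $1/2$ minus small terms.

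This is the main obstacle: a variance lower bound alone gives no anti-concentration in the right direction. We first try to exploit the structure of $d_H$ given $\Dc$. Because the $X_i$ are conditionally independent given $\Dc$ (the prior is a product and observations factor per item), $d_H=\sum_i \mathbf 1\{X_i\ne\widehat X_i\}$ is an independent sum of bounded terms whose total variance is at least $Cm$. Berry--Esseen (Lemma~\ref{lem:berry_esseen}) then gives that $d_H$ is approximately $N(\mu_\Dc,V)$ with $V\ge Cm$, up to error $O(1/\sqrt m)$. Choosing $c$ small enough that $100cK\gamma\sqrt m/\epsilon\le \frac{K\gamma\sqrt m}{4\epsilon}$, the threshold sits at least $\frac{K\gamma\sqrt m}{4\epsilon}$ below the mean. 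Hence $\PP[d_H>\text{threshold}\mid\Dc]\ge \frac12-O(m^{-1/2})$.

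Two further points need care. First, \eqref{eq:pos_cond1} and \eqref{eq:pos_cond2} are not conditionally independent, since $\widehat A_\epsilon$ depends on $\widehat X$ and $\Dc$. However, given $\Dc$, $\widehat X$ is fixed and randomness comes only from $X$; the first event depends on $X_{[1:m/2]}$ and the second on $X_{[m/2+1:m]}$ (through $V_\gamma$), so they are conditionally independent. Second, the good sets of $\Dc$ from the two lemmas hold simultaneously with probability at least $0.98$. Combining, the failure event has probability at least $0.98\cdot(\frac12-o(1))(1-\rho)-\delta\ge 1/3$ for small $\rho,\delta$ and large $m$. Note that $K$ must be chosen before $c$, and $c$ enters the constants $c_1,c_2$.
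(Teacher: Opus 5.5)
Your proposal is correct and follows the paper's proof essentially step for step: the same $\epsilon$ from Lemma~\ref{lem:opt_envy}, the same choices $\gamma=\frac12$ and $\gamma=\Theta(\epsilon m^{1/4})$, and the same Berry--Esseen step on $d_H$ given $\Dc$ combined with Lemmas~\ref{lem:pos_envy}, \ref{lem:first_half} and \ref{lem:second_half}. The differences are cosmetic: you multiply the two conditional probabilities using posterior independence of the two halves, where the paper simply subtracts $\rho$ (a union bound), and you take $c\le\frac{1}{400}$ so the threshold lies strictly below the conditional mean, where the paper takes $c=\frac{1}{200}$ so the threshold coincides with the mean lower bound. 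One caveat, which the paper shares: $\Delta$ near $m/2$ cannot be ``handled by adjusting constants,'' because in this instance $-\mathrm{OptEnvy}\le\frac12\sum_i|u_i^a-u_i^b|=\frac{m\epsilon}{2}\le\frac m4$, so both your argument and the paper's (whose Lemma~\ref{lem:opt_envy} requires $\frac{4(\Delta+1)}{m}\le\frac12$) actually cover only $\Delta\le\kappa m$ for some constant $\kappa<\frac14$.
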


The proof is given in \cref{sec:alloc_proof_lower}, and is based on carefully choosing $\rho, K, c,$ and $\gamma$ to satisfy both \cref{lem:first_half} and \cref{lem:second_half}, and then using those lemmas to lower bound the probability of having positive envy.   In both cases defining $q$, the lower bound has a scaling of $\widetilde{\Omega}(m^{5/2}/\Delta^2)$ as $m \to \infty$ for any constant noise level $\sigma > 0$. 

\subsection{Proof of Lemma \ref{lem:opt_envy} (Relation Between $\epsilon$ and $\Delta$)}

\label{sec:alloc_proof_opt_envy}
    Define $A^\ast_{\eps}=\{i:i\le\frac{m}{2}\wedge X_i=1\}$ and $B^\ast_{\eps}=\{i:i\le\frac{m}{2} \wedge X_i=0\}$, where the subscript highlights that these concern the items with valuations $\frac{1}{2} \pm \eps$.  Since $|A^\ast_{\eps}|\sim\mathrm{Binomial}(\frac{m}{2},\frac{1}{2})$, Hoeffding's inequality implies that for any $\delta\in(0,1)$,
    \begin{align}
        \PP\left[\left||A^\ast_{\eps}|-\frac{m}{4}\right|\le\sqrt{\frac{\log (2/\delta)}{4}\cdot m}\right]\ge 1-\delta, \label{eq:Hoeff_step}
    \end{align}
    and since $|B^\ast_{\eps}| = \frac{m}{2} - |A^\ast_{\eps}|$, we find that this concentration condition on $|A^\ast_{\eps}|$ implies the same for $|B^\ast_{\eps}|$.
    
    We do not seek to study the exact optimal allocation, but instead construct one that is good enough to establish the lemma:
    \begin{itemize}
        \item When there are at least as many ones as zeros in $X_{[1:\frac{m}{2}]}$, let $A_{\eps}$ be any set of $m/4$ indices with $X_i=1$ and let $B_{\eps}$ be the set of remaining indices from $X_{[1:\frac{m}{2}]}$.
        \item When $X_{[1:\frac{m}{2}]}$ has more zeros than ones, let $B_{\eps}$ be any set of $m/4$ indices with $X_i=0$ and let $A_{\eps}$ be the set of remaining indices from $X_{[1:\frac{m}{2}]}$.
        \item For $X_{[\frac{m}{2}+1:m]}$, \textit{evenly} divide the ones into $(A^1_{\gamma},B^1_{\gamma})$ and \textit{evenly} divide the zeros into $(A^0_{\gamma},B^0_{\gamma})$. If the number of ones is odd, we assign the remainder item to $A^1_{\gamma}$, and if the number of zeros is odd, we assign the remainder item to $B^0_{\gamma}$.
    \end{itemize}
    Recall the valuations in Table \ref{tab:alloc_hard}.  Using those, we observe that the amount by which Agent $a$ envies Agent $b$ due to the allocations in the first dot point is
    \begin{equation}
        -\underbrace{\frac{m}{4}\left(\frac{1}{2}+\epsilon\right)}_{\text{$a$ gets $m/4$ ones}} + \underbrace{|B^\ast_{\eps}| \left(\frac{1}{2}-\epsilon\right)}_{\text{$b$ gets $|B^\ast_{\eps}|$ zeros}} + \underbrace{\Big(\frac{m}{4}-|B^\ast_{\eps}|\Big)\Big(\frac{1}{2}+\epsilon\Big)}_{\text{$b$ gets $\frac{m}{4} - |B^\ast_{\eps}|$ ones}} = - 2\eps|B^\ast_\eps|.
    \end{equation}
    Likewise, the amount by which Agent~$b$ envies Agent~$a$ due to the first dot point is
    \begin{equation}
    -\underbrace{|B^\ast_{\eps}| \left(\frac{1}{2}+\epsilon\right)}_{\text{$b$ gets $|B^\ast_{\eps}|$ zeros}} - \underbrace{\Big(\frac{m}{4}-|B^\ast_{\eps}|\Big)\Big(\frac{1}{2}-\epsilon\Big)}_{\text{$b$ gets $\frac{m}{4} - |B^\ast_{\eps}|$ ones}} + \underbrace{\frac{m}{4}\left(\frac{1}{2}-\epsilon\right)}_{\text{$a$ gets $m/4$ ones}} = - 2\eps|B^\ast_\eps|.   
    \end{equation}
    Note that the third dot point contributes at most $\frac{1}{2}+\gamma$ envy in either direction.
    Therefore, when the first dot point holds, we have $-\mathrm{Envy}(\Ac^\ast) \ge 2\eps|B^\ast_\eps| - (\frac{1}{2}+\gamma)$.
    By the same reasoning, when the second dot point holds, we have $-\mathrm{Envy}(\Ac^\ast) \ge 2\eps|A^\ast_\eps| - (\frac{1}{2}+\gamma)$.
    Combining these findings gives
    \begin{align}
        -\mathrm{Envy}(\Ac^\ast)
        &\ge \min\{2\epsilon|B^\ast_{\eps}|,2\epsilon|A^\ast_{\eps}|\} -\Big(\frac{1}{2}+\gamma\Big) \\
        &\ge 2\epsilon \min\{|A^\ast_{\eps}|,|B^\ast_{\eps}|\}
        -1 \\
        &\ge 2\epsilon \Bigg( \frac{m}{4} - \sqrt{\frac{\log(2/\delta)}{4}\cdot m} \Bigg) 
        -1 
        \label{eq:opt_1} \\
        &=\Delta,\label{eq:opt_2}
    \end{align}
    where \eqref{eq:opt_1} holds when $\big||A^\ast_{\eps}|-\frac{m}{4}\big|\le\sqrt{\frac{\log (2/\delta)}{4}\cdot m}$ and similarly for $B^\ast_{\eps}$ (this is true with probability at least $1-\delta$ by \eqref{eq:Hoeff_step}), and \eqref{eq:opt_2} follows by setting $\epsilon=\frac{2(\Delta+1)}{m-2\sqrt{\log(2/\delta)\cdot m}}$.  Note that the assumptions $2\sqrt{\log(2/\delta)\cdot m} \le \frac{m}{2}$ and $\frac{4(\Delta+1)}{m} \le \frac{1}{2}$ ensure that $\epsilon \in (0,1/2]$ and is thus a valid choice.
    
    Hence, with probability at least $1-\delta$, we have $\mathrm{OptEnvy}\le\mathrm{Envy}(\Ac^\ast)\le-\Delta$.

\subsection{Proof of Lemma \ref{lem:pos_envy} (Sufficient Conditions for Positive Envy)} \label{sec:alloc_proof_pos_envy}

Recall that $X_{[1:m]}$ are the binary variables defining the instance (see Table \ref{tab:alloc_hard}), $\widehat{X}_{[1:m]}$ are the corresponding ``estimates'' based on the allocation, and $\widehat{A}_{\eps}$, $\widehat{B}_{\eps}$, and $V_{\gamma}$ are defined in \eqref{eq:Ahat_def}--\eqref{eq:V_def}. 
%
    We define the following useful quantities:
    \begin{align}
        N_{aa}&=|\{i\le m/2: X_i=1\wedge\widehat{X}_i=1\}|, \quad N_{ab}=|\{i\le m/2: X_i=1\wedge\widehat{X}_i=0\}| \\
        N_{ba}&=|\{i\le m/2: X_i=0\wedge\widehat{X}_i=1\}|, \quad N_{bb}=|\{i\le m/2: X_i=0\wedge\widehat{X}_i=0\}|.
    \end{align}
    In other words, $N_{\nu \nu'}$ is the number of items in $\{1,\dots,m/2\}$ preferred by agent $\nu$ and assigned to agent $\nu'$.  
    
    We then have
    \begin{align}
        -\mathrm{Envy}_{a\to b}(\widehat{\Ac})&=\Big(\frac{1}{2}+\epsilon\Big)(N_{aa}-N_{ab})+\Big(\frac{1}{2}-\epsilon\Big)(N_{ba}-N_{bb})+ V_{\gamma} \\
        &=\frac{1}{2}(|\widehat{A}_{\eps}|-|\widehat{B}_{\eps}|)+\epsilon(N_{aa}+N_{bb}-N_{ab}-N_{ba})+ V_{\gamma} \label{eq:pos_envy_1}\\
        &=\frac{1}{2}(|\widehat{A}_{\eps}|-|\widehat{B}_{\eps}|)+\epsilon\Big(\frac{m}{2}-2d_H(X_{[1:m/2]},\widehat{X}_{[1:m/2]})\Big)+ V_{\gamma}, \label{eq:pos_envy_2}\\
        -\mathrm{Envy}_{b\to a}(\widehat{\Ac})&=\Big(\frac{1}{2}+\epsilon\Big)(N_{bb}-N_{ba})+\Big(\frac{1}{2}-\epsilon\Big)(N_{ab}-N_{aa})-V_{\gamma} \\
        &=\frac{1}{2}(|\widehat{B}_{\eps}|-|\widehat{A}_{\eps}|)+\epsilon(N_{aa}+N_{bb}-N_{ab}-N_{ba})-V_{\gamma} \\
        &=\frac{1}{2}(|\widehat{B}_{\eps}|-|\widehat{A}_{\eps}|)+\epsilon\Big(\frac{m}{2}-2d_H(X_{[1:m/2]},\widehat{X}_{[1:m/2]})\Big)- V_{\gamma}, \label{eq:pos_envy_2a}
    \end{align}
    where:
    \begin{itemize}
        \item \eqref{eq:pos_envy_1} follows since $N_{aa}+N_{ba}=|\widehat{A}_{\eps}|$ and $N_{ab}+N_{bb}=|\widehat{B}_{\eps}|$;
        \item \eqref{eq:pos_envy_2} follows since $N_{aa}+N_{ab}+N_{ba}+N_{bb}=\frac{m}{2}$ and $N_{ab}+N_{ba}=d_H(X_{[1:m/2]},\widehat{X}_{[1:m/2]})$;
        \item The steps for $-\mathrm{Envy}_{b\to a}(\widehat{\Ac})$ giving \eqref{eq:pos_envy_2a} are entirely analogous.
    \end{itemize}
    When $\big|\frac{1}{2}(|\widehat{A}_{\eps}|-|\widehat{B}_{\eps}|)+V_{\gamma} \big| \ge K\gamma\sqrt{m}$ for some constant $K$ and $d_H(X_{[1:m/2]},\widehat{X}_{[1:m/2]}) > \frac{m}{4} - \frac{K\gamma\sqrt{m}}{2\epsilon}$, the overall envy is
    \begin{align}
        -\mathrm{Envy}(\widehat{\Ac}) 
        &= \min\{ -\mathrm{Envy}_{a\to b}(\widehat{\Ac}), -\mathrm{Envy}_{b\to a}(\widehat{\Ac})\} \\
        &= -\Big|\frac{1}{2}(|\widehat{A}_{\eps}|-|\widehat{B}_{\eps}|)+V_{\gamma} \Big| + \epsilon \Big(\frac{m}{2}-2d_H(X_{[1:m/2]},\widehat{X}_{[1:m/2]})\Big) \\
        &< -K\gamma\sqrt{m}+K\gamma\sqrt{m} \\
        &= 0,
    \end{align}
    where the second line follows by combining \eqref{eq:pos_envy_2} and \eqref{eq:pos_envy_2a}.

\subsection{Proof of Lemma \ref{lem:first_half} (Lower Bound on Hamming Distance)}
\label{sec:alloc_proof_first_half}
    To reduce notation, we define $Z=X_{[1:m/2]}$ and $\widehat{Z}=\widehat{X}_{[1:m/2]}$. Recall that $q$ is the total number of queries, and let $Q_i$ denote the (random) number of queries to item $i$.  We proceed as follows:
    \begin{align}
        \EE[d_H(Z,\Zhat)]
        &\ge \frac{1}{2}\sum_{i=1}^{m/2} \Big(1-\|P^+_i-P^-_i\|_\mathrm{TV}\Big) \label{eq:dz_1}\\
        &\ge \frac{1}{2}\sum_{i=1}^{m/2} \Bigg(1-\sqrt{\frac{1}{2}D_\mathrm{KL}(P^+_i\|P^-_i)}\Bigg) \label{eq:dz_2}\\
        &\ge \frac{m}{4} - \sum_{i\le m/2}\sqrt{\frac{2\EE_{P_i^+}[Q_i]\epsilon^2}{\sigma^2}} \label{eq:dz_3} \\
        &= \frac{m}{4} - \frac{\epsilon\sqrt{2}}{\sigma}\sum_{i\,:\,i\le m/2\,\wedge\, \EE_{P_i^+}[Q_i]>0}\sqrt{\EE_{P_i^+}[Q_i]} \\
        &\ge \frac{m}{4} - \frac{\epsilon\sqrt{2}}{\sigma} \sqrt{q\cdot|\{i:i\le m/2\wedge \EE_{P_i^+}[Q_i]>0\}|} \label{eq:dz_4} \\
        &\ge \frac{m}{4}-\frac{\epsilon\sqrt{2}}{\sigma}\sqrt{q\min\Big\{q,\frac{m}{2}\Big\}} \label{eq:dz_5}
    \end{align}
    where:
    \begin{itemize}
        \item \eqref{eq:dz_1} applies Assouad's method (see Lemma \ref{lem:assouad} in Appendix \ref{sec:concentration}) where $P^+_i$ and $P^-_i$ are the distributions on the query outcome sequence\footnote{This sequence also determines the sequence of queries made itself, because we have assumed a deterministic algorithm.} $(Y_1,\dotsc,Y_q)$ given $X_i=1$ and $X_i=0$ respectively; recall that each element of $Z$ is drawn from $\mathrm{Bernoulli}(\frac{1}{2})$.
        \item \eqref{eq:dz_2} follows from Pinsker's inequality.
        \item We show \eqref{eq:dz_3} in several sub-steps:
        \begin{itemize}
            \item The KL divergence between two Gaussians with means $\oldmu,\oldmu'$ and variance $\sigma^2$ is well known to be $\frac{(\oldmu-\oldmu')^2}{2\sigma^2}$, which implies 
            \begin{equation}
                D_\mathrm{KL}\left(N\Big(\frac{1}{2}+\eps,\sigma^2\Big) \,\Big\|\, N\Big(\frac{1}{2}-\eps,\sigma^2\Big)\right) = \frac{2\eps^2}{\sigma^2},
            \end{equation}
            and similarly when we swap $\frac{1}{2}+\eps$ and $\frac{1}{2}-\eps$ with one another.
            \item Hence, the KL divergence between the outcome distributions of a given item $i \in [1:m/2]$ with $X_i = 1$ vs.~$X_i = 0$ is $\frac{4\eps^2}{\sigma^2}$, where the factor of $2$ is doubled to $4$ because each query consists of two observed values, one per agent (the two are independent, and KL divergence is additive for independent product distributions).
            \item Finally, we obtain $D_\mathrm{KL}(P^+_i\|P^-_i)\le \EE_{P^+_i}[Q_i] \cdot \frac{4 \epsilon^2}{\sigma^2}$ via a standard argument based on the chain rule for KL divergence; for completeness, we provide the details as follows.  Recall that the utilities for items in $[1:m/2]$ are specified by the binary variables $X_1,\dotsc,X_{m/2}$ as per Table \ref{tab:alloc_hard}, and let $X_{(-i)}$ be the set of such variables excluding~$X_i$.  By definition, we have for any outcome sequence $\yv = (y_1,\dotsc,y_q)$ that $P^+_i(\yv) = \e_{X_{(-i)}}\big[ P( \yv \,|\, X_i = 1, X_{(-i)} ) \big]$ and $P^-_i(\yv) = \e_{X_{(-i)}}\big[ P(\yv \,|\, X_i = 0, X_{(-i)} ) \big]$, where $P(\yv \,|\, \dots)$ denotes the conditional probability of observing $\yv$ given the specified $X$ values.  In other words, $P^+_i$ and $P^-_i$ are mixture distributions over $X_{(-i)}$. 
            As a result, the convexity of KL divergence (Lemma \ref{lem:kl_convex}) gives 
            \begin{equation}
                D_{\mathrm{KL}}(P^+_i\|P^-_i) \le \EE_{X_{(-i)}}\big[ D_{\mathrm{KL}}\big(P(\cdot \,|\, X_i = 1, X_{(-i)})  \,\|\, P(\cdot \,|\, X_i = 0, X_{(-i)}) \big) \big].
            \end{equation}
            The two distributions in the arguments to $D_{\rm KL}$ only differ in the distribution of the $i$-th item, meaning that we can apply the form of the chain rule in \eqref{eq:chain_lemma_simp} in Lemma \ref{lem:kl_chain_rule} to obtain
            \begin{equation}
                D_{\mathrm{KL}}(P^+_i\|P^-_i) \le \EE_{X_{(-i)}}\bigg[ \EE[Q_i \,|\, X_i = 1, X_{(-i)}]  \cdot \frac{4\epsilon^2}{\sigma^2} \bigg], \label{eq:post_chain_rule}
            \end{equation}
            where the inner expectation corresponds to the expectation in \eqref{eq:chain_lemma_simp} with $P(\cdot|X_i = 1, X_{(-i)})$ in place of $P$ and $Q_i$~in place of $N_i$, and the term $\frac{4\epsilon^2}{\sigma^2}$ comes from the KL divergence calculation in the previous dot point. 
            By the law of total expectation, \eqref{eq:post_chain_rule} simplifies to $\EE_{P^+_i}[Q_i] \cdot \frac{4 \epsilon^2}{\sigma^2}$, as claimed.       
        \end{itemize}
        \item \eqref{eq:dz_4} follows by Jensen's inequality applied to the uniform distribution on $\left|\{i:i\le m/2\wedge \EE_{P_i^+}[Q_i]>0\}\right|$ and the fact that $\sum_{i \le m/2}\EE_{P_i^+}[Q_i] \le q$, which in turn holds because  $\sum_{i} Q_i \le q$.
        \item \eqref{eq:dz_5} follows from $\left|\{i:i\le m/2\wedge \EE_{P_i^+}[Q_i]>0\}\right| \le \min\{q,m/2\}$, for which the upper bound of $m/2$ is trivial, so it remains to show an upper bound of $q$ whenever $q < m/2$.  To see this, we claim that due to the i.i.d.~prior on $(X_1,\dotsc,X_{m/2})$, if an algorithm makes $q < \frac{m}{2}$ queries then it can be assumed without loss of generality that it never queries items in $[q+1:m/2]$.  This is because ruling out such items still leaves the remaining items $[1:q]$ that can all be queried, and having $q$ such items is the highest number possible because the query budget is only $q$.  (A different subset of $q$ items in $[1:m/2]$ could be queried, but the above choice is without loss of generality since the i.i.d.~prior on $X_{[1:m/2]}$ is invariant to re-ordering.)  Recall also that we have reduced to the case of deterministic querying strategies as explained following \eqref{eq:det_alg}; this rules out the possibility of having more positive values of $\EE_{P_i^+}[Q_i]$ due to randomization.
    \end{itemize}
    
    When $q\le \max\left\{ \frac{\sigma cK\gamma\sqrt{m}}{\sqrt{2} \cdot \epsilon^2}, \frac{\sigma^2c^2K^2\gamma^2}{\epsilon^4}\right\}$ as assumed in \eqref{eq:Q_init_bound}, we obtain from \eqref{eq:dz_5} that
    \begin{align}
        \EE[d_H(Z,\widehat{Z})]
        &\ge \frac{m}{4}-\frac{\epsilon\sqrt{2}}{\sigma}\sqrt{q\min\big\{q,m/2\big\}} \\
        &= \frac{m}{4}-\min\left\{\frac{q \epsilon \sqrt{2}}{\sigma}, \frac{\epsilon\sqrt{qm}}{\sigma} \right\} \label{eq:avg_dH_2} \\
        &\ge\frac{m}{4} - \frac{cK\gamma\sqrt{m}}{\epsilon}.
        \label{eq:first_error}
    \end{align}
    Now, given any algorithm that produced $\widehat{Z}$, let $\widehat{Z}'$ denote the following alternative output:
    \begin{align}
        \widehat{Z}'=\begin{cases}
            \widehat{Z}&\text{when }\EE[d_H(Z,\widehat{Z})\,|\,\Dc]\le\frac{m}{4}, \\
            \mathbf{1}-\widehat{Z}&\text{otherwise}.
        \end{cases} \label{eq:Z_alt}
    \end{align}
    By this definition, we always have $\EE[d_H(Z,\widehat{Z}')\,|\,\Dc]\le\frac{m}{4}$.  Moreover, since \eqref{eq:first_error} applies to an \emph{arbitrary} algorithm output, we can apply it to $\widehat{Z}'$ to obtain $\EE[d_H(Z,\widehat{Z}')]\ge \frac{m}{4}-\frac{cK\gamma\sqrt{m}}{\epsilon}$. Since
    \begin{align*}
        \EE_\Dc\left[\frac{m}{4}-\EE[d_H(Z,\widehat{Z}')\,|\,\Dc]\right] =\frac{m}{4}-\EE[d_H(Z,\widehat{Z}')]\le\frac{cK\gamma\sqrt{m}}{\epsilon},
    \end{align*}
    by Markov's inequality, it holds with probability at most $0.01$ (with respect to $\Dc$) that
    \begin{align*}
        \frac{m}{4}-\EE[d_H(Z,\widehat{Z}')\,|\,\Dc]\ge \frac{100 cK\gamma\sqrt{m}}{\epsilon}.
    \end{align*}
    Hence, with probability at least $0.99$ (with respect to $\Dc$),
    \begin{align}
        \EE[d_H(Z,\widehat{Z})\,|\,\Dc]\ge\EE[d_H(Z,\widehat{Z}')\,|\,\Dc]\ge \frac{m}{4}-\frac{100 cK\gamma\sqrt{m}}{\epsilon}, \label{eq:dh}
    \end{align}
    where the first step follows directly from the definition of $\widehat{Z}'$.  This completes the proof of \eqref{eq:dH_lower}.

    Towards establishing \eqref{eq:var_LB}, we first use the assumption $\frac{cK\gamma\sqrt{m}}{\epsilon}\le 10^{-5}m$ in \eqref{eq:10-5} to further lower bound \eqref{eq:dh} by
    \begin{align}
        \EE[d_H(Z,\widehat{Z})\,|\,\Dc]\ge 0.249 m. \label{eq:dh_further}
    \end{align}
    In the following analysis, we condition on a specific $\Dc$ satisfying \eqref{eq:dh_further}. Define $C_0=0.001$, and let $\widetilde{Z}$ be an estimate of $Z$ such that $\widetilde{Z}_i$ is the maximum posterior probability estimate (i.e., the value in $\{0, 1\}$ with the higher probability conditioned on $\Dc$, breaking ties arbitrarily) when $\mathrm{Var}[Z_i\,|\,\Dc]\le C_0\big(1-C_0\big)$, and otherwise $\widetilde{Z}_i$ is independently drawn from $\mathrm{Bernoulli}(1/2)$. Let $S$ denote the set of indices~$i$ with $\mathrm{Var}[Z_i\,|\,\Dc]\le C_0\big(1-C_0\big)$. Defining $\nu_i=\PP[Z_i\neq \widetilde{Z}_i\,|\,\Dc]$, we first show that for each $i\in S$, it must hold that $\nu_i\le C_0$.  To see this, we write
    \begin{align}
        \mathrm{Var}[Z_i\,|\,\Dc] = \PP[Z_i = 0 \,|\, \Dc] \cdot \PP[Z_i = 1 \,|\, \Dc] 
        = \nu_i(1-\nu_i), \label{eq:var_alpha}
    \end{align}
    with the first step using that $\mathrm{Bernoulli}(p)$ has variance $p(1-p)$, and the second step using that given $\Dc$, the quantity $\widetilde{Z}_i$ is a deterministic value in $\{0,1\}$.  
    Since $\mathrm{Var}[Z_i\,|\,\Dc] \le C_0\big(1-C_0\big)$, it follows that either $\nu_i\le C_0$ or $\nu_i\ge 1-C_0$, and the maximum posterior probability strategy guarantees $\nu_i\le \frac{1}{2}$. Hence, we must have $\nu_i\le C_0$ for each $i\in S$. 
    
    Next, we show that $|S| < 0.01 m$. Supposing for contradiction that $|S| \ge 0.01 m$, we have
    \begin{align}
        \EE[d_H(Z,\widetilde{Z})\,|\,\Dc]
        &=\sum_{i=1}^{m/2}\PP[Z_i\neq\widetilde{Z}_i\,|\,\Dc] \\
        &=\sum_{i\in S}\nu_i+\sum_{i\not\in S} \nu_i \\ 
        &\le |S| \max_{i\in S}\nu_i + \Big(\frac{m}{2}-|S|\Big)\cdot\frac{1}{2} \\
        &\le C_0 m + 0.245 m \\
        &= 0.246 m, 
    \end{align}
    with the first inequality again using $\nu_i \le \frac{1}{2}$.  This result contradicts the preceding result on $\EE[d_H(Z,\cdot)\,|\,\Dc]$ in \eqref{eq:dh_further}. Hence, there must only exist at most $0.01m$ indices with $\mathrm{Var}[Z_i\,|\,\Dc]\le C_0(1-C_0)$, and at least $0.99m$ indices with $\mathrm{Var}[Z_i\,|\,\Dc]\ge C_0(1-C_0)$.
    
    We are interested in $\mathrm{Var}[d_H(Z,\widehat{Z})\,|\,\Dc]$, and since $d_H(Z,\widehat{Z}) = \sum_{i=1}^{m/2} \boldsymbol{1}\{ Z_i \ne \widehat{Z}_i \}$, one may be concerned with whether the $Z_i$'s are still independent given $\Dc$.  (Note that $\widehat{Z}_i$ is deterministic given $\Dc$.)  Fortunately, this is indeed the case, according to the following known result (which can be proved in a few lines using Bayes' rule).
    
    \begin{lem} \label{lem:posterior_indep}
        {\em \cite[Lemma 5]{cai2023average}}
        Under an i.i.d.~prior on $Z_1,\dotsc,Z_{m/2}$ and independent noise between queries, conditioned on any collection $\Dc$ of the $q$ query-outcome pairs, the variables $Z_1,\dotsc,Z_{m/2}$ are independent.
    \end{lem}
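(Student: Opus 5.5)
We prove Lemma~\ref{lem:posterior_indep} directly from Bayes' rule by showing that the posterior density of $(Z_1,\dotsc,Z_{m/2})$ given $\Dc$ factorizes into a product of per-coordinate functions. Write $\Dc=((i_1,y_1),\dotsc,(i_q,y_q))$, where $i_t$ is the item queried at time $t$ and $y_t$ is its (two-agent) outcome. The likelihood of $\Dc$ given the full latent vector $\Xv$ then factorizes by the chain rule as
\begin{equation*}
P(\Dc\,|\,\Xv)=\prod_{t=1}^q P(i_t\,|\,i_{1:t-1},y_{1:t-1})\, p(y_t\,|\,i_t,\Xv).
\end{equation*}
The query-selection factors $P(i_t\,|\,\cdot)$ do not depend on $\Xv$: the algorithm chooses $i_t$ from past data only, possibly with internal randomness, though we have already reduced to deterministic algorithms. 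Each noise factor $p(y_t\,|\,i_t,\Xv)$ depends on $\Xv$ only through $X_{i_t}$, by the independent additive Gaussian noise model and Table~\ref{tab:alloc_hard}.

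Grouping the noise factors by the queried item, we obtain
\begin{equation*}
P(\Dc\,|\,\Xv)=\kappa(\Dc)\prod_{j=1}^m L_j(X_j;\Dc),
\end{equation*}
where $L_j(X_j;\Dc)=\prod_{t:\,i_t=j}p(y_t\,|\,j,X_j)$ (equal to $1$ if $j$ is never queried) and $\kappa(\Dc)$ collects the $\Xv$-independent selection terms. Multiplying by the i.i.d.\ prior $\prod_j \pi(X_j)$ and normalizing, Bayes' rule gives
\begin{equation*}
P(\Xv\,|\,\Dc)\propto\prod_j \pi(X_j)L_j(X_j;\Dc).
\end{equation*}
A joint pmf that is proportional to a product of single-coordinate functions equals the product of its normalized marginals, so $X_1,\dotsc,X_m$ are conditionally independent given $\Dc$.

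Marginalizing out the second-half variables $X_{m/2+1},\dotsc,X_m$ preserves this product form, so $Z_1,\dotsc,Z_{m/2}$ are conditionally independent given $\Dc$, as claimed. The only step needing care is the adaptivity: we must verify that the selection probabilities carry no dependence on $\Xv$ beyond what is already in the observed data. This holds because the algorithm only ever sees $\Dc$, so the adaptivity contributes only the constant $\kappa(\Dc)$, which cancels upon normalization.
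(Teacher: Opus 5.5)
Your proof is correct and is the Bayes'-rule argument the paper alludes to. The paper gives no proof of its own: it cites Lemma~5 of Cai et al.\ and remarks that the result follows in a few lines from Bayes' rule. Your key step is the same one that argument needs: the likelihood splits into per-item Gaussian terms times a query-selection factor $\kappa(\Dc)$ that does not depend on $\Xv$ and cancels on normalization. With the independent prior over all of $X_1,\dotsc,X_m$ (first and second halves alike, as in Table~\ref{tab:alloc_hard}), the posterior is then a product over coordinates.
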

    
    Using this lemma and defining $C=0.99 C_0(1-C_0)$, we deduce that
    \begin{align}
        \mathrm{Var}[d_H(Z,\widehat{Z})\,|\,\Dc]
        &=\sum_{i=1}^{m/2} \PP[Z_i=1\,|\,\Dc] \cdot \PP[Z_i=0\,|\,\Dc] \label{eq:var_decomp} \\
        &= \sum_{i=1}^{m/2}\mathrm{Var}[Z_i\,|\,\Dc] \\
        &\ge 0.99 C_0(1-C_0) m  \\
        &=Cm, \label{eq:Cm}
    \end{align}
    which completes the proof.

\subsection{Proof of Lemma \ref{lem:second_half} (Lower Bound on $\Theta(\sqrt{m})$ Deviation Probability)}
\label{sec:alloc_proof_second_half}
    The proof of \cref{lem:first_half} is concerned with the first half of the items (see Table \ref{tab:alloc_hard}), but the arguments up to \eqref{eq:dz_5} apply verbatim\footnote{In this part of the analysis, the distinction between an item being ``slightly favored by both agents’’ vs. ``slightly favored by one and slightly disfavored by the other’’ is inconsequential.} to the second half of the items upon replacing $\epsilon$ by $\gamma$.  Thus, adopting the shorthand $\zeta_\gamma(q, m)  = \frac{\gamma \sqrt{2}}{\sigma}\sqrt{q\min\{q,\frac{m}{2}\}}$ for brevity, we have the following analog of \eqref{eq:avg_dH_2}:
    \begin{align}
        \EE[d_H(X_{[m/2+1:m]},\widehat{X}_{[m/2+1:m]})]\ge\frac{m}{4}-\zeta_\gamma(q,m). \label{eq:analog_dH}
    \end{align}
    By considering an ``alternative output'' in the same way as \eqref{eq:Z_alt} (with Hamming distance to $X_{[m/2+1:m]}$ at most $m/4$ almost surely) and applying Markov's inequality, we obtain from \eqref{eq:analog_dH} that the following holds with probability at least $0.99$ (with respect to $\Dc$), in analogy with \eqref{eq:dh}:
    \begin{align}
        \EE[d_H(X_{[m/2+1:m]},\widehat{X}_{[m/2+1:m]})\,|\,\Dc]\ge \frac{m}{4}-100\zeta_\gamma(q,m).
    \end{align}
    Under the condition $\zeta_\gamma(q, m) \le 10^{-4}m$ (to be verified shortly), this can further be lower bounded by $0.24m$, thus providing an analog of \eqref{eq:dh_further} (with a slightly modified constant).  As a result, we can follow identical reasoning to \eqref{eq:var_alpha}--\eqref{eq:Cm} to conclude that there exists some constant $C'$ such that
    \begin{align}
        \mathrm{Var}[d_H(X_{[m/2+1:m]},\widehat{X}_{[m/2+1:m]})\,|\,\Dc]\ge C'm. \label{eq:second_var}
    \end{align}
    The above-mentioned condition $\zeta_\gamma(q, m) \le 10^{-4}m$ can be rewritten as
    \begin{align}
        \zeta_\gamma(q, m)=\min\left\{\frac{\gamma q\sqrt{2}}{\sigma}, \frac{\gamma \sqrt{qm}}{\sigma}\right\}\le 10^{-4}m,
    \end{align}
    which is equivalent to
    \begin{align}
        q\le\max\left\{ \frac{10^{-4}\sigma m }{\sqrt{2} \cdot \gamma}, \frac{10^{-8}\sigma^2 m}{\gamma^2}\right\}
    \end{align}
    as we have already assumed.
    
    In the lemma statement, we are interested in the quantity $W=\frac{1}{2}(|\widehat{A}_{\eps}|-|\widehat{B}_{\eps}|)+V_{\gamma}$.  Since $\widehat{A}_{\eps}$ and $\widehat{B}_{\eps}$ are deterministic given $\Dc$ (due to the algorithm being deterministic), we are interested in understanding the fluctuations introduced by $V_{\gamma}$, which we recall from \eqref{eq:V_def} equals the sum over $i > m/2$ of (true) utilities $\mu_i^a$ of Agent $a$ weighted by $S_i = 1$ (if allocated to $a$) or $S_i = -1$ (if allocated to $b$).  To write $V_{\gamma}$ in a more explicit form, it is useful to note the following for $i > m/2$:
    \begin{equation}
        \mu_i^a = \frac{1}{2} + \gamma(-1 + 2X_i),
    \end{equation}
    which follows directly from Table \ref{tab:alloc_hard} by considering $X_i=0$ and $X_i=1$ separately.  
    Multiplying by $S_i$ and summing over $i > m/2$, we deduce that $V_{\gamma}$ takes the form
    \begin{equation}
        V_{\gamma} = 2\gamma \sum_{i \,:\, i > m/2} S_i X_i + c_{\mathcal{D}}, \label{eq:V_gamma_explicit}
    \end{equation}
    where $c_{\mathcal{D}}$ is deterministic given $\mathcal{D}$ (due to the same being true of the allocation variables $S_i$).
    
    Combining these findings with Lemma \ref{lem:posterior_indep}, we see that $V_{\gamma}$ given $\mathcal{D}$ is a sum of independent random variables.  Moreover, the randomness comes entirely from $\sum_{i \,:\, i > m/2} S_i X_i$, and we observe that when $\mathcal{D}$ satisfies \eqref{eq:second_var}, we have
    \begin{align}
        \mathrm{Var}\left[ \sum_{i \,:\, i > m/2} S_i X_i \,\bigg|\, \mathcal{D} \right] 
        &= \sum_{i \,:\, i > m/2} \mathrm{Var}[X_i \,|\, \mathcal{D}] \label{eq:var_calc_1} \\
        &= \sum_{i \,:\, i > m/2} \PP[X_i = 0 \,|\, \mathcal{D}]\cdot \PP[X_i = 1 \,|\, \mathcal{D}] \label{eq:var_calc_2} \\
        & = \mathrm{Var}[d_H(X_{[m/2+1:m]},\widehat{X}_{[m/2+1:m]})\,|\,\Dc] \label{eq:var_calc_3} \\
        &\ge C' m, \label{eq:var_C'}
    \end{align}
    where \eqref{eq:var_calc_1} uses the above-mentioned independence and $S_i \in \{-1,1\}$, \eqref{eq:var_calc_2} uses the fact that ${\rm Bernoulli}(p)$ has variance $p(1-p)$, \eqref{eq:var_calc_3} is the direct counterpart to \eqref{eq:var_decomp}, and \eqref{eq:var_C'} follows from \eqref{eq:second_var}.
    
    We now put the above findings together to characterize $W$.  We use \eqref{eq:V_gamma_explicit} to write $W = 2\gamma \sum_{i \,:\, i > m/2} S_i X_i + c'_{\mathcal{D}}$ for some $c'_{\mathcal{D}}$ (deterministic given $\mathcal{D}$), and observe that
    \begin{align}
        \PP\big[|W| < t\,\big|\,\Dc\big]
        &= \PP\left[\Bigg|\sum_{i \,:\, i > m/2} S_i X_i  + \frac{c'_{\mathcal{D}}}{2\gamma}\Bigg| < \frac{t}{2\gamma} \,\middle|\,\Dc\right] \label{eq:before_BE} \\
        &\le \frac{t}{\gamma} \cdot\frac{1}{\sqrt{2\pi C' m}} + O\Big( \frac{1}{\sqrt{m}}\Big), \label{eq:after_BE}
    \end{align}
    where we applied the Berry--Esseen theorem (Lemma~\ref{lem:berry_esseen} in Appendix \ref{sec:concentration}) to $\sum_{i \,:\, i > m/2} S_i X_i$ (with the substitutions $V_T \leftarrow \mathrm{Var}[\sum_{i > m/2} S_i X_i \,|\,\Dc] \ge C' m$ due to \eqref{eq:var_C'}, and $\Psi_T \leftarrow O(m)$ due to $S_iX_i \in [-1,1]$), and used the fact that the resulting Gaussian density is uniformly upper bounded by $\frac{1}{\sqrt{2\pi C' m}}$.\footnote{We note that $\frac{t}{\gamma} \cdot\frac{1}{\sqrt{2\pi C' m}}$ serves as a \emph{uniform} upper bound for the probability of the Gaussian random variable falling in \emph{any} interval of length $\frac{t}{\gamma}$, so the precise value of $\EE[\sum_{i > m/2} S_i X_i \,|\,\Dc]$ is not needed, and the shift by $\frac{c'_{\mathcal{D}}}{2\gamma}$ in \eqref{eq:before_BE} is similarly inconsequential.}   Setting $t = \rho \gamma \sqrt{\pi C' m}$, we find that \eqref{eq:after_BE} is at most $\frac{\rho}{\sqrt{2}} + O\big( \frac{1}{\sqrt{m}}\big)$, and is therefore at most $\rho$ when $m$ is sufficiently large, thus implying 
    the lemma with $K = \rho \sqrt{\pi C'}$.

%

\subsection{Completing the Proof of Theorem \ref{thm:alloc_lower}}
\label{sec:alloc_proof_lower}

    
     We first apply the same central limit theorem argument to the items in $[1:m/2]$ as we did in the preceding steps for the items in $[m/2+1:m]$.  Here we are directly interested in $d_H(X_{[1:m/2]},\widehat{X}_{[1:m/2]})$, so there is no need for analogs of $S_i$ and $c_{\mathcal{D}}$ used above.  The analog of \eqref{eq:var_C'} is stated directly in the last part of \cref{lem:first_half} (holding with probability at least $0.99$), with the constant now denoted as $C$ rather than $C'$. 
     
     From these observations, we have with probability at least $0.99$ (with respect to $\Dc$) that the posterior of $d_H(X_{[1:m/2]},\widehat{X}_{[1:m/2]})$ is asymptotically Gaussian with variance at least $Cm$ as $m\to\infty$.  To make this statement more precise, let $T$ follow a Gaussian distribution with the same mean and variance as $d_H(X_{[1:m/2]},\widehat{X}_{[1:m/2]})$ (and thus variance at least $Cm$).  Then, conditioned on any $\Dc$ satisfying the last part of \cref{lem:first_half}, the Berry–Esseen theorem (Lemma~\ref{lem:berry_esseen} in Appendix \ref{sec:concentration}, with $V_T \leftarrow \Omega(m)$ and $\Psi_T \leftarrow O(m)$ similarly to \eqref{eq:after_BE}) gives
    \begin{align}
        \PP\left[d_H(X_{[1:m/2]},\widehat{X}_{[1:m/2]})>\frac{m}{4}-\frac{ K\gamma\sqrt{m}}{2\epsilon}\,\middle|\,\Dc\right] \ge \PP\left[T>\frac{m}{4}-\frac{ K\gamma\sqrt{m}}{2\epsilon}\right]-O\left(\frac{1}{\sqrt{m}}\right). \label{eq:T}
    \end{align}
    
    In the following, we set $\rho=0.001$ and $c=\frac{1}{200}$, let $K$ be chosen according to the statement of \cref{lem:second_half}, and let $\mathsf{D}$ denote the collection of $\Dc$'s that simultaneously satisfy \cref{lem:first_half} and \cref{lem:second_half}.  The precise choice of $\gamma$ will be specified later, but will be ensured to satisfy $\frac{cK\gamma\sqrt{m}}{\epsilon}\le 10^{-5} m$ as required in \cref{lem:first_half}. 
    Then, we observe that when
    \begin{align}
        q\le\max\left\{\frac{\sigma cK\gamma\sqrt{m}}{\sqrt{2} \cdot \epsilon^2},
        \frac{\sigma^2 c^2 K^2 \gamma^2}{\epsilon^4}
        \right\} \label{eq:q1}
    \end{align}
    and
    \begin{align}
        q\le\max\left\{ \frac{10^{-4}\sigma m }{\sqrt{2} \cdot \gamma}, \frac{10^{-8}\sigma^2 m}{\gamma^2}\right\}, \label{eq:q2}
    \end{align}
    as assumed in \cref{lem:first_half} and \cref{lem:second_half} respectively, we  have
    \begin{align}
        &\PP[\mathrm{Envy}(\widehat{\Ac})>0]\\
        \ge{}&\sum_{\Dc\in\mathsf{D}} \PP[\Dc] \cdot   \PP\left[\Big|\frac{1}{2}(|\widehat{A}_{\eps}|-|\widehat{B}_{\eps}|)+V \Big| \ge K \gamma\sqrt{m} \,\wedge\, d_H(X_{[1:m/2]},\widehat{X}_{[1:m/2]})>\frac{m}{4}-\frac{K\gamma\sqrt{m}}{2\epsilon}\,\middle|\,\Dc\right] \label{eq:pe_1}\\
        \ge{}&\sum_{\Dc\in\mathsf{D}} \PP[\Dc] \Bigg(\PP\left[d_H(X_{[1:m/2]},\widehat{X}_{[1:m/2]})>\frac{m}{4}-\frac{ K\gamma\sqrt{m}}{2\epsilon}\,\middle|\,\Dc\right]-\rho\Bigg) \label{eq:pe_2}\\
        \ge{}&\sum_{\Dc\in\mathsf{D}} \PP[\Dc] \Bigg(\PP\left[T>\frac{m}{4}-\frac{ K\gamma\sqrt{m}}{2\epsilon}\right]-o(1)-\rho\Bigg) \label{eq:pe_3}\\
        \ge{}&\sum_{\Dc\in\mathsf{D}} \PP[\Dc]\left(\PP\Big[T>\EE[d_H(X_{[1:m/2]},\widehat{X}_{[1:m/2]})\,|\,\Dc]\Big]-o(1)-\rho\right) \label{eq:pe_4}\\
        ={}&0.98\cdot\left(\frac{1}{2}-o(1)-0.001\right) \label{eq:pe_5} \\
        ={}&0.48902-o(1),
    \end{align}
    where:
    \begin{itemize}
        \item \eqref{eq:pe_1} follows from \cref{lem:pos_envy};
        \item \eqref{eq:pe_2} follows from \cref{lem:second_half};
        \item \eqref{eq:pe_3} follows from \eqref{eq:T};
        \item \eqref{eq:pe_4} follows since \cref{lem:first_half} (specifically \eqref{eq:dH_lower}, for which we have already specified $c=\frac{1}{200}$) implies for all $\Dc\in\mathsf{D}$ that $\EE[d_H(X_{[1:m/2]},\widehat{X}_{[1:m/2]})\,|\,\Dc]\ge\frac{m}{4}-\frac{ K\gamma\sqrt{m}}{2\epsilon}$;
        \item \eqref{eq:pe_5} follows since a Gaussian exceeds its mean with probability $\frac{1}{2}$, and since we have specified $\rho = 0.001$.
    \end{itemize}
    Hence, choosing $\epsilon$ as in \cref{lem:opt_envy}, we have for any $\delta \in (0,0.489)$ that
    \begin{align*}
        \PP[\mathrm{Envy}(\widehat{\Ac})>0\wedge\mathrm{OptEnvy}\le-\Delta]
        \ge{}& \PP[\mathrm{Envy}(\widehat{\Ac})>0]-\delta \\
        \ge{}& 0.489-\delta-o(1).
    \end{align*}
    Choosing $\delta < 0.15$ ensures that this exceeds $1/3$ for sufficiently large $m$.
    
    Now, the only remaining step is to choose $\gamma$ satisfying \eqref{eq:q1}--\eqref{eq:q2} (as well as $\frac{cK\gamma\sqrt{m}}{\epsilon} \le 10^{-5} m$).  Due to the ``max'' operations in these equations, it suffices for $q$ to be upper bounded by \emph{either} of the two terms in each one.  
    Recalling from \cref{lem:opt_envy} that $\epsilon = \Theta(\Delta / m)$ (whenever $\Delta \ge 1$ and thus $1+\Delta = \Theta(\Delta)$), we have the following:
    \begin{itemize}
        \item When $\Delta=\omega(m^{3/4})$, we set $\gamma=\frac{1}{2}$ and take the first terms in \eqref{eq:q1} and \eqref{eq:q2}.  The term from \eqref{eq:q1} scales as $\Theta\big( \frac{\sigma m^{2.5}}{\Delta^2} \big)$ and the term from \eqref{eq:q2} scales as $\Theta(\sigma m)$, and thus the former dominates due to  $\Delta=\omega(m^{3/4})$.
        \item When $\Delta=O(m^{3/4})$, we set $\gamma=c'm^{1/4}\epsilon$ for some $c' > 0$ sufficiently small to ensure $\gamma \in (0,1/2)$, and take the second terms in \eqref{eq:q1} and \eqref{eq:q2}.  Both of these terms scale as $\Theta\big( \frac{\sigma^2 m^{2.5}}{\Delta^2} \big)$ under this choice.
    \end{itemize}
    Observe that these scalings of $q$ match those stated in \eqref{eq:q_alloc_lower}.  Moreover, in both cases, we have $\frac{cK\gamma\sqrt{m}}{\epsilon}=o(m)$, thus being below $10^{-5} m$ (for sufficiently large $m$) as assumed earlier.  This completes the proof.
    
    

\end{document}
